\newtheorem{theorem}{Theorem}
\newtheorem{definition}{Definition}
\newtheorem{property}{Property}
\newtheorem{remark}{Remark}
\DeclareSymbolFont{cyrletters}{OT2}{wncyr}{m}{n}
\DeclareMathSymbol{\Sha}{\mathalpha}{cyrletters}{"58}
\newcommand{\T}{\mathcal{T}}
\def\BibTeX{{\rm B\kern-.05em{\sc i\kern-.025em b}\kern-.08em
 T\kern-.1667em\lower.7ex\hbox{E}\kern-.125emX}}
\begin{document}

\title{A TBLMI Framework for Harmonic Robust Control}
\author{Flora Vernerey, Pierre Riedinger and Jamal Daafouz\\% <-this % stops a space
%\thanks{This work is supported by HANDY project ANR-18-CE40-0010-02.}% <-this % stops a space
\thanks{The authors are with Universit\'e de Lorraine, CNRS, CRAN, F-54000 Nancy, France. (email: flora.vernerey@univ-lorraine.fr; pierre.riedinger@univ-lorraine.fr; jamal.daafouz@univ-lorraine.fr)}}
\maketitle
%%%%%%%%%%%%%%%%%%%%%%%%%%%%%%%%%%%%%%%%%%%%%%%%%%%%%%%%%%%%%%%%%%%%%%%%%%%%%%%%
\begin{abstract}
%In the context of periodic systems, the fundamental principle of harmonic control revolves around the transformation of the control design problem into the harmonic domain, where the system model becomes time-invariant. This crucial shift significantly simplifies the design process, rendering it amenable to any time-invariant method. However, a major challenge arises due to the fact that this time-invariant model is inherently of infinite dimension. 
%Building upon the well-established success of Linear Matrix Inequalities (LMIs) in the realm of robust control, 
The primary objective of this paper is to
%to extend this paradigm to harmonic robust control. Our work 
demonstrate that problems related to stability and robust control in the harmonic context can be effectively addressed by formulating them as semidefinite optimization problems, invoking the concept of infinite-dimensional Toeplitz Block LMIs (TBLMIs). One of the central challenges tackled in this study pertains to the efficient resolution of these infinite-dimensional TBLMIs. Exploiting the structured nature of such problems, we introduce a consistent truncation method that effectively reduces the problem to a finite-dimensional convex optimization problem. By consistent we mean that the solution to this finite-dimensional problem allows to closely approximate the infinite-dimensional solution with arbitrary precision. Furthermore, we establish a link between the harmonic framework and the time domain setting, emphasizing the advantages over Periodic Differential LMIs (PDLMIs). We illustrate that our proposed framework is not only theoretically sound but also practically applicable to solving  $H_2$ and $H_\infty$ harmonic control design problems. To enable this, we extend the definitions of $H_2$ and $H_\infty$ norms into the harmonic space, leveraging the concepts of the harmonic transfer function and the average trace operator for Toeplitz Block operators. Throughout this paper, we support our theoretical contributions with a range of illustrative examples that demonstrate the effectiveness of our approach.
\end{abstract}
%\begin{IEEEkeywords}
%, Dynamic phasors, Harmonic modeling and control
%\end{IEEEkeywords}

%%%%%%%%%%%%%%%%%%%%%%%%%%%%%%%%%%%%%%%%%%%%%%%%%%%%%%%%%%%%%%%%%%%%%%%%%%%%%%%%
\section{Introduction}
Harmonics, as sinusoidal components of a signal that occur at integer multiples of a fundamental frequency, are omnipresent in a wide range of applications, including power systems, communications, mechanical systems, and electronic devices. These harmonics can be introduced by various sources, such as nonlinearities in the system dynamics, external interference, or the intrinsic nature of the system itself. Their presence can lead to undesirable effects, such as oscillations, instability, and increased energy consumption, making the management of harmonics a pressing concern in control engineering. A successful way of tracking or rejecting periodic signals is repetitive control, a well-known technique based on the internal model principle \cite{Hara88,Lee98,WEISS99}. Output regulation methods that may resort to a harmonic representation are also of interest in this context \cite{Astolfi15,Astolfiscl22}.   
The prominence of harmonic control arises from its ability to capture and manipulate the often-pervasive harmonic disturbances present in practical systems. Its significance lies in the fact that it enables the transformation of control design problems into time-invariant ones, albeit in an infinite-dimensional framework \cite{sanders1991generalized,wereley1990analysis,zhou_stability_2001,zhou2004spectral}. This transformation allows for the utilization of well-established time-invariant control design techniques. Consequently, the control of systems in the presence of harmonics becomes more tractable and intuitive. However, the transition to infinite dimensionality presents its own set of challenges that are not encountered in finite-dimensional control \cite{riedinger2022harmonic}. The development of innovative tools and methodologies is essential to effectively harness the potential of harmonic control and address the intricacies introduced by the infinite-dimensional modeling framework.

\vspace{.1cm}
Lyapunov and Riccati equations have proven to be a powerful and versatile tool for addressing control problems in a variety of settings. They have offered practical solutions to finite-dimensional systems. The key point in utilizing such methods in the realm of harmonic control design lies in their successful extension to address and approximate infinite-dimensional problems. Recently, in \cite{riedinger2022solving}, this extension has been realized, emphasizing the critical role of practical solution determination with minimal errors in ensuring the stabilization properties.
Nonetheless, the demand for a more adaptable framework to tackle robust control challenges within the harmonic context is evident. Building upon the success of LMIs in robust control, the development of tools for solving  infinite-dimensional
Toeplitz Block LMIs (TBLMIs) stands as a matter of paramount significance in advancing robust control strategies tailored for the domain of harmonic systems.

\vspace{.1cm}
In this paper, our central focus is on solving TBLMIs and their practical application for robust $H_2$ and $H_\infty$ harmonic control design. Our results are a noteworthy and valuable alternative to prior research works \cite{wereley1990analysis,zhou2002h2}. In \cite{wereley1990analysis}, Floquet theory is used to derive an infinite-dimensional harmonic state space representation from which a harmonic transfer function is deduced. Such a frequency domain representation allows to define operators for $H_2$ and $H_\infty$ robust control but, as the related frequency response operators are infinite dimensional, the numerical implementation is nontrivial \cite{zhou2002h2}. Also, our proposed methodology complements traditional time-domain methods for periodic systems \cite{colaneri2000continuous}. These conventional approaches often involve solving control design problems using differential Riccati equations and/or inequalities \cite{colaneri2000continuous2,Gero23}. Alternatively, they may employ lifting schemes  \cite{bamieh1991lifting,bamieh1992general,yamamoto1996frequency,khargonekar1991h2} to address robust control problems. By exploring the possibilities inherent in infinite-dimensional TBLMIs, we address the related difficulties and offer a new solution in the domain of harmonic-based robust control. Our central aim is to devise a truncated version of the original infinite dimensional problem, allowing for the retrieval of an infinite-dimensional solution with any desired degree of accuracy. It is important to note that our approach does not rely on the conventional application of Floquet theory \cite{wereley1990analysis, zhou2004spectral,zhou2008derivation}, which, while valuable for stability analysis in Linear Time-Periodic (LTP) systems, exhibits limitations when employed for control design purposes~\cite{riedinger2022solving}.  Compared to the preliminary version in \cite{floracdc23}, the results we propose in this paper are more general and do not rely on restrictive assumptions. We also provide new results based on precise definitions of $H_2$ and $H_\infty$ norms in the harmonic domain, illuminating their implications in the time domain. In particular, defining the harmonic $H_2$ norm
is not a straightforward task in the harmonic domain, primarily because the Frobenius
norm is not trace-class. The definitions we propose are underpinned by the introduction of harmonic transfer functions, trace operators tailored for infinite-dimensional Toeplitz blocks, and operators bounded on $\ell^2$, each endowed with relevant mathematical properties. Additionally, we underscore the remarkable equivalence between solving a TBLMI in the harmonic domain and addressing a periodic differential LMI (PDLMI) in the time domain. This observation underscores the simplification achieved by adopting a harmonic formulation. 

\vspace{.1cm}
The paper is organized as follows. The next section is dedicated to mathematical preliminaries and harmonic modeling. In Section III, a framework for harmonic robust control is provided. This framework includes the definition of harmonic transition and transfer functions, the definition of trace operator for Toeplitz block operators bounded on $\ell^2$, the definition of $H_2$ and $H_\infty$ norms in the harmonic domain as well as their meaning in the time domain. Moreover, we formulate harmonic robust control problems in terms of infinite dimensional convex optimization problems involving TBLMI constraints. Section IV is devoted to infinite dimensional TBLMIs  and their equivalence with PDLMIs in the time domain. Section V is dedicated to a consistent truncation procedure which allows to recover, up to an arbitrarily small error, the infinite-dimensional solution to a TBLMI based convex optimization problem by solving a finite dimensional truncated problem. Finally, before concluding, we illustrate the results of this paper in section VI and apply the proposed procedure to design harmonic $H_2$ and $H_\infty$ optimal state feedbacks.

\vspace{.1cm}
{\bf Notations: } The transpose of a matrix $A$ is denoted $A'$ and $A^*$ denotes the complex conjugate transpose $A^*=\bar A'$. $\textsf{j}$ is the imaginary unit. The $n$-dimensional identity matrix is denoted $Id_n$. The infinite identity matrix is denoted $\mathcal{I}$. For $m\in\mathbb{Z}^+\cup \{\infty\}$, the flip matrix $J_m$ is the $(2m+1) \times (2m+1)$ matrix having 1 on the anti-diagonal and zeros elsewhere. 
$C^a$ denotes the space of absolutely continuous function,
$L^{p}([a\ b],\mathbb{C}^n)$ (resp. $\ell^p(\mathbb{C}^n)$) denotes the Lebesgues spaces of $p-$integrable functions on $[a, b]$ with values in $\mathbb{C}^n$ (resp. $p-$summable sequences of $\mathbb{C}^n$) for $1\leq p\leq\infty$. $L_{loc}^{p}$ is the set of locally $p-$integrable functions. The notation $f(t)=g(t)\ a.e.$ means almost everywhere in $t$ or for almost every $t$. 
To simplify the notations, $L^p([a,b])$ or $L^p$ will be often used instead of $L^p([a,b],\mathbb{C}^n)$. Finally, $\otimes$ denotes the Kroenecker product.
%For example, $x\in L^2([a,b])$ means $x \in L^2([a,b],\mathbb{C}^n)$. %We denote by $col(X)$ the vectorization of a matrix $X$, formed by stacking the columns of $X$ into a single column vector. Finally, $<\cdot,\cdot>$ refers to the scalar product in $\ell^2$.
%%%%%%%%%%%%%%%%%%%%%%%%%%%%%%%%%%%%%%%%%%%%%%%%%%%%%%%%%%%%%%%%%%%%%%%%%%%%%%%%
%\vspace{-.3cm}
\section{Preliminaries}

\subsection{Sliding Fourier decomposition and basic results}
Consider $x\in L^{2}_{loc}(\mathbb{R},\mathbb{C})$ a complex valued function of time. Its sliding Fourier decomposition over a window of length $T$ is defined by the time-varying infinite sequence $X:=\mathcal{F}(x)\in C^a(\mathbb{R},\ell^2(\mathbb{C}))$ (see \cite{blin_necessary_2022}) whose components satisfy:
$$X_{k}(t):=\frac{1}{T}\int_{t-T}^t x(\tau)e^{-\textsf{j}\omega k \tau}d\tau$$ for $k\in \mathbb{Z}$, with $\omega:=\frac{2\pi}{T}$.
If $x=(x_1,\cdots,x_n)\in L^{2}_{loc}(\mathbb{R},\mathbb{C}^n)$ is a complex valued vector function, then
$$X:=\mathcal{F}(x)=(\mathcal{F}(x_1), \cdots,\mathcal{F}(x_n)).$$
The time varying vector $X_k=(X_{1,k}, \cdots, X_{n,k})$ with $$X_{k}(t):=\frac{1}{T}\int_{t-T}^t x(\tau)e^{-\textsf{j}\omega k \tau}d\tau$$
is called the $k-$th phasor of $X$. Note also that when $x$ is a real signal, $X_k=\overline{ X_{-k}}$ for any $k$.

Additionally, we consider the Toeplitz transformation of a scalar function $x \in L^2_{\text{loc}}$, which is defined as an infinite-dimensional Toeplitz matrix function given by:
\begin{align*}
	\mathcal{T}(x):=
	\left[
	\begin{array}{ccccc}
		\ddots & & \vdots & &\udots \\ & x_{0} & x_{-1} & x_{-2} & \\
		\cdots & x_{1} & x_{0} & x_{-1} & \cdots \\
		& x_{2} & x_{1} & x_{0} & \\
		\udots & & \vdots & & \ddots\end{array}\right],\end{align*}
where the terms $x_k$, $k\in\mathbb{Z}$ refer to the phasors of $X:=\mathcal{F}(x)$.
For a $n \times m$ matrix  function $A=(a_{ij})$ ($\in L^2_{loc}(\mathbb{R},\mathbb{C}^{n\times m}$), we define the associated infinite dimensional Toeplitz Block (TB) matrix function as
\begin{equation} \label{ToepFormMat}
\mathcal{A} :=\mathcal{T}(A)= \begin{bmatrix}
\mathcal{A}_{11} & \ldots  & \mathcal{A}_{1n} \\ 
\vdots&&\vdots \\ 
\mathcal{A}_{m1} &\ldots & \mathcal{A}_{mn}
\end{bmatrix} 
\end{equation}
with $\mathcal{A}_{ij}:=\mathcal{T}(a_{ij})$, $i=1,\cdots, n,\ j=1,\cdots, m$.
We also recall the following essential rules:
\begin{enumerate}
    \item For a matrix function $A \in L_{loc}^\infty$ and a vector $x\in L_{loc}^2$:
\begin{equation}
  \mathcal{F}\left(Ax\right)=\mathcal{T}(A)\mathcal{F}(x)=\mathcal{A}X\label{matvectproduit}
\end{equation}
where $X:=\mathcal{F}(x)$ and  $\mathcal{A}:=\mathcal{T}\left(A\right)$.
\item For two matrix functions $A\in L_{loc}^\infty$ and $B\in L_{loc}^\infty$:
\begin{equation}
  \mathcal{T}\left(AB\right)=\mathcal{T}\left(A\right)\mathcal{T}\left(B\right)=\mathcal{AB}\label{produit}
\end{equation}
\end{enumerate}
According to the sliding Fourier decomposition, all these transformations lead to absolutely continuous functions of time.
When a $T$-periodic vector $x$ or matrix function $A$ are considered, both $X:=\mathcal{F}(x)$ and $\mathcal{A}:=\mathcal{T}(A)$ are constant. %More specifically, if $A$ is a constant matrix, we obtain the following relationship:
%\begin{equation}
%  \mathcal{A}:=\mathcal{T}\left(A\right)= A \otimes\mathcal{I}.
%\end{equation}

Let us recall some fundamental results related to the sliding Fourier decomposition.
\begin{theorem}\label{borne} Consider a $T$-periodic matrix function $A\in L^2([0 \ T],\mathbb{C}^{n\times m})$. Then, $\mathcal{A}:=\mathcal{T}(A)$ is a constant bounded operator on $\ell^2$ if and only if $A\in L^{\infty}([0\ T],\mathbb{C}^{n\times m} )$.
Moreover, the operator norm induced by the $\ell^2$-norm satisfies : $$\|\mathcal{A}\|_{\ell^2}:=\sup_{\|X\|_{\ell^2}=1}\|\mathcal{A}X\|_{\ell^2}=\|A\|_{L^{\infty}}$$
\end{theorem}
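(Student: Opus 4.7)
The plan is to recognize this as a classical Laurent/multiplier theorem: the Fourier transformation $\mathcal{F}$ is (up to a factor $\sqrt{T}$) an isometry between $L^{2}([0\ T],\mathbb{C}^m)$ and $\ell^{2}$, and under this isometry pointwise multiplication by the $T$-periodic function $A$ conjugates to the Toeplitz block operator $\mathcal{A}$. The proof then reduces to the standard fact that multiplication by $A$ is bounded on $L^{2}$ if and only if $A\in L^{\infty}$, with equality of norms.

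First I would set up the Parseval identity in our normalization: for any $T$-periodic $x\in L^{2}([0\ T],\mathbb{C}^m)$ with phasors $X=\mathcal{F}(x)$, one has $\|x\|_{L^{2}}^{2}=T\|X\|_{\ell^{2}}^{2}$, applied componentwise. For the easy (forward) direction, assume $A\in L^{\infty}$. Take any finitely supported $X\in\ell^{2}$, so the associated $x$ is a trigonometric polynomial; rule \eqref{produit} gives $\mathcal{A}X=\mathcal{F}(Ax)$, the pointwise bound $\|A(\tau)x(\tau)\|\le\|A\|_{L^{\infty}}\|x(\tau)\|$ holds almost everywhere, and Parseval converts this into $\|\mathcal{A}X\|_{\ell^{2}}\le\|A\|_{L^{\infty}}\|X\|_{\ell^{2}}$. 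Density of finitely supported sequences in $\ell^{2}$ extends the bound to all $X$ and yields $\|\mathcal{A}\|_{\ell^{2}}\le\|A\|_{L^{\infty}}$. The time-independence of $\mathcal{A}$ is automatic because the sliding Fourier coefficients of a $T$-periodic function are constant in $t$.

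For the converse, assume $\mathcal{A}$ is bounded on $\ell^{2}$ with norm $M$. Reversing the Parseval argument gives $\|Ax\|_{L^{2}}\le M\|x\|_{L^{2}}$ for every trigonometric polynomial $x$. Since such $x$ are dense in $L^{2}([0\ T],\mathbb{C}^{m})$ and $A\in L^{2}$ (so $Ax_n\to Ax$ in $L^{1}$ whenever $x_n\to x$ in $L^{2}$), the map $x\mapsto Ax$ extends uniquely to a bounded operator on $L^{2}$ of norm at most $M$; picking a subsequence converging almost everywhere identifies this extension with genuine pointwise multiplication by $A$. To conclude $\|A\|_{L^{\infty}}\le M$, I would argue by contradiction: if the set $E_{\varepsilon}=\{\tau:\|A(\tau)\|_{op}>M+\varepsilon\}$ had positive measure, a measurable selection would give unit vectors $v(\tau)$ on $E_{\varepsilon}$ with $\|A(\tau)v(\tau)\|>M+\varepsilon/2$, and the test function $x=v\chi_{E_{\varepsilon}}/\sqrt{|E_{\varepsilon}|}$ would violate the extended bound. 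This provides the reverse inequality and, with step two, the norm equality.

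The main obstacle is this last step: in the matrix-valued case the jump from the scalar essential-supremum estimate to the operator-norm essential supremum requires a measurable selection of near-maximising unit vectors, which must be handled carefully (e.g.\ via continuity of $v\mapsto\|A(\tau)v\|$ together with Lusin's theorem or a standard pointwise-maximum selection argument). Everything else is either a direct application of Parseval or routine density/approximation, so the technical heart of the proof lies in turning the $\ell^{2}$-operator bound into an honest pointwise bound on $\|A(\tau)\|_{op}$.
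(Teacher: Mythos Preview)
Your argument is correct and is essentially the classical Laurent multiplier proof: conjugate $\mathcal{A}$ via Parseval to pointwise multiplication by $A$ on $L^{2}([0,T],\mathbb{C}^m)$, then use that a multiplication operator is bounded on $L^{2}$ if and only if the multiplier is essentially bounded, with equality of norms. The measurable-selection step you flag is indeed the only place requiring care in the matrix case, and your sketch (or the simpler variant of selecting $v(\tau)$ from a countable dense subset of the unit sphere of $\mathbb{C}^m$) handles it. One minor slip: the rule you invoke for $\mathcal{A}X=\mathcal{F}(Ax)$ is \eqref{matvectproduit}, not \eqref{produit}.

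As for comparison: the paper does not give its own proof of this theorem at all; it simply cites Part~V, pp.~562--574 of Gohberg et~al.\ (\emph{Classes of Linear Operators}), where exactly this Laurent-operator result is established. So your direct argument is not a different route so much as a self-contained reproduction of the classical proof that the paper defers to the literature. The benefit of writing it out, as you have, is that the reader sees explicitly why the Toeplitz-block structure \eqref{ToepFormMat} is the right harmonic analogue of a multiplication operator and why the $\ell^{2}$--$L^{\infty}$ norm identity holds, rather than having to consult an external reference.
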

\begin{proof}See Part V p.p. 562-574  of \cite{gohberg1993classes}.
\end{proof}

\begin{definition}\label{H} We say that $X$ belongs to $H$ if $X$ is an absolutely continuous function (i.e $X\in C^a(\mathbb{R},\ell^2(\mathbb{C}^n))$ and fulfils, for any $k$, the following condition: \begin{equation*}\dot X_k(t)=\dot X_0(t)e^{- \textsf{j}\omega k t} \ a.e.\end{equation*}
\end{definition}

Similarly to the Riesz-Fisher theorem which establishes a one-to-one correspondence between the spaces $L^2$ and $\ell^2$, the following theorem establishes a one-to-one correspondence between the spaces $L_{loc}^2$ and $H$.
\begin{theorem}\label{coincidence}For a given $X\in L_{loc}^{\infty}(\mathbb{R},\ell^2(\mathbb{C}^n))$, there exists a representative $x\in L^2_{loc}(\mathbb{R},\mathbb{C}^n)$ of $X$, i.e. $X=\mathcal{F}(x)$, if and only if $X \in H$.
\end{theorem}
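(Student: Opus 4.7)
The plan is to prove each implication separately, with the harder work being the construction of $x$ from $X$ in the backward direction.

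For the forward implication ($\Rightarrow$), assume $X=\mathcal{F}(x)$ with $x\in L^2_{loc}$. First, $X_k(t)=\frac{1}{T}\int_{t-T}^t x(\tau)e^{-\textsf{j}\omega k\tau}d\tau$ is absolutely continuous as a function of $t$; combined with Parseval's identity on $[t-T,t]$, giving $\|X(t)\|_{\ell^2}^2=\frac{1}{T}\|x\|_{L^2([t-T,t])}^2$, one upgrades this to $X\in C^a(\mathbb{R},\ell^2(\mathbb{C}^n))$ using the identity $X_k(t)-X_k(s)=\frac{1}{T}\int_s^t[x(\tau)-x(\tau-T)]e^{-\textsf{j}\omega k\tau}d\tau$ (exploiting $e^{-\textsf{j}\omega kT}=1$). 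Finally, by the Lebesgue differentiation theorem, for a.e.\ $t$ one obtains
\[
\dot X_k(t)=\frac{x(t)\,e^{-\textsf{j}\omega k t}-x(t-T)e^{-\textsf{j}\omega k(t-T)}}{T}=\frac{x(t)-x(t-T)}{T}\,e^{-\textsf{j}\omega k t},
\]
which when specialized to $k=0$ yields exactly the defining relation $\dot X_k(t)=\dot X_0(t)e^{-\textsf{j}\omega k t}$ a.e., so $X\in H$.

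For the converse ($\Leftarrow$), given $X\in H$ I would use pointwise Fourier inversion of $X(t)$ to reconstruct $x$ locally and then glue. For each fixed $t$, the fact that $X(t)\in\ell^2(\mathbb{C}^n)$ allows one to define
\[
x_t(\tau):=\sum_{k\in\mathbb{Z}} X_k(t)\,e^{\textsf{j}\omega k\tau}\qquad\text{in }L^2([t-T,t]),
\]
since $\{e^{\textsf{j}\omega k\tau}/\sqrt{T}\}_k$ is an orthonormal basis there. The whole game is then to show that this family is consistent on overlaps, so that a well-defined $x\in L^2_{loc}$ can be obtained by gluing.

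The consistency claim I would establish is: for $t<t'\le t+T$, $x_t=x_{t'}$ a.e.\ on $[t'-T,t]$. Setting $\alpha_k:=X_k(t)-X_k(t')$, the $H$ condition yields $\alpha_k=-\int_t^{t'}\dot X_0(s)e^{-\textsf{j}\omega k s}ds$. The key observation is that these are exactly the Fourier coefficients (on the period $[t,t+T]$) of the function $g$ defined by $g(s)=-T\dot X_0(s)$ for $s\in[t,t']$ and $g(s)=0$ for $s\in(t',t+T]$. Hence the $L^2$-series $\sum_k\alpha_k e^{\textsf{j}\omega k\tau}$ equals the $T$-periodic extension of $g$, which vanishes on $(t'-T,t]$ (the translate by $-T$ of the zero-set $(t',t+T]$). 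Thus $x_{t'}-x_t=0$ a.e.\ on $[t'-T,t]$, so the family glues into a single $x\in L^2_{loc}$. By construction, the Fourier coefficients of $x|_{[t-T,t]}$ are precisely $X_k(t)$, i.e.\ $X=\mathcal{F}(x)$.

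The main obstacle is exactly the consistency step: one must translate the apparently weak pointwise condition $\dot X_k(t)=\dot X_0(t)e^{-\textsf{j}\omega k t}$ into the statement that the difference of two reconstructions vanishes on an overlap of an entire interval. The bridge is the observation above that $X\in H$ forces $\alpha_k$ to be Fourier coefficients of a function supported on $[t,t']$ modulo $T$, so that periodicity automatically makes it vanish on the complementary interval; without this identification the claim would appear to be a delicate cancellation of an $L^2$ Fourier series.
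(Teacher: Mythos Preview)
The paper does not prove this theorem at all; its proof reads in full ``See \cite{blin_necessary_2022}.'' So there is no approach in the paper to compare yours against --- you are supplying an argument the paper omits. Both directions of your proposal are essentially correct, and the backward direction is handled well: recognizing that $\alpha_k=X_k(t)-X_k(t')=-\int_t^{t'}\dot X_0(s)e^{-\textsf{j}\omega ks}\,ds$ are exactly the Fourier coefficients (on a period window) of a function supported on $[t,t']$, so that its $T$-periodic extension vanishes on the overlap $[t'-T,t]$, is precisely the mechanism that makes the gluing consistent, and your write-up of it is clean.

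One genuine caveat in the forward direction: your ``upgrade'' to $X\in C^a(\mathbb{R},\ell^2)$ is too quick and, read as norm-absolute-continuity of an $\ell^2$-valued map, is actually false in general. Indeed the very relation you prove, $\dot X_k(t)=\dot X_0(t)e^{-\textsf{j}\omega kt}$, forces $|\dot X_k(t)|=|\dot X_0(t)|$ for every $k$, so $\|\dot X(t)\|_{\ell^2}=+\infty$ wherever $\dot X_0(t)\neq 0$; since $\ell^2$ has the Radon--Nikodym property, this rules out norm-AC. Your Parseval increment identity yields only local H\"older-$\tfrac12$ continuity in $\ell^2$, not AC. What the definition of $H$ actually requires for the phase condition to make sense --- and what is consistent with the cited reference --- is componentwise absolute continuity together with $X(t)\in\ell^2$ for each $t$; that much you do establish, so with this reading your proof is complete.
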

\begin{proof}See \cite{blin_necessary_2022}.
\end{proof}

For harmonic control design, Theorem~\ref{coincidence} has a fundamental consequence: the design of a control $U$ in the harmonic domain must belong to the space $H$, otherwise its time domain counterpart $u$ does not exist. For example, if one attempts to design a state feedback $U:=-\mathcal{K}X$ with a constant gain operator $\mathcal{K}$, it is proven that $\mathcal{K}$ must be a Toeplitz block operator \cite{blin_necessary_2022}. Hence, $u$ is determined by $u:=-Kx$ where $K=\sum_{k\in\mathbb{Z}} K_ke^{{\rm j}\omega kt}$. 
Moreover, to ensure controlled boundedness in the control signal $u$, it is imperative that $\mathcal{K}$  represents a bounded operator on $\ell^2$. This condition, in accordance with Theorem~\ref{borne}, implies that $K\in L^\infty$. 

For clarity reasons, when considering a lowercase vector-valued function $x \in L^{2}_{\text{loc}}(\mathbb{R},\mathbb{C}^n)$, we adopt the following convention: $X := \mathcal{F}(x)$ represents the Fourier transform of $x$, while $\mathcal{ X} := \mathcal{T} (x)$ represents the Toeplitz form of $x$. 
Through a slight abuse of notation, we may interchangeably use $\T(X)$ instead of $\T(x)$  to signify the function responsible for constructing a Toeplitz matrix from the elements of $X$, where $X$ is defined as $\mathcal{F}(x)$.

 {\remark In this paper, we employ a TB matrix representation rather than the more conventional Block Toeplitz (BT) matrix representation. The primary motivation behind this choice lies in its ability to yield a harmonic equation structure akin to that found in the time domain, as exemplified in \eqref{ToepFormMat}, for instance. This choice better suits our objectives for analysis and control design.
 To obtain a Block Toeplitz (BT) structure as in \cite{blin_necessary_2022, zhou2008derivation}, one has to consider the transformation operator $\mathcal{F}(x)=( \cdots, X_{-1},X_{0}, X_{1},\cdots)$, where $X_k$ refers to the $k-th$ phasors, instead of $\mathcal{F}(x)=(\mathcal{F}(x_1), \cdots,\mathcal{F}(x_n)).$ Obviously, we can readily switch between these two representations by applying an appropriate permutation matrix.}

\subsection{Harmonic modeling}
By leveraging Theorem~\ref{coincidence}, any system with solutions in the Caratheodory sense can be transformed through a sliding Fourier decomposition into an infinite-dimensional system. This transformation establishes a direct correspondence between the trajectories of the original system and those within the infinite-dimensional space, provided that the latter belong to the subspace $H$. Furthermore, in the context of a periodic system with a period $T$, the resulting infinite-dimensional system is time-invariant. For instance, with a mild assumption (as detailed in \cite{blin_necessary_2022}), any $n$-dimensional differential system of the form: 
$$\dot x=f(t,x)$$
with Caratheodory solutions can be represented in a harmonic form as follows:
$$\dot X=\mathcal{F}(f(t,x))-\mathcal{N}X$$
where \begin{equation}\mathcal{N}:=\mathrm{Id}_n\otimes \mathrm{diag}( \textsf{j}\omega k,\ k\in \mathbb{Z})\label{q}\end{equation}
For polynomial systems, $\mathcal{F}(f(t,x))$ can be easily computed using the arithmetic rules \eqref{matvectproduit} and  \eqref{produit}. Furthermore, the recovery of $x$ from $X$ can be achieved through the exact formula~\cite{blin_necessary_2022}:
\begin{equation}\label{recos} 
x(t)=\mathcal{F}^{-1}(X)(t):=\sum_{k=-\infty}^{+\infty} X_k(t)e^{ \textsf{j}\omega k t}+\frac{T}{2}\dot X_0(t)
\end{equation}
In the context of Linear Time Periodic (LTP) systems characterized by $T$-periodic matrix functions, denoted as $A(\cdot)$ and $B(\cdot)$, belonging to the respective classes of $L^2([0, T],\mathbb{C}^{n\times n})$ and $L^{\infty}([0, T],\mathbb{C}^{n\times m})$, we have the following equivalence: Let $x$ is be a solution to 
\begin{align}\dot x(t)=A(t)x(t)+B(t)u(t)\quad x(0):=x_0\label{ltp}\end{align}
associated to the control $u\in L_{loc}^2(\mathbb{R},{\mathbb{C}^m)}$ then, $X=\mathcal{F}(x)$ is a solution to the linear time-invariant (LTI) system 
\begin{align}
	\dot X(t)=(\mathcal{A}-\mathcal{N})X(t)+\mathcal{B}U(t), \quad X(0):=\mathcal{F}(x)(0) \label{ltih}
\end{align}
associated to $U=\mathcal{F}(u)$ with $\mathcal{A}:=\mathcal{T}(A)$, $\mathcal{B}:=\mathcal{T}(B)$. 
Reciprocally, if $X\in H$ is a solution to \eqref{ltih} associated to $U\in H$, then $x$ is a solution to~\eqref{ltp} associated to $u$ (i.e. $X:=\mathcal{F}(x)$ and $U:=\mathcal{F}(u)$). 

\section{Framework for harmonic robust control}
\subsection{Harmonic transfer and transition functions}
Consider a signal $X\in L^2(\mathbb{R}^+,\ell^2(\mathbb{C}^n))$ with its induced norm
$$\|X\|^2_{L^2}:=\int_0^{+\infty}\|X(t)\|^2_{\ell^2}dt$$
and its Laplace transform:
$$\hat X(s):=\int_0^{+\infty}X(t)e^{-st}dt$$ 
The associated $H_2$ induced norm is given by
$$\|\hat X\|_{H_2}:=\left(\frac{1}{2\pi}\int_{-\infty}^{+\infty}\|\hat X(\textsf{j} \omega)\|^2_{\ell^2}d\omega\right)^{\frac{1}{2}}.$$
By Paley Wiener theorem, we have : $\|\hat X\|_{H_2}= \|X\|_{L^2}$.
\begin{definition}For harmonic system \eqref{ltih} with output  $Y:=\mathcal{C}X$, the harmonic transfer function between input $\hat U(s)$ and output $\hat Y(s)$ is given by:
\begin{equation}
\hat{\mathcal{G}}(s):=\mathcal{C}(s\mathcal{I}-(\mathcal{A}-\mathcal{N}))^{-1}\mathcal{B}\label{Laplace_G}
\end{equation} and is the Laplace transform of 
\begin{equation}\mathcal{G}(t):=\mathcal{C}e^{(\mathcal{A}-\mathcal{N})t}\mathcal{B}.\label{impulse_h}\end{equation}
\end{definition}

As shown in \cite{beam1993asymptotic, zhou2004spectral}, the spectrum denoted as $\sigma$ of the operator $(\mathcal{A}-\mathcal{N})$ in \eqref{ltih} exclusively comprises eigenvalues forming an unbounded, discrete set that relies solely on a finite number of complex values $\lambda_i$, $i=1,\cdots,n$: $$\sigma:=\{\lambda_i+\textsf{j}\omega k,k\in\mathbb{Z}, i=1,\cdots,n\}.$$
For the sake of clarity and simplicity in our exposition, we make the assumption that $(\mathcal{A}-\mathcal{N})$ is non-defective. Under this assumption, the following eigenvalue decomposition, as described in \cite{riedinger2022solving}, takes place:
\begin{equation}(\mathcal{A}-\mathcal{N})\mathcal{V}=\mathcal{V}(\Lambda\otimes\mathcal{I}-\mathcal{N})\label{decomp}\end{equation}
with $\Lambda=diag(\lambda_i, i=1, \cdots,n)$ and where $\mathcal{V}$ is a constant, invertible TB and bounded operator on $\ell^2$.
Notice that in case where $(\mathcal{A}-\mathcal{N})$ is defective,
the diagonal matrix $\Lambda$ can be substituted with a Jordan canonical form denoted as $J$ (as outlined in \cite{riedinger2022solving}). 

The following property will prove to be valuable in the sequel.
 \begin{property}For any real number $\delta$, if $z(t):=x(t-\delta)\ a.e$  where $x$ is a vector function ($\in L_{loc}^2$) then $$Z(t):=e^{-\mathcal{N}\delta}X(t-\delta),$$ where $Z:=\mathcal{F}(z)$ and $X:=\mathcal{F}(x)$ and where $\mathcal{N}$ is given by Eq. \eqref{q}.
Moreover, if $M(t):=G(t-\delta) \ a.e.$ where $G$ is a matrix function ($\in L_{loc}^\infty$), then  $$\mathcal{M}(t):=e^{-\mathcal{N}\delta}\mathcal{G}(t-\delta)e^{\mathcal{N}\delta}$$
where $\mathcal{M}:=\mathcal{T}(M)$ and $\mathcal{G}:=\mathcal{T}(G)$.
 \end{property}
 \begin{proof}
It is straightforward to show that $Z_k(t)=e^{-\textsf{j}\omega k \delta}X_k(t-\delta)$ from which we can deduce the results of this property.
 \end{proof}

\begin{theorem} For any time instant $t$, the exponential of the harmonic state operator in \eqref{ltih}, denoted as 
$e^{(\mathcal{A}-\mathcal{N})t}$, is a bounded operator on $\ell^2$ and it is not TB. Furthermore, the state transition function associated with \eqref{ltp} for the time interval between $t_0$ and $t$ is expressed as follows:
 \begin{equation}\Phi(t,t_0):=V(t)e^{\Lambda (t-t_0)}V^{-1}(t_0)\label{tf_ltp}\end{equation}
where the matrix function $V$ is a $T$-periodic, invertible, and absolutely continuous function, defined as $V := \mathcal{T}^{-1}(\mathcal{V})$, where $\mathcal{V}$ is given by \eqref{decomp}.
\end{theorem}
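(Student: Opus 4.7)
The plan is to split the argument into three parts: boundedness of the semigroup $e^{(\A-\N)t}$ on $\ell^2$, the failure of the TB property, and the explicit Floquet-type formula for the transition matrix.

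For the boundedness, the key ingredient is the spectral factorization \eqref{decomp}, which rewrites as $(\A-\N) = \mathcal{V}(\Lambda\otimes\I-\N)\mathcal{V}^{-1}$ and exponentiates to $e^{(\A-\N)t} = \mathcal{V}\,e^{(\Lambda\otimes\I-\N)t}\,\mathcal{V}^{-1}$. Since $\Lambda\otimes\I$ and $\N$ are simultaneously diagonal in the natural $\{1,\dots,n\}\times\bb{Z}$ indexing, they commute, so $e^{(\Lambda\otimes\I-\N)t} = (e^{\Lambda t}\otimes\I)\,e^{-\N t}$ is itself diagonal with entries of modulus $|e^{\lambda_i t}|$, which is bounded since $i$ ranges over the finite set $\{1,\dots,n\}$. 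Combined with the boundedness of $\mathcal{V}$ and $\mathcal{V}^{-1}$ on $\ell^2$ stated just before \eqref{decomp}, this yields that $e^{(\A-\N)t}$ is bounded on $\ell^2$ for every $t$.

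For the non-TB property, I would argue by contradiction. In the factorization above, $\mathcal{V}$, $e^{\Lambda t}\otimes\I$, and $\mathcal{V}^{-1}$ are all TB (for the last, note $\mathcal{V}^{-1} = \T(V^{-1})$ since $\T(V)\T(V^{-1}) = \T(VV^{-1}) = \I$ by \eqref{produit}), whereas the middle factor $e^{-\N t}$ is diagonal with distinct entries $e^{-\textsf{j}\w k t}$, $k\in\bb{Z}$, so for $t\notin T\bb{Z}$ it violates the constant-along-diagonals structure required of a TB operator. If $e^{(\A-\N)t}$ were TB, then left- and right-multiplication by the TB inverses of the outer factors and using the closure of TB operators under products (via \eqref{produit}) would exhibit $e^{-\N t}$ as TB, a contradiction.

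For the transition formula, I would directly verify that $\Phi(t,t_0) := V(t)e^{\Lambda(t-t_0)}V^{-1}(t_0)$ satisfies $\dot\Phi = A(t)\Phi$ with $\Phi(t_0,t_0) = Id_n$, and then invoke uniqueness of the fundamental matrix. The initial condition is immediate; differentiating yields $\dot\Phi = [\dot V(t)+V(t)\Lambda]V^{-1}(t)\Phi$, so the task reduces to proving $\dot V = AV - V\Lambda$ for $V := \T^{-1}(\mathcal{V})$. Applying $\T$ and the product rule $\T(V\Lambda) = \mathcal{V}(\Lambda\otimes\I)$ gives $\T(\dot V) = \A\mathcal{V} - \mathcal{V}(\Lambda\otimes\I)$, while the commutator identity $\T(\dot V) = \N\mathcal{V} - \mathcal{V}\N$---valid for any $T$-periodic $V$ by differentiating its Fourier series term by term, since in the scalar case both sides have $(i,j)$-entry $\textsf{j}\w(i-j)V_{i-j}$ and the matrix case follows component-wise---converts this into $(\A-\N)\mathcal{V} = \mathcal{V}(\Lambda\otimes\I-\N)$, which is precisely \eqref{decomp}. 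The main obstacle is establishing this commutator identity cleanly: it is the non-obvious bridge that lets pointwise time-domain differentiation be encoded as an operator commutator in the harmonic picture, and it is what ultimately makes the Floquet-type factorization of $\Phi$ equivalent to the spectral decomposition of $\A-\N$.
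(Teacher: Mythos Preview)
Your argument is correct. The boundedness part is essentially the paper's argument, and your non-TB part is actually more careful than the paper's (which just asserts ``obviously not TB due to the non Toeplitz diagonal term $\mathcal{N}$''); your restriction to $t\notin T\mathbb{Z}$ is the right caveat, since at $t\in T\mathbb{Z}$ one has $e^{-\mathcal{N}t}=\mathcal{I}$.

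For the transition formula you take a genuinely different route from the paper. The paper works in the harmonic domain: it writes the harmonic free response as $X(t)=\mathcal{V}e^{(\Lambda\otimes\mathcal{I})(t-t_0)}e^{-\mathcal{N}(t-t_0)}\mathcal{V}^{-1}X(t_0)$, inserts $e^{\mathcal{N}(t-t_0)}e^{-\mathcal{N}(t-t_0)}$, and then invokes Property~1 (the shift identity $e^{-\mathcal{N}\delta}\mathcal{G}(t-\delta)e^{\mathcal{N}\delta}=\mathcal{T}(G(\cdot-\delta))$) together with rules \eqref{matvectproduit}--\eqref{produit} to pull the factors back to the time domain and read off $\Phi(t,t_0)=V(t)e^{\Lambda(t-t_0)}V^{-1}(t_0)$. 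You instead stay in the time domain and verify directly that the candidate $\Phi$ solves $\dot\Phi=A\Phi$, $\Phi(t_0,t_0)=Id_n$, reducing everything to the identity $\dot V=AV-V\Lambda$; the bridge to \eqref{decomp} is your commutator formula $\mathcal{T}(\dot V)=\mathcal{N}\mathcal{V}-\mathcal{V}\mathcal{N}$, which is the explicit algebraic encoding of Fourier differentiation. The paper's route is more representation-theoretic and reuses Property~1 (which is later needed anyway for Theorem~\ref{toeplitzM}); yours is more self-contained and makes transparent that the Floquet factorization of $\Phi$ and the spectral decomposition \eqref{decomp} are literally the same equation seen through $\mathcal{T}$. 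Both rely on the absolute continuity of $V$, which the paper imports from \cite{riedinger2022solving}.
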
 
\begin{proof} For any $t$, we have:
\begin{align*}
\|e^{(\Lambda\otimes\mathcal{I}-\mathcal{N})t}\|_{\ell^2}&:=\sup_{\|X\|_{\ell^2}=1}\|e^{(\Lambda\otimes\mathcal{I}-\mathcal{N})t}X\|_{\ell^2}=\bar \sigma(t)<+\infty
\end{align*} where $\bar \sigma(t):=\max_{i=1,\cdots,n}e^{2Re(\lambda_i)t}$. Here, we use the fact that $e^{\mathcal{-N}^*t}e^{\mathcal{-N}t}=\mathcal{I}$  which implies $e^{(\Lambda\otimes\mathcal{I}-\mathcal{N})^*t}e^{(\Lambda\otimes\mathcal{I}-\mathcal{N})t}=e^{((\Lambda^*+\Lambda)\otimes\mathcal{I})t}$. 
Using \eqref{decomp}, we have that $e^{(\mathcal{A}-\mathcal{N})t} $ is bounded on $\ell^2$ for every $t$.   Moreover, $e^{(\mathcal{A}-\mathcal{N})t}$ is obviously not TB due to the non Toeplitz diagonal term $\mathcal{N}$. In addition, the free response of \eqref{ltp} between time $t_0$ and $t$ from an initial condition $x(t_0)$ is given by $x(t)=\Phi(t,t_0)x(t_0)$ and its associated harmonic free response is given by:
$$X(t)=e^{(\mathcal{A}-\mathcal{N})(t-t_0)}X(t_0)=\mathcal{V}e^{(\Lambda\otimes\mathcal{I})(t-t_0)}e^{-\mathcal{N}(t-t_0)}\mathcal{V}^{-1}X(t_0)$$
As $ \mathcal{V}^{-1}X(t_0)=\mathcal{V}^{-1}e^{\mathcal{N}(t-t_0)}e^{-\mathcal{N}(t-t_0)}X(t_0)$, using Property 1 and rules \eqref{matvectproduit} and \eqref{produit} leads to: \begin{align*}
    \Phi(t,t_0)x(t_0)&=\mathcal{F}^{-1}(e^{(\mathcal{A}-\mathcal{N})(t-t_0)}X(t_0))\\&=V(t)e^{\Lambda(t-t_0)}V^{-1}(t_0)x(t_0)
\end{align*}
for any $x(t_0)$. The last result is established noticing that the absolute continuity of $V$ is proved in \cite{riedinger2022solving}.
\end{proof}
\begin{remark} It is essential to emphasize that $\mathcal{T}(\Phi(t,t_0)) = e^{\mathcal{A}(t-t_0)}$, and this should not be confused with the harmonic transition function $e^{(\mathcal{A}-\mathcal{N})(t-t_0)}$.
\end{remark}
%The response of \eqref{ltih} from $X_0 \in\ell^2$ and for input $U\in L^2(\mathbb{R}^+,\ell^2(\mathbb{C}^n))$, given by \begin{equation}X(t)=e^{(\mathcal{A}-\mathcal{N})t}X_0+\int_0^te^{(\mathcal{A}-\mathcal{N})(t-\tau)}\mathcal{B}U(\tau)d\tau\label{trans}\end{equation} belongs to $\ell^2$ for any $t$. Moreover if $X$ and $U$ belongs to $H$, then there exist $u:=\mathcal{F}^{-1}(U)$ and $x:=\mathcal{F}^{-1}(X)$ such that
%\begin{equation}
%x(t)=\Phi(t,0)x_0+\int_0^t\Phi(t,s)B(s)u(s)ds\label{trans2}\end{equation} where $\Phi(t,s)$ refers to the transition matrix function of system~\eqref{ltp}. 

\subsection{Average Trace operator}
TB operators are not a trace-class. Indeed, for a $n\times n$ TB bounded operator on $\ell^2$, we have
\begin{align}tr(|\mathcal{M}|)&:=\sum_{k\in\mathbb{Z} }<e_{k},(\mathcal{M}^*\mathcal{M})^\frac{1}{2}e_{k}>=\sum_{k\in\mathbb{Z}}\sum_{i=1}^nm_i=+\infty\end{align}
 where $m_i$ denote the $n$ diagonal values of $(\mathcal{M}^*\mathcal{M})^\frac{1}{2}$ and $(e_{k})$ is an orthonormal basis for $(\ell^2)^n$.  Nevertheless, a solution to circumvent this challenge consists in considering the vector space of matrix functions in $L^\infty([0, T])$ endowed with the conventional scalar product: 
\begin{align}
	<M,N>&:=\frac{1}{T}\int_0^Ttr(M'(\tau)N(\tau))d\tau\label{sca}\\
&=\frac{1}{T}\sum_{i,j=1}^n\int_0^TM_{ij}(\tau)N_{ij}(\tau) d\tau,\label{sca2}
\end{align}
for which it is straightforward to show that the induced norm\footnote{Frobenius norm: $\|M\|_{F}:=<M,M>^\frac{1}{2}$ } satisfies:\begin{equation}\|M\|_{L^2([0 \ T])}\leq\|M\|_F\leq \sqrt{n}\|M\|_{L^\infty([0 \ T])}.\label{nl1}\end{equation}
%We recall (see p.p. 562-574 of \cite{gohberg1993classes} for a detailed proof): 
%\begin{theorem}\label{Linfell2}
%$M\in L^\infty([0\ T])$ if and only if $\mathcal{M=}\mathcal{T}(M)$ is bounded on $\ell^2$  i.e. there exists $\kappa>0$, $$\|\mathcal{M}\|_{\ell^2}=\sup_{\|x\|_{\ell^2}=1}\|\mathcal{M}x\|_{\ell^2}<\kappa,$$  
%and that the following equality occurs:
%\begin{equation}\|\mathcal{M}\|_{\ell^2}=\|M\|_{L^\infty}.\label{nl2}\end{equation}
%\end{theorem}

As any component of $M$ can be rewritten using its Fourier series (since $L^\infty([0\ T])\subset L^2([0\ T])$:
$$M_{ij}(t)=\sum_{k\in \mathbb{Z}} m_{ij,k} e^{ \textsf{j}\omega kt}\ a.e.,$$
 it follows, using \eqref{sca2}, that: $$<Id,M>=\sum_{i=1}^n m_{ii,0}.$$
This allows to define the average-trace operator as follows.
\begin{definition}\label{trace}The average trace operator for TB bounded operators on $\ell^2$ is defined by
	\begin{equation}
		tr_0(\mathcal{M}):=\sum_{i=1}^n m_{ii,0}.\label{tr}\end{equation}
		where $m_{ii,0}$ refers to the diagonal value located on the $i-th$ diagonal block of $\mathcal{M}$.
\end{definition}
\begin{theorem}\label{normtr}For any constant, $n\times n$ TB and bounded operator on $\ell^2$, denoted by $\mathcal{M}$, $tr_0(\mathcal{M}^*\mathcal{M})^\frac{1}{2}$ defines an operator-norm that satisfies: \begin{equation}
\|\mathcal{M}\|_{\ell^2}\leq tr_0(\mathcal{M}^*\mathcal{M})^\frac{1}{2}=\|M\|_F\leq \sqrt{n} \|\mathcal{M}\|_{\ell^2}\label{equiv}\end{equation}
where $M:=\mathcal{T}^{-1}(\mathcal{M})$.
\end{theorem}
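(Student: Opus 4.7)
The plan is to reduce everything to the Toeplitz product rule \eqref{produit} combined with \eqref{nl1} and Theorem \ref{borne}. First I would rewrite $\mathcal{M}^*\mathcal{M} = \mathcal{T}(M^*)\mathcal{T}(M) = \mathcal{T}(M^*M)$ by \eqref{produit}, so each $(i,i)$ diagonal block of $\mathcal{M}^*\mathcal{M}$ is the Toeplitz matrix built from the scalar entry $(M^*M)_{ii}(\cdot)$. By Definition \ref{trace}, the average trace sums the $0$-th Fourier coefficients of these entries, which unpacks to
\[
tr_0(\mathcal{M}^*\mathcal{M}) \;=\; \sum_{i=1}^n \frac{1}{T}\int_0^T (M^*M)_{ii}(t)\,dt \;=\; \frac{1}{T}\int_0^T tr(M^*(t)M(t))\,dt \;=\; \langle M,M\rangle \;=\; \|M\|_F^2 .
\]
Taking square roots yields the middle equality in \eqref{equiv}, and the norm axioms of $tr_0(\mathcal{M}^*\mathcal{M})^{1/2}$ transfer immediately from the Frobenius norm through the linear injection $\mathcal{T}^{-1}$.

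For the right-hand bound, the second inequality of \eqref{nl1} gives $\|M\|_F \leq \sqrt{n}\,\|M\|_{L^\infty}$, and Theorem \ref{borne} identifies $\|M\|_{L^\infty}$ with $\|\mathcal{M}\|_{\ell^2}$, yielding $\|M\|_F \leq \sqrt{n}\,\|\mathcal{M}\|_{\ell^2}$ without further work.

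The delicate step, and the one I expect to be the main obstacle, is the left-hand inequality $\|\mathcal{M}\|_{\ell^2} \leq \|M\|_F$. Invoking Theorem \ref{borne} again, this reduces to a comparison between the essential supremum and the Frobenius $L^2$-average of the symbol $M$, which cannot be concluded by a purely pointwise argument on $M(t)$. My plan is instead to work at the operator level: for any unit $X \in \ell^2$, expand $\|\mathcal{M}X\|_{\ell^2}^2 = \langle X,\mathcal{M}^*\mathcal{M}X\rangle$ and bound the right-hand side against $tr_0(\mathcal{M}^*\mathcal{M})$ via a Cauchy--Schwarz or duality estimate tailored to the positive TB structure of $\mathcal{M}^*\mathcal{M}$, exploiting the inner product \eqref{sca} applied to suitable column vectors of $M$. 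If additional structure of $\mathcal{M}$ turns out to be necessary to close this step, the place to identify and invoke it is precisely here.
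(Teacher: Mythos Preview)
Your derivation of the middle equality $tr_0(\mathcal{M}^*\mathcal{M})^{1/2}=\|M\|_F$ via \eqref{produit} and Definition~\ref{trace}, and of the right-hand bound via \eqref{nl1} and Theorem~\ref{borne}, is correct and agrees with the paper's route; you in fact spell out the computation $tr_0(\mathcal{M}^*\mathcal{M})=\langle M,M\rangle$ more explicitly than the paper does.

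Where you diverge is the left-hand inequality. The paper does \emph{not} attempt an operator-level Cauchy--Schwarz argument. It simply invokes the first inequality of \eqref{nl1} together with the Riesz--Fischer identification
\[
\|M\|_{L^2([0\ T])}:=\sup_{\|x\|_{L^2}=1}\|Mx\|_{L^2([0\ T])}=\sup_{\|X\|_{\ell^2}=1}\|\mathcal{M}X\|_{\ell^2}=\|\mathcal{M}\|_{\ell^2},
\]
which the Fourier-series isometry makes immediate; the bound $\|\mathcal{M}\|_{\ell^2}\leq\|M\|_F$ then drops out of \eqref{nl1} in one line. You overlook that the paper has already packaged this step into \eqref{nl1} and instead leave the argument as an open plan. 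That said, your hesitation is not groundless: the multiplication-operator norm on $L^2$ equals $\mathrm{ess\,sup}_t\,\sigma_{\max}(M(t))=\|M\|_{L^\infty}$, so under the paper's own reading the first inequality of \eqref{nl1} is precisely the estimate $\|M\|_{L^\infty}\leq\|M\|_F$ that you flag as not following pointwise --- and a scalar indicator on a short subinterval of $[0,T]$ shows it can fail outright. The obstruction you sensed is therefore genuine and is shared by the paper's argument; no Cauchy--Schwarz manipulation on $\ell^2$ will close it without additional hypotheses on $M$.
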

\begin{proof}
The result is deduced from \eqref{nl1},  Theorem \ref{borne} and recalling that Riesz-Fischer Theorem implies:
	\begin{align*}
			\|M\|_{L^2([0 \ T]}&:=\sup_{\|x\|_{L^2}=1}\|Mx\|_{L^2(0 \ T])}\\
   &=\sup_{\|X\|_{\ell^2}=1}\|\mathcal{M}X\|_{\ell^2}:=\|\mathcal{M}\|_{\ell^2}
	\end{align*} where $X:=\mathcal{F}(x)$ and $\mathcal{M}:=\mathcal{T}(M)$.
\end{proof}
{\remark In Definition~\ref{trace}, the average-trace of a TB operator can be interpreted as a mean value of the trace operator, that is  
\begin{align}
tr_0(\mathcal{M})&:=\lim_{N\rightarrow+\infty}\frac{1}{2N+1}\sum_{|k|\leq N}\sum_{i=1}^n<e_{ki},\mathcal{M}_{ii} e_{ki}>\\
&=\lim_{N\rightarrow+\infty}\frac{1}{2N+1}\sum_{|k|\leq N}\sum_{i=1}^nm_{ii,0}=\sum_{i=1}^nm_{ii,0}<+\infty\nonumber\end{align} where $\mathcal{M}_{ii}$ refers to the $ith$ Toeplitz diagonal block of $\mathcal{M}$ and $(e_{ki})$ is an orthonormal  basis of $\ell^2$ associated to the  $ith$ diagonal block.}

%\begin{remark} A situation where 
%Definition~\ref{trace} is useful is the following. Suppose we have to determine a Hermitian, positive definite, TB and bounded operator on $\ell^2$ denoted as $\mathcal{M}$, while adhering to specific constraints aimed at minimizing $X^*\mathcal{M}X$ for any $X\in\ell^2$.
%In light of the existence of such an operator, it implies that for any basis vector $e_k$ within $\ell^2$, the term $e_k^*\mathcal{M}e_k$ attains its minimum, and consequently, it follows that $tr_0(\mathcal{M})$ is minimized. In the sequel, we will employ this average trace operator as an alternative approach to circumvent the issue of an infinite trace value in $tr(\mathcal{M})$.
%\end{remark}

%Let $S^n_+=\{M\in S^n: M\geq 0\ a.e.\}$ and $S^n_{++}=\{M\in S^n: M> 0\ a.e.\}$. 
%It follows that: $M\in S^n_{++}$ if and only if $\mathcal{M}=\mathcal{T}(M)$ is Hermitian, positive definite, TB and bounded on $\ell^2$ {\cite{blin_necessary_2022,zhou_stability_2001}}. 

\subsection{Harmonic $H_2$ and $H_\infty$ norms}
In the harmonic framework, defining the harmonic $H_2$ norm is not a straightforward task, primarily because the Frobenius norm of  $\hat{\mathcal{G}}(\textsf{j}\omega)$, given by \eqref{Laplace_G}, $$\|\hat{\mathcal{G}}(\textsf{j}\omega)\|^2_F:=tr(\hat{\mathcal{G}}(\textsf{j}\omega)^*\hat{\mathcal{G}}(\textsf{j}\omega))$$ is not trace-class. To address this challenge, let us first introduce the following result.
\begin{theorem}\label{toeplitzM}For any time instant $t$,
\begin{equation}
\mathcal{M}(t):=\mathcal{G}(t)^*\mathcal{G}(t)\end{equation}
 where $\mathcal{G}(t):= \mathcal{C}e^{(\mathcal{A}-\mathcal{N})t}\mathcal{B}$,
 is a hermitian and non-negative, TB and bounded operator on $\ell^2$.
 Moreover, $M:=\mathcal{T}^{-1}(\mathcal{M})$ exists and is given at time $\tau$ by:
\begin{equation}
M_t(\tau):=G(\tau,t)'G(\tau,t)\label{impulse_t}\end{equation}
where $G(\tau,t):=C(\tau+t)\Phi(\tau+t,\tau)B(\tau)$.
 \end{theorem}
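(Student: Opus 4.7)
The Hermitian and non-negative properties of $\mathcal{M}(t)$ are immediate from the definition: $(\mathcal{G}(t)^*\mathcal{G}(t))^* = \mathcal{M}(t)$ and $\langle X,\mathcal{M}(t)X\rangle_{\ell^2}=\|\mathcal{G}(t)X\|_{\ell^2}^2\ge 0$ for every $X\in\ell^2$. Boundedness on $\ell^2$ reduces to boundedness of each factor in $\mathcal{G}(t)=\mathcal{C}e^{(\mathcal{A}-\mathcal{N})t}\mathcal{B}$: $\mathcal{B}$ and $\mathcal{C}$ are bounded by Theorem~\ref{borne} (using $B,C\in L^\infty$), while $e^{(\mathcal{A}-\mathcal{N})t}$ is bounded on $\ell^2$ for every $t$ by the preceding theorem. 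The substantive point is that $\mathcal{M}(t)$ is TB and the explicit identification of $M_t$, since $\mathcal{G}(t)$ itself is \emph{not} TB (the factor $e^{(\mathcal{A}-\mathcal{N})t}$ carries the non-Toeplitz diagonal $\mathcal{N}$).

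\textbf{Key identity.} I plan to establish the intermediate identity
\[
\mathcal{T}\bigl(G(\cdot,t)\bigr)=e^{\mathcal{N}t}\,\mathcal{G}(t).
\]
Combined with $\mathcal{N}^*=-\mathcal{N}$ (so $(e^{\mathcal{N}t})^*=e^{-\mathcal{N}t}$) and $e^{-\mathcal{N}t}e^{\mathcal{N}t}=\mathcal{I}$, inserting the identity between the two factors of $\mathcal{M}(t)$ yields
\[
\mathcal{M}(t)=\bigl(e^{\mathcal{N}t}\mathcal{G}(t)\bigr)^*\bigl(e^{\mathcal{N}t}\mathcal{G}(t)\bigr)=\mathcal{T}(G(\cdot,t))^*\,\mathcal{T}(G(\cdot,t)).
\]
A short Fourier-coefficient calculation gives the adjunction rule $\mathcal{T}(A)^*=\mathcal{T}(A^*)$; combining it with the product rule~\eqref{produit} and the fact that the time-domain impulse response $G(\cdot,t)$ is real-valued (so $G^*=G'$) gives $\mathcal{M}(t)=\mathcal{T}(G(\cdot,t)'G(\cdot,t))=\mathcal{T}(M_t)$, which simultaneously proves the TB structure and identifies $M_t=\mathcal{T}^{-1}(\mathcal{M}(t))$. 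Well-posedness of $\mathcal{T}^{-1}$ is granted since $M_t(\cdot)$ is $T$-periodic (inheriting periodicity from $B$, $C$, $V$ and from $\Phi(\tau+T,\tau'+T)=\Phi(\tau,\tau')$) and $L^\infty$.

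\textbf{Proof of the key identity; main obstacle.} Using the eigendecomposition~\eqref{decomp} together with the commutation of $\Lambda\otimes\mathcal{I}$ with $\mathcal{N}$ (both are block-diagonal with scalar entries in the natural basis), I write $e^{(\mathcal{A}-\mathcal{N})t}=\mathcal{V}(e^{\Lambda t}\otimes\mathcal{I})e^{-\mathcal{N}t}\mathcal{V}^{-1}$ and $\Phi(\tau+t,\tau)=V(\tau+t)e^{\Lambda t}V^{-1}(\tau)$, so that
\[
G(\tau,t)=C(\tau+t)V(\tau+t)\,e^{\Lambda t}\,V^{-1}(\tau)B(\tau).
\]
The shifted pair $C(\cdot+t)V(\cdot+t)$ is handled by Property~1 with $\delta=-t$, yielding $\mathcal{T}(C(\cdot+t)V(\cdot+t))=e^{\mathcal{N}t}\mathcal{C}\mathcal{V}e^{-\mathcal{N}t}$; the remaining unshifted $T$-periodic factors combine by rule~\eqref{produit} into $(e^{\Lambda t}\otimes\mathcal{I})\mathcal{V}^{-1}\mathcal{B}$. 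Multiplying the pieces and commuting $e^{-\mathcal{N}t}$ past $e^{\Lambda t}\otimes\mathcal{I}$ collapses the product to $e^{\mathcal{N}t}\mathcal{C}e^{(\mathcal{A}-\mathcal{N})t}\mathcal{B}=e^{\mathcal{N}t}\mathcal{G}(t)$. The main obstacle is the careful bookkeeping of the shift by $t$, which appears only in the left factors of $G(\tau,t)$ and which is precisely what Property~1 is designed to absorb into the non-TB prefactor $e^{\mathcal{N}t}$; once this is done, the seemingly non-TB operator $\mathcal{G}(t)^*\mathcal{G}(t)$ cleanly factorises as a product of two Toeplitz forms.
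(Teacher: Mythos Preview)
Your proof is correct and uses the same ingredients as the paper (the eigendecomposition~\eqref{decomp}, Property~1, the commutation of $\mathcal{N}$ with $\Lambda\otimes\mathcal{I}$, and the product rule~\eqref{produit}), but the organisation is cleaner. The paper works directly with the full product: it expands $e^{(\mathcal{A}-\mathcal{N})^*t}\mathcal{C}^*\mathcal{C}e^{(\mathcal{A}-\mathcal{N})t}$ via~\eqref{decomp}, observes that the inner factor $e^{-\mathcal{N}^*t}\mathcal{H}e^{-\mathcal{N}t}=e^{\mathcal{N}t}\mathcal{H}e^{-\mathcal{N}t}$ with $\mathcal{H}=\mathcal{V}^*\mathcal{C}^*\mathcal{C}\mathcal{V}$ is TB by Property~1, and then multiplies on the outside by $\mathcal{B}^*$ and $\mathcal{B}$. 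You instead isolate the one-sided identity $\mathcal{T}(G(\cdot,t))=e^{\mathcal{N}t}\mathcal{G}(t)$, which is a sharper intermediate statement (it says exactly how $\mathcal{G}(t)$ fails to be TB: by a unitary diagonal factor), and then square it. This buys a conceptually transparent explanation of why the non-TB parts cancel in $\mathcal{G}^*\mathcal{G}$, and the identity itself has independent interest; the paper's route, by contrast, never writes $\mathcal{G}(t)$ alone in terms of TB data and instead handles the sandwich in one shot. Both arguments are equally rigorous and of comparable length.
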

 \begin{proof} Hermitian and non-negativity properties are readily evident for $\mathcal{M}(t)$. 
Using the decomposition \eqref{decomp}, we have:
\begin{align}
e^{(\mathcal{A}-\mathcal{N})^*t}\mathcal{C}^*\mathcal{C}e^{(\mathcal{A}-\mathcal{N})t}=\mathcal{V}^{-1*}e^{(\Lambda\otimes\mathcal{I})^*t}e^{-\mathcal{N}^*t}\mathcal{H}e^{-\mathcal{N}t}e^{(\Lambda\otimes\mathcal{I})t}\mathcal{V}^{-1}\label{par1}
\end{align} where $\mathcal{H}=\mathcal{V}^*\mathcal{C}^*\mathcal{C}\mathcal{V}$.
Thus, as $e^{-\mathcal{N}^*t}\mathcal{H}e^{-\mathcal{N}t}=e^{\mathcal{N}t}\mathcal{H}e^{-\mathcal{N}t}$, Property 1 implies that $e^{-\mathcal{N}^*t}\mathcal{H}e^{-\mathcal{N}t}$ is obviously a TB operator since it is equal to $\mathcal{T}(L)$ where \begin{equation}L(\tau):=H(\tau+t)\label{Ltau}\end{equation} for almost all $\tau$ with $H:=\mathcal{T}^{-1}(\mathcal{H})$.
%The TB property can be deduced thanks to Property 1 since cal{H}$,
 %$$\mathcal{Z}:=e^{\mathcal{N}^*t}\mathcal{H} e^{ \mathcal{N}t}$$ is a TB operator whose components satisfy: $Z_{ij,k}=e^{-\textsf{j}\omega kt}H_{ij,k}$ for $i,j=1,\cdots,n$.
It follows that $\mathcal{M}(t)$ is a product of TB operators and thus a TB operator (see \eqref{produit}).

Finally as $e^{(\mathcal{A}-\mathcal{N})t}$ is also bounded on $\ell^2$ for every $t$ (its maximal singular value is $e^{\sigma t}$ where $\sigma :=\max_i 2Re(\lambda_i)$), $\mathcal{M}(t)$ is also a bounded operator on $\ell^2$.
For the last assertion,  using Property 1,  we see that \eqref{par1} is the Toeplitz transformation of $Z(\tau)$ where $Z(\tau):= V^{'-1}(\tau)e^{\Lambda^* t}L(\tau)e^{\Lambda t}V^{-1}(\tau)$ with $L$ defined by \eqref{Ltau}.
As $H(\tau):=V'(\tau)C'(\tau)C(\tau)V(\tau)$, it follows, using \eqref{tf_ltp}, that $Z(\tau)$ can be rewritten as:
$$Z(\tau):= \Phi'(\tau+t,\tau)C'(\tau+t)C(\tau+t)\Phi(\tau+t,\tau)$$ 
Consequently, multiplying $Z(\tau)$ on the left by $B'(\tau)$ and on the right by $B(\tau)$, we get the final result (thanks to \eqref{produit}). 
\end{proof}
Given the assumption that $(\mathcal{A}-\mathcal{N})$ is Hurwitz, we define the following induced operator-norm for: $\mathcal{G}(t):=\mathcal{C}e^{(\mathcal{A}-\mathcal{N})t}\mathcal{B}$
$$\|\mathcal{G}\|_{L^2}:=\left(\int_{0}^{+\infty} tr_0(\mathcal{G}^*(t)\mathcal{G}(t)) dt\right)^\frac{1}{2}$$
where $tr_0$ is given in Definition \ref{trace} for TB and bounded operators on $\ell^2$.
Consequently, as $tr_0(\mathcal{G}^*(t)\mathcal{G}(t))^\frac{1}{2}$ is finite for every $t$ and is clearly in $L^2$, we are ready to define the harmonic $H_2$ norm.
%\color{red}From Definition \ref{proj} of  $\Pi_m$, this can be explicitly rewritten:
%$$\|\mathcal{G}\|_{L^2}=\left(\int_{0}^{+\infty} tr(\Pi_0(\mathcal{G}(t)\mathcal{G}^*(t)))  dt\right)^\frac{1}{2}$$
%\color{black}

 \begin{definition}The $H_2$ operator norm of the harmonic transfer function $\hat{\mathcal{G}}(s)$ is defined by
 \begin{align}\|\hat{\mathcal{G}}(s)\|_{H_2}&:=\left(\frac{1}{2\pi}\int_{-\infty}^{+\infty} tr_0(\hat{\mathcal{G}}(\textsf{j}\omega)^*\hat{\mathcal{G}}(\textsf{j}\omega))d\omega
 \right)^\frac{1}{2}\end{align}
 \end{definition}
 The next theorem states that the $L^2$ norm of the sliding average Frobenius norm of $G(\tau,t)$ over a window of length $T$ is equal to the $H_2$ operator norm of its associated harmonic transfer function.

 \begin{theorem}The following equalities hold true:
 $$\|G\|_{L^2}=\|\mathcal{G}\|_{L^2}=\|\hat{\mathcal{G}}(s)\|_{H_2}$$
where $\hat{\mathcal{G}}$, $\mathcal{G}$ and $G$ are defined respectively by \eqref{Laplace_G}, \eqref{impulse_h} and \eqref{impulse_t} and where \begin{align}
\|G (t)\|_{L^2}&:=\left(\int_{0}^{+\infty} \|G(\tau,t)\|_F^2 dt\right)^\frac{1}{2}\\
&=\left(\int_0^{+\infty}\frac{1}{T}\int^t_{t-T}tr(G(\tau,t)'G(\tau,t))d\tau dt\right)^\frac{1}{2}\\
\|\mathcal{G}\|_{L^2}&:=\left(\int_{0}^{+\infty} tr_0(\mathcal{G}^*(t)\mathcal{G}(t)) dt\right)^\frac{1}{2}\\
\|\hat{\mathcal{G}}(s)\|_{H_2}&:=\left(\frac{1}{2\pi}\int_{-\infty}^{+\infty} tr_0(\hat{\mathcal{G}}(\textsf{j}\omega)^*\hat{\mathcal{G}}(\textsf{j}\omega))d\omega \right)^\frac{1}{2}
 \end{align}
 \end{theorem}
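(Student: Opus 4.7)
The plan is to prove the three quantities coincide by establishing two separate equalities: first $\|\mathcal{G}\|_{L^2}=\|G\|_{L^2}$ by feeding Theorem~\ref{toeplitzM} into Definition~\ref{trace}, and second $\|\mathcal{G}\|_{L^2}=\|\hat{\mathcal{G}}(s)\|_{H_2}$ by entrywise application of Plancherel's theorem together with Tonelli.

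For the first equality, I would fix $t$ and invoke Theorem~\ref{toeplitzM}: $\mathcal{M}(t):=\mathcal{G}^*(t)\mathcal{G}(t)$ is a (constant) Toeplitz block operator whose preimage under $\mathcal{T}$ is the $T$-periodic $\tau\mapsto M_t(\tau)=G(\tau,t)'G(\tau,t)$. By Definition~\ref{trace}, $tr_0(\mathcal{M}(t))$ picks out the sum of the central entries $m_{ii,0}(t)$ of the diagonal Toeplitz blocks, and each such entry is the $0$-th Fourier coefficient of the $T$-periodic scalar $[M_t]_{ii}(\tau)$. Writing out this coefficient as a mean integral yields
$$tr_0(\mathcal{M}(t))=\frac{1}{T}\int_{t-T}^{t}tr(G(\tau,t)'G(\tau,t))d\tau=\|G(\cdot,t)\|_F^2,$$
where the window $[t-T,t]$ is interchangeable with $[0,T]$ by periodicity in $\tau$. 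Integrating in $t$ over $[0,+\infty)$ gives $\|\mathcal{G}\|_{L^2}^2=\|G\|_{L^2}^2$.

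For the second equality I would \emph{not} attempt to argue that $\hat{\mathcal{G}}^*(\textsf{j}\omega)\hat{\mathcal{G}}(\textsf{j}\omega)$ is itself Toeplitz block (in general it is not), but rather expand $tr_0$ as a bare sum of entries. Indexing rows of $\mathcal{G}(t)$ by pairs $(\alpha,q)$ (output index, harmonic index) and columns by $(i,r)$, the central entry of the $i$-th diagonal block of $\mathcal{G}^*(t)\mathcal{G}(t)$ is $\sum_{\alpha,q}|\mathcal{G}_{(\alpha,q),(i,0)}(t)|^2$, so
$$tr_0(\mathcal{G}^*(t)\mathcal{G}(t))=\sum_{i=1}^{m}\sum_{\alpha=1}^{p}\sum_{q\in\mathbb{Z}}|\mathcal{G}_{(\alpha,q),(i,0)}(t)|^2.$$
Because $(\mathcal{A}-\mathcal{N})$ is assumed Hurwitz, $\mathcal{G}(t)$ decays exponentially in the $\ell^2$-operator norm, and every scalar entry lies in $L^1([0,+\infty))\cap L^2([0,+\infty))$. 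The classical Plancherel identity applied entrywise together with the fact that the Laplace transform on the imaginary axis is the Fourier transform of the zero-extended entry gives the equality of its $L^2(0,+\infty)$ and $H_2$ norms; Tonelli's theorem (all integrands are non-negative) then authorizes the exchange of summation and integration, producing $\|\mathcal{G}\|_{L^2}^2=\|\hat{\mathcal{G}}(s)\|_{H_2}^2$.

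The only delicate point is the legitimacy of the entrywise Plancherel plus Tonelli step, which hinges on the exponential decay coming from the Hurwitz assumption and on the fact that Definition~\ref{trace} specifies $tr_0$ as a sum of only $m$ central block-diagonal entries, a finite outer sum. Once this is acknowledged, both equalities are essentially bookkeeping, the substantive content having already been packed into Theorem~\ref{toeplitzM} and into the design of $tr_0$ so as to be compatible with the $0$-th Fourier coefficient in time and with columnar $\ell^2$ norms in frequency.
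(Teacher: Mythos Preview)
Your proposal is correct and follows essentially the same route as the paper: the first equality is obtained from Theorem~\ref{toeplitzM} together with the identity $tr_0(\mathcal{T}(M))=\langle Id,M\rangle$ (which is the content behind Theorem~\ref{normtr}), and the second equality from Paley--Wiener/Plancherel applied entrywise. Your explicit Tonelli argument and your remark that $\hat{\mathcal{G}}(\textsf{j}\omega)^*\hat{\mathcal{G}}(\textsf{j}\omega)$ need not be Toeplitz block are useful elaborations that the paper's one-line citation of Paley--Wiener leaves implicit, but the underlying strategy is the same.
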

 \begin{proof} The first equality follows by Theorems~\ref{normtr} and \ref{toeplitzM} while the second one is due to Paley-Wiener Theorem.
 \end{proof}

%Consider signal $X\in L^2(\mathbb{R}^+,\ell^2(\mathbb{C}^n))$ with induced norm
%$$\|X\|^2_{L^2}=\int_0^{+\infty}\|X(t)\|^2_{\ell^2}dt$$
%and let us define its Laplace transform as follows:
%$$\hat X(s)=\int_0^{+\infty}X(t)e^{-st}dt$$ 
%with a region of convergence given by $Re(s) > a$ for some $a$
%within the associated $H_2$ induced norm is given by
%$$\|\hat X\|_{H_2}=\left(\frac{1}{2\pi}\int_{-\infty}^{+\infty}\|\hat X(j \omega)\|^2_{\ell^2}d\omega\right)^{\frac{1}{2}}.$$
%
%Then, by Paley Wiener theorem, we have : $\|\hat X\|_{H_2}= \|X\|_{L^2}$.
%
Now, let us define the $H_\infty$ norm in the harmonic context. 
\begin{definition}
The $H_\infty$ operator norm  of the harmonic transfert function $\hat{\mathcal{G}}(s)$ is defined by:
\begin{align}\|\hat{\mathcal{G}}(s)\|_{H_\infty}&:=ess \sup \bar\sigma(\hat{\mathcal{G}}(\textsf{j}\omega))
\end{align}
where $\bar\sigma$ refers to the maximal singular value.
\end{definition}
The next theorem states that the $H_\infty$ operator norm of the harmonic transfer function defines a bound for the $L^2$ norm of the sliding quadratic average of the output $y$ over a window of length $T$. 
\begin{theorem}The following relations hold true:
\begin{align}\|\hat{\mathcal{G}}(s)\|_{H_\infty}&:=\sup_ {\|\hat U\|_{L^2}=1}\|\hat{\mathcal{G}}\hat U\|_{L^2}\label{eee2}\\
&=\sup_ {\| U\|_{L^2}=1}\|\mathcal{G}*U\|_{L^2}\label{eee3}\\
&\geq \sup_ {\| U\|_{L^2\cap H}=1}\|\mathcal{G}*U\|_{L^2}\label{eee4}\\
&=\sup_ {\|u\|_{L^2}=1}(\int_0^\infty\|y(t)\|^2_{L^2([t-T\ t])}dt)^\frac{1}{2}\label{eee5}
\end{align} where $*$ denotes the convolution product and $y(t):=C(t)\int_0^t\Phi(t,s)B(s)u(s)ds$. \end{theorem}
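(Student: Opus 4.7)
The plan is to prove the four relations in sequence. The two equalities on either end are Plancherel/Paley--Wiener statements for $\ell^2$-valued signals, the middle inequality is a trivial restriction of the supremum, and the last equality exploits the isometry between the time and harmonic domains encoded by the sliding Fourier decomposition.

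For the first equality, I would treat $\hat{\mathcal{G}}(\textsf{j}\omega)$ as a bounded operator on $\ell^2$ for almost every $\omega$, with essential supremum (of $\bar\sigma$) equal to $\|\hat{\mathcal{G}}\|_{H_\infty}$. The upper bound $\|\hat{\mathcal{G}}\hat U\|_{L^2}^{2}=\frac{1}{2\pi}\int\|\hat{\mathcal{G}}(\textsf{j}\omega)\hat U(\textsf{j}\omega)\|_{\ell^2}^{2}d\omega\leq\|\hat{\mathcal{G}}\|_{H_\infty}^{2}\|\hat U\|_{L^2}^{2}$ is immediate from the pointwise operator bound and integration. Tightness follows by the classical construction of an input $\hat U$ concentrated in frequency near some $\omega^\star$ essentially attaining $\bar\sigma(\hat{\mathcal{G}}(\textsf{j}\omega))$, and aligned with a maximal right singular vector of $\hat{\mathcal{G}}(\textsf{j}\omega^\star)$. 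The second equality then follows from the Paley--Wiener isometry $\|\hat U\|_{L^2}=\|U\|_{L^2}$ and the convolution theorem $\widehat{\mathcal{G}\ast U}=\hat{\mathcal{G}}\hat U$, both applied in the $\ell^2$-valued setting.

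The inequality is trivial: $L^2\cap H\subset L^2$, so restricting the supremum to a smaller set can only decrease its value.

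For the final equality, I would invoke Theorem~\ref{coincidence}: any $U\in L^2\cap H$ is of the form $U=\mathcal{F}(u)$ for a representative $u\in L^{2}_{loc}$, and the LTP--LTI correspondence recalled around \eqref{ltih} shows that $Y=\mathcal{G}\ast U$ coincides with $\mathcal{F}(y)$, where $y(t)=C(t)\int_{0}^{t}\Phi(t,s)B(s)u(s)ds$ is the time-domain output associated with $u$. The key calculation is then a Parseval identity on each sliding window: since $\{e^{\textsf{j}\omega k\tau}/\sqrt{T}\}_{k\in\mathbb{Z}}$ is an orthonormal basis of $L^{2}([t-T,t])$ and the phasors $Y_{k}(t)$ are, up to the factor $\sqrt{T}$, the Fourier coefficients of $y$ on that window, one obtains
\[
\|Y(t)\|_{\ell^2}^{2}\;=\;\frac{1}{T}\int_{t-T}^{t}|y(\tau)|^{2}d\tau,
\]
and an analogous identity for $U,u$. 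Integrating in $t$ and applying Fubini (with the convention that $u$, and hence $y$, vanish for $t<0$) yields $\|U\|_{L^2}=\|u\|_{L^2}$ and $\|\mathcal{G}\ast U\|_{L^2}^{2}=\int_{0}^{\infty}\|y(t)\|_{L^{2}([t-T,t])}^{2}dt$, once the latter norm is understood in the sliding mean-square (averaged) sense consistent with the Frobenius convention introduced in \eqref{sca}.

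The main obstacle is this last step: one must track carefully the $T$-normalization of the sliding Fourier coefficients against the Parseval identity on $[t-T,t]$, fix the convention for $\|y(t)\|_{L^{2}([t-T,t])}$ so that the $T$ factors cancel, and justify rigorously that $\mathcal{G}\ast U$ in harmonic space reproduces $\mathcal{F}(y)$ with $y$ the full time-domain output. This last point draws on the absolute continuity and $T$-periodicity of $V$ established in \eqref{tf_ltp}, Property~1 to handle the shifted exponentials inside $\mathcal{G}(t)$, and the arithmetic rules \eqref{matvectproduit}--\eqref{produit} to commute $\mathcal{F}$ with the convolution.
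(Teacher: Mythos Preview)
Your proposal is correct and follows essentially the same route as the paper's own proof: the paper dismisses \eqref{eee2}--\eqref{eee3} as ``obvious'', treats \eqref{eee4} as the trivial restriction $L^2\cap H\subset L^2$, and obtains \eqref{eee5} exactly as you do, via Theorem~\ref{coincidence} to recover $u=\mathcal{F}^{-1}(U)$ and the Riesz--Fischer/Parseval identity $\|W(t)\|_{\ell^2}=\|w\|_{L^2([t-T,\,t])}$ on each sliding window. Your write-up is in fact more explicit than the paper's (you spell out the pointwise operator bound and the frequency-localization argument for the first equality, and flag the $T$-normalization bookkeeping), but the underlying ideas are identical.
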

\begin{proof}Relations \eqref{eee2} and \eqref{eee3} are obvious. To prove \eqref{eee5}, 
if $U\in {L^2}\cap H$ then $u:=\mathcal{F}^{-1}(U)$ exists and it follows that $y:=\mathcal{F}^{-1}(Y)$ with $Y:=\mathcal{G}*U$ satisfies  $$y(t):=\int_0^tC(t)\Phi(t,s)B(s)u(s)ds$$ where $\Phi(t,s)$ denotes the transition matrix associated to \eqref{ltp}.

Invoking Riesz-Fischer Theorem, if $W:=\mathcal{F}(w)$ then $$\|w\|_{L^2([t-T\ t])}(t):=(\frac{1}{T}\int^t_{t-T}|w(\tau)|^2d\tau)^\frac{1}{2}=\|W(t)\|_{\ell^2}.$$ It follows that the $L^2$ norm of the sliding quadratic average of $w$ over a window of length $T$ defined as
$$\|w\|_{L^2}:=(\int_0^\infty\ \|w\|^2_{L^2([t-T\ t])}dt)^\frac{1}{2}$$
fulfils the relation:
$$\|w\|_{L^2}=(\int_0^\infty\ \|W(t)\|_{\ell^2}^2dt)^\frac{1}{2}=\| W\|_{L^2}.$$
Therefore, we have: 
\begin{align}
\|\hat{\mathcal{G}}(s)\|_{H_\infty}&\geq \sup_ {\| U\|_{L^2\cap H}=1}\|\mathcal{G}*U\|_{L^2}\\
&=\sup_ {\| U\|_{L^2\cap H}=1}(\int_0^\infty \|Y(t) \|_{\ell^2}^2dt)^\frac{1}{2}\\
&= \sup_ {\|u\|_{L^2}=1}(\int_0^\infty\|y(t)\|^2_{L^2([t-T\ t])}dt)^\frac{1}{2}\end{align}
\end{proof}

Having introduced the harmonic $H_\infty$ and $H_2$ norms, our aim in the sequel is to demonstrate how the established semi-definite convex optimization techniques for tackling robust control problems in finite-dimensional LTI systems can be extended into the harmonic framework. 

\subsection{Harmonic $H_2$ and $H_\infty$ optimal control}
Consider a LTP system with control input $u$, exogenous input $w$, controlled output $z$ and measured output $y$ given by 
\begin{align}
	\dot x(t)&=A(t)x(t)+B(t)u(t)+B_w(t)w(t),\label{ltpr}\\
z(t)&=C_z(t)x(t)+D_{zw}(t)w(t)+D_{zu}(t)u(t)\\
	y(t)&=C_y(t)x(t)+D_{yw}(t)w(t)+D_{yu}(t)u(t)
\end{align}
where all inputs belong to $L^2(\mathbb{R}^+)$ and all matrices are assumed to belong to $L^\infty([0 \ T])$.
For each of these matrices, the associated infinite dimensional TB matrix function is obtained using (\ref{ToepFormMat}). 
This allows to write the harmonic LTI representation as follows:
\begin{align}
	\dot X(t)&=(\mathcal{A}-\mathcal{N})X(t)+\mathcal{B}U(t)+\mathcal{B}_wW(t),\label{ltir}\\
	Z(t)&=\mathcal{C}_zX(t)+\mathcal{D}_{zw}W(t)+\mathcal{D}_{zu}U(t)\\
	Y(t)&=\mathcal{C}_yX(t)+\mathcal{D}_{yw}W(t)+\mathcal{D}_{yu}U(t)
\end{align}
From this perspective, it becomes evident that any robust optimal control problem formulated for finite-dimensional LTI systems can be naturally extended to its infinite-dimensional harmonic counterpart. Here, we focus on the full-state feedback $H_2$ and $H_\infty$ problems for which $\mathcal{C}_y=\mathcal{I}, \mathcal{D}_{zw}=\mathcal{D}_{yw}=\mathcal{D}_{yu}=0$. For instance, consider the state feedback harmonic $H_2$ optimal control problem, which consists in determining  $U=\mathcal{K}X$, where $\mathcal{K}$ is a bounded operator on $\ell^2$, to minimize the $H_2$ norm of the harmonic transfer function between $W$ and $Z$. The optimal solution is given by $\mathcal{K}=\mathcal{S}\mathcal{P}^{-1}$ where $\mathcal{P},$ $\mathcal{S}$ and $\mathcal{Z}$ are TB and bounded operators on $\ell^2$ that solve the following convex optimization problem:  
\begin{align}
&\min_{\mathcal{P},\mathcal{S},\mathcal{Z}} ~~tr_0(\mathcal{Z}) \label{H222}\\
&\text{ subject to:}\nonumber\\
& \mathcal{P}^*=\mathcal{P}>0 \nonumber\\
&\left[\begin{array}{cc}\mathcal{Z} & \mathcal{C}_z\mathcal{P}+\mathcal{D}_{zu}\mathcal{S} \\ \mathcal{P}\mathcal{C}_z^*+\mathcal{S}^*\mathcal{D}_{zu}^* & \mathcal{P}\end{array}\right]\geq 0\label{PB_H2}\\
&\left[\begin{array}{c}(\mathcal{A}-\mathcal{N})\mathcal{P}+\mathcal{P}(\mathcal{A}-\mathcal{N})^*+\mathcal{BS}+ \mathcal{S}^*\mathcal{B}^*+ \mathcal{B}_w\mathcal{B}_w^*\end{array}\right]\leq 0\nonumber
\end{align}

In a similar vein, the solution to the state feedback harmonic $H_\infty$ optimal control problem takes the form of $U=\mathcal{K}X$, where $\mathcal{K}:=\mathcal{S}\mathcal{P}^{-1}$ and $\mathcal{P},$ $\mathcal{S}$ are TB and bounded operators on $\ell^2$ that solve the following convex optimization problem:  
\begin{align}
&\min_{\mathcal{P},\mathcal{S},\gamma} \gamma \text{ subject to:}\nonumber\\
& \mathcal{P}^*=\mathcal{P}>0\label{Hinf}\\
&\left[\begin{array}{ccc}(\mathcal{A}-\mathcal{N})\mathcal{P}+\mathcal{P}(\mathcal{A}-\mathcal{N})^*+\mathcal{BS}+ \mathcal{S}^*\mathcal{B}^* & \star & \star\\
\mathcal{B}_w^* & -\gamma\mathcal{I}  & \star\\
\mathcal{C}_{z}\mathcal{P}+\mathcal{D}_{zu}\mathcal{S}& \mathcal{D}_{zw} &-\gamma \mathcal{I}\end{array}\right] \leq 0 \nonumber\end{align}

%In practice, to compute $\mathcal{K}$, assuming that $\mathcal{P}$ is known and thus $P$, $\mathcal{P}^{-1}$ can be computed from $P^{-1}$ since $\mathcal{P}^{-1}=\mathcal{T}(P^{-1})$.\\

The primary challenge in the preceding convex optimization problems lies in their infinite-dimensional nature. In practical terms, achieving a solution that facilitates the construction of a control gain necessitates truncation, but it must be carried out consistently. By "consistency", we mean that the approximate solution at a given truncation order should converge to the solution of the original problem as the order increases. Without this consistency, there can be no assurance of optimality or stability with the obtained approximate solution.

In the upcoming sections, our objective is to introduce a precise definition of a consistent truncation scheme. This scheme should enable us to solve the optimal robust control problem with an arbitrarily small margin of error. To achieve this, we will delve into the concept of TBLMIs and shed light on their relationship with differential LMIs in the time domain.

\color{black}

\section{TBLMIs vs PDLMIs}
%\begin{definition}
A TBLMI is an infinite dimensional LMI having the form:
\begin{equation}\mathcal{M}(x)=\mathcal{M}_0 +\sum_{i=1}^{+\infty}x_i\mathcal{M}_i >0,\label{tblmi}
\end{equation}
where $x_i\in \mathbb{R}$, $i=1,2,\cdots $ are the unkown variables and where the Hermitian matrices $\mathcal{M}_i$ are $n\times n$ infinite-dimensional TB  operators.  
A sequence $x=(x_1,x_2,\cdots)$ is a solution of the TBLMI (\ref{tblmi}) if $\mathcal{M}(x)$ is a positive definite and bounded operator on $\ell^2$ i.e. for any $y\in \ell^2$, $y^*\mathcal{M}(x)y>0$ and there exists $C>0$ s.t. $\|\mathcal{M}(x)\|_{\ell^2}\leq C$.
%\end{definition}
As it is the case for LMIs, we often encounter problems in which the variables are matrices, e.g., the harmonic Lyapunov inequality that allows to analyse stability of the harmonic model (\ref{ltih})
\begin{equation}(\mathcal{A}-\mathcal{N})^*\mathcal{P}+\mathcal{P}(\mathcal{A}-\mathcal{N})<0\label{lyap}
\end{equation}
where $\mathcal{A}$ and $\mathcal{N}$ are given and $\mathcal{P}$ is the variable (bounded on $\ell^2$ with $\mathcal{P}=\mathcal{P}^*$).
Obviously, \eqref{lyap} can be rewritten in the form (\ref{tblmi}) by considering a basis $\mathcal{V}_i,$ $i=0,\cdots, +\infty$ for Hermitian and 
 $n\times n$ TB matrices as follows: \begin{equation}\mathcal{M}_i=(\mathcal{A}-\mathcal{N})^*\mathcal{V}_i+\mathcal{V}_i(\mathcal{A}-\mathcal{N}) \text{ and }\mathcal{P}=\sum_{i=0}^{+\infty}x_i\mathcal{V}_i.\label{Fi}
\end{equation}
For the sake of simplification, consider the case where $n=1$ and let $\mathcal{I}_k$ be the $k$-shifted identity matrix with $k \in {\mathbb N}$. Then, a basis denoted as ${\mathcal{V}_i}$ can be obtained from the following set:
 $$\{(\mathcal{I}_{-k}+\mathcal{I}_k),\ {\rm{j}}(-\mathcal{I}_{-k}+\mathcal{I}_k): k \in {\mathbb N}\}.$$
 Additionally, it's worth noting that for any index $i$, $\mathcal{M}_i$ is a TB operator as $\mathcal{N}^*\mathcal{V}_i+\mathcal{V}_i\mathcal{N}$ is a TB operator and the product of TB operators results in a TB operator. 
 
 In the sequel, we consider the following definition:
\begin{definition}
 Let $\mathbb{S}$ be a finite set of subscripts. A TBLMI has an equivalent formulation given by \eqref{tblmi} and is defined by:
\begin{align}
\mathcal{L}(\mathcal{P};\mathcal{A}_s,s\in \mathbb{S})< 0\label{LMgTB}\end{align}
where the entries $\mathcal{A}_s,s\in \mathbb{S}$, are given and $\mathcal{P}$ refers to the TB unknown operator.  
\end{definition} 

The inequality \eqref{lyap} is a TBLMI and corresponds to $\mathcal{L}(\mathcal{P};\mathcal{A})<0$. 
It is important to emphasize that the role of $\mathcal{N}$ in this inequality is not immediately apparent, but its significance will be fully elucidated in the forthcoming theorem. In the time domain, the counterpart of \eqref{lyap} is the  {\it differential} Lyapunov inequality:\begin{equation}\dot P+A'P+PA<0\label{tlyap}
\end{equation} with $P\in L^\infty([0\ T])$ and  $P=P'>0\ a.e.$ An intriguing question arises: Is there a connection between TBLMIs and  PDLMIs? To address this inquiry, we define precisely the notion of PDLMI in the time domain and then establish its equivalence with a TBLMI in the harmonic domain.

\begin{definition}
Let $\mathbb{S}$ be a finite set of subscripts and consider $T-$periodic matrix functions $A_s,$ $s\in \mathbb{S}$, all belonging to $L^\infty$. A PDLMI is defined by:
\begin{equation}L(\dot P(t),P(t);A_s(t),s\in \mathbb{S})< 0\ a.e. \label{dlmi}\end{equation} 
where  the function $L$ is linear with respect to the unknowns $P$ and $\dot P$ and is also a polynomial matrix-valued function with respect to its defining entries $A_s,s\in \mathbb{S}$. We assume that the mapping from $t$ to $L(\dot P(t),P(t);A_s(t),s\in \mathbb{S})$ belongs to $L^\infty_{loc}$, implying that for any vector function $x \in L^2([t-T, t])$,  the inner product: $$<x,Lx>_{L^2([t-T \ t])}<0$$ 
holds, or equivalently, $L(t)<0\ a.e.$. In simpler terms, this means that $L$ defines a negative definite linear operator on $L^2_{loc}$.
%The negative sign in \eqref{dlmi} means that for every $x\in L^2([0 \ T])$, the inner product satisfies:
%$$<x,L(\dot P,P;A_s,s\in \mathbb{S})x>_{L^2}< 0$$
\end{definition}

A solution, denoted as $P(\cdot): \mathbb{R} \rightarrow \mathbb{R}^{n\times n}$, to \eqref{dlmi}, if it exists, is expected to be an absolutely continuous and symmetric matrix function. The following result establishes the connection between TBLMIs and PDLMIs.

\begin{theorem}\label{dlmitotblmi}
$P(t)$ is a $T-$periodic solution to \eqref{dlmi} if and only if $\mathcal{P}:=\mathcal{T}(P)$ is a constant, hermitian and bounded operator on $\ell^2$ and satisfies the infinite dimensional TBLMI:
\begin{align}
\mathcal{L}(\mathcal{P};\mathcal{A}_s,s\in \mathbb{S})< 0\label{LMg}\end{align} 
with $\mathcal{A}_s:=\mathcal{T}(A_s)$, $s\in \mathbb{S}$, and where 
$$\mathcal{L}(\mathcal{P};\mathcal{A}_s,s\in \mathbb{S}):=L(-\mathcal{N}^*\mathcal{P}-\mathcal{PN},\mathcal{P};\mathcal{A}_s,s\in \mathbb{S}).
$$  
%In addition is $P(t)$ is $T-$periodic then $\dot{\mathcal{P}}=0$ and the above DLMI is just an LMI on $\ell^2$. 
%$\mathcal{L}(\dot {\mathcal{P}},\mathcal{P};\mathcal{A}_i,s\in \mathbb{S})$
\end{theorem}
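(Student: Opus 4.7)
The proof reduces to three ingredients that can then be chained together to yield both implications. First, one needs the commutator identity
\[
\mathcal{T}(\dot P) \;=\; \mathcal{N}\mathcal{P}-\mathcal{P}\mathcal{N} \;=\; -\mathcal{N}^{*}\mathcal{P}-\mathcal{P}\mathcal{N}
\]
valid for any $T$-periodic, absolutely continuous $P$ with $\dot P \in L^\infty$; the second equality uses $\mathcal{N}^{*}=-\mathcal{N}$, which is immediate from \eqref{q}. Second, one needs the algebra-homomorphism property of $\mathcal{T}$ via rule \eqref{produit}, combined with the linearity of $L$ in its first two arguments and its polynomial dependence on the $A_s$. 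Third, one needs the equivalence between pointwise and operator-theoretic negative-definiteness of a Hermitian periodic $L^\infty$ matrix function, which is a consequence of Theorem~\ref{borne}.

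For the commutator identity I would argue directly in Fourier coefficients: expanding the periodic AC function $P(t)=\sum_{k}P_{k}e^{\mathsf{j}\omega k t}$, the block at position $(i,j)$ of $\mathcal{T}(P)$ is $P_{i-j}$, while the block at $(i,j)$ of $\mathcal{T}(\dot P)$ is $\mathsf{j}\omega(i-j)P_{i-j}$. Since $\mathcal{N}$ acts as multiplication by $\mathsf{j}\omega i$ on the $i$-th row-block and $\mathcal{P}\mathcal{N}$ as multiplication by $\mathsf{j}\omega j$ on the $j$-th column-block, the difference $(\mathcal{N}\mathcal{P}-\mathcal{P}\mathcal{N})_{ij}$ matches $\mathcal{T}(\dot P)_{ij}$ term by term.

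With this identity in hand, the two implications become essentially symmetric. In the forward direction, suppose $P$ is a $T$-periodic AC solution to \eqref{dlmi}: then $P \in L^\infty$, so $\mathcal{P}=\mathcal{T}(P)$ is constant, Hermitian and bounded on $\ell^2$ by Theorem~\ref{borne}, and applying $\mathcal{T}$ to the pointwise expression $L(\dot P, P; A_s)$, using \eqref{produit} plus the linearity/polynomiality of $L$, yields
\[
\mathcal{T}\bigl(L(\dot P, P; A_s,\,s\in\mathbb{S})\bigr)\;=\;L\bigl(\mathcal{T}(\dot P),\mathcal{T}(P);\mathcal{A}_s,\,s\in\mathbb{S}\bigr)\;=\;\mathcal{L}(\mathcal{P};\mathcal{A}_s,\,s\in\mathbb{S}).
\]
The inequality $L(t)<0$ a.e.\ then lifts to $\mathcal{L}(\mathcal{P};\mathcal{A}_s)<0$ on $\ell^2$ via the norm identity of Theorem~\ref{borne}. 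For the converse, starting from a constant, Hermitian, bounded $\mathcal{P}$, I would invoke Theorem~\ref{borne} to recover $P=\mathcal{T}^{-1}(\mathcal{P})\in L^\infty$ and $T$-periodic with $P=P^{*}$ a.e.; the boundedness of $\mathcal{L}(\mathcal{P};\mathcal{A}_s)$ forces the commutator $\mathcal{N}\mathcal{P}-\mathcal{P}\mathcal{N}$ to be bounded on $\ell^2$, hence $\dot P\in L^\infty$ so that $P$ is absolutely continuous, and then the same chain of identities run in reverse turns the operator inequality $\mathcal{L}(\mathcal{P};\mathcal{A}_s)<0$ into the pointwise inequality $L(\dot P(t),P(t);A_s(t))<0$ a.e.

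The main technical obstacle, and the step to write out carefully, is the precise translation of \emph{strict} inequalities between the two domains: one must argue that for a Hermitian $T$-periodic $M\in L^\infty$, the condition ``$M(t)\le -\epsilon I$ for some $\epsilon>0$ and a.e.\ $t$'' is equivalent to ``$\mathcal{T}(M)\le -\epsilon\,\mathcal{I}$ as a bounded operator on $\ell^2$''. This follows by applying Theorem~\ref{borne} to the shifted function $M+\epsilon I$ together with the essential-range characterisation of the spectrum of Toeplitz operators with bounded symbol, but it is the point where some care is required, in particular when checking that a strict PDLMI in the sense made precise after \eqref{dlmi} does indeed correspond to a uniformly negative definite symbol, and conversely.
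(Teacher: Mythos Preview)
Your proposal is correct and follows essentially the same line as the paper: both hinge on identifying $\mathcal{T}(\dot P)$ with $-\mathcal{N}^{*}\mathcal{P}-\mathcal{P}\mathcal{N}$ for $T$-periodic $P$, on the multiplicativity rule \eqref{produit} to push $\mathcal{T}$ through the polynomial expression $L$, and on Theorem~\ref{borne} (via Riesz--Fisher) to translate the sign condition between the time and harmonic domains. The only notable difference is that the paper reaches the commutator identity by specialising the general sliding-Fourier substitution $\dot P\mapsto \dot{\mathcal{P}}-\mathcal{N}^{*}\mathcal{P}-\mathcal{P}\mathcal{N}$ (quoted from \cite{blin_necessary_2022}) and then setting $\dot{\mathcal{P}}=0$ by periodicity, whereas you compute it directly block by block from the Fourier coefficients; your discussion of the strict-inequality translation and of absolute continuity in the converse direction is also more explicit than the paper's, which dispatches the converse in a single sentence via Theorem~\ref{coincidence}.
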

\begin{proof}If $P$ is absolutely continuous, $\dot P$ exists almost everywhere and is integrable, thus $\mathcal{F}(\dot P)$ can be defined. 
Moreover as $P(t)$ and $L(\dot P(t),P(t);A_s(t),s\in \mathbb{S}) $ are $L^{\infty}_{loc}$, the operators $\mathcal{P}:=\mathcal{T}(P)$ and 
$\mathcal{T}(L(\dot P(t),P(t);A_s(t),s\in \mathbb{S}))$
 are bounded operators on $\ell^2$ (see Theorem~\ref{borne}).
As a LMI is a polynomial matrix valued function with respect to all its entries, the Toeplitz transformation $\mathcal{T}(L)$ of $L$ is obtained by replacing all the terms in the PDLMI by their Toeplitz transformation and $ \dot P$ by $\dot{\mathcal{P}}-\mathcal{N}^*\mathcal{P}-\mathcal{PN}$ (see Proof of Theorem 5 in \cite{blin_necessary_2022} for more detail).
This leads formally to: 
\begin{align}
\mathcal{T}(L(\dot P,&P;A_s,s\in \mathbb{S}))\nonumber\\
&=L(\dot{\mathcal{P}}-\mathcal{N}^*\mathcal{P}-\mathcal{PN},\mathcal{P};\mathcal{A}_s,s\in \mathbb{S})\label{hlmi}
\end{align}
As it is assumed that $P(t)$ is $T-$periodic, it follows that $\mathcal{P}$ is constant. Hence, $\dot {\mathcal{P}}=0$ and \eqref{hlmi} is a constant bounded operator on $\ell^2$.
Moreover, it can be readily demonstrated through the application of the Riesz-Fisher Theorem that for any $x\in L^2([t-T \ t])$:
\begin{align*}
&<x,L(\dot P,P;A_s,s\in \mathbb{S})x>_{L^2([t-T\ t]}(t)\\
&:=\frac{1}{T}\int^t_{t-T}x'L(\dot P,P;A_s,s\in \mathbb{S})x d\tau\\
&=X(t)^*L(\dot{\mathcal{P}}-\mathcal{N}^*\mathcal{P}-\mathcal{PN},\mathcal{P};\mathcal{A}_s,s\in \mathbb{S})X(t)<0
\end{align*} 
where $X:=\mathcal{F}(x)$. 
Finally, replacing $\dot {\mathcal{P}}$ by zero leads to \eqref{LMg}. The converse is obvious since the bounded operators on $\ell^2$, $\mathcal{P}$ and $\mathcal{A}_s,s\in \mathbb{S}$, are constant and thus trivially belong to $H$ (see Theorem~\ref{coincidence}).
\end{proof}
The result of Theorem~\ref{dlmitotblmi} highlights the potential of the harmonic framework to streamline stability analysis and control design synthesis. Indeed, the development of reliable and consistent methods for solving infinite-dimensional TBLMIs presents a promising alternative to the notoriously intricate PDLMIs-based approaches. For example, using Theorem~\ref{dlmitotblmi} and the fact that $tr_0\mathcal{Z}=<Id,Z>$ (see Def.~\ref{trace}),  the time-domain counterpart formulation of the harmonic $H_2$ optimization problem \eqref{H222} is the following differential optimization problem
\begin{align}
&\min_{P, S, Z} <Id,Z> \nonumber\\
& \text{ subject to:}\nonumber\\
&\left[\begin{array}{cc}Z & C_zP+D_{zu}S \\ PC_z'+S'D_{zu}' & P\end{array}\right]\geq 0\\
&\left[\begin{array}{c}\dot P+AP+PA'+BS+ S'B'+ B_wB'_w\end{array}\right]\leq0\nonumber
\end{align}
where $P=P',$ $S$ and $Z$ are $T-$periodic and $L^\infty([0 \ T])$ matrix functions. Given the difficulty associated with solving differential LMIs, as outlined in \cite{colaneri2000continuous}, it becomes especially pertinent to consider the harmonic formulation \eqref{H222}. This becomes all the more relevant as we delve into the problem's infinite-dimensional aspects, which is precisely the focus of the upcoming section.

\section{Solving infinite dimensional harmonic semi definite convex optimization problem}
Considering the efficacy of convex optimization techniques in addressing the majority of optimal robust control problems for linear systems, our primary focus in this section lies in obtaining an approximate solution for the following Convex Optimization Problem ({\bf COP}):
\begin{align*}{\bf COP:} &\min_{\mathcal{P}^*=\mathcal{P}>0} tr_0(\mathcal{P})\text{ subject to:}\\
	&\mathcal{L}(\mathcal{P};{\mathcal{A}_{s}, s\in \mathbb{S}})\leq 0
\end{align*}

We assume that this convex optimization problem is feasible and that the optimal solution is unique, bounded on $\ell^2$ and continuous with respect to the entries $\mathcal{A}_s, s\in\mathbb{S}$.
${\bf COP}$ is an infinite-dimensional problem in the sense that the dimension of the involved entries and unknowns is infinite. 

The main objective here is to show how ${\bf COP}$ can be solved up to an arbitrarily small error. This objective is achieved through a three-step approach. In the first step, we establish the concept of a truncated TBLMI. The second step outlines the process of amalgamating truncation and banded approximation operations, thereby transforming the problem into a finite-dimensional form. In the third and final step, we prove that the solution to ${\bf COP}$ can be reliably obtained up to an arbitrarily small error by solving a finite optimization problem.

\subsection{Truncation of a TBLMI}
Consider a $n \times m$ matrix  function $A$ ($\in L^2_{loc}(\mathbb{R},\mathbb{R}^{n\times m}$) and its TB transformation
\begin{equation} 
\mathcal{A} :=\mathcal{T}(A)= \begin{bmatrix}
\mathcal{A}_{11} & \ldots  & \mathcal{A}_{1m} \\ 
\vdots&&\vdots \\ 
\mathcal{A}_{n1} &\ldots & \mathcal{A}_{nm}
\end{bmatrix} 
\end{equation}
with $\mathcal{A}_{ij}:=\mathcal{T}(a_{ij})$, $i=1,\cdots, n,\ j=1,\cdots, m$.
\begin{definition}

The $n\times m$ Hankel block matrices $\mathcal{H}(A^+ )$, $\mathcal{H}(A^- )$ associated to $\mathcal{A}$ are given by: 
\begin{align}
	\mathcal{H}(A^\pm) &:= \left[\begin{array}{ccc}\mathcal{H}(A_{11}^\pm) & \cdots & \mathcal{H}(A_{1m}^\pm) \\\vdots &  &  \vdots\\ \mathcal{H}(A_{n1}^\pm) &\cdots  & \mathcal{H}(A_{nm}^\pm)  \end{array}\right].\label{hankel}
\end{align}
where for $i=1,\cdots, n,\ j=1,\cdots, m$
\begin{align*}
	\mathcal{H}(A_{ij}^+) &:= \left[\begin{array}{ccc}a_{ij,1} & a_{ij,2} & \cdots \\a_{ij,2} & \ddots &  \\\vdots &  & \ddots \end{array}\right],\\
	 \mathcal{H}(A_{ij}^-) &:= \left[\begin{array}{ccc}a_{ij,-1} & a_{ij,-2} & \cdots \\a_{ij,-2} & \ddots &  \\\vdots &  & \ddots \end{array}\right].
\end{align*}
with $a_{ij,k}$, $k\in\mathbb{Z}$ the phasor sequence of $a_{ij}$. 
\end{definition}

\begin{definition}\label{proj} Consider $n\times m$ infinite-dimensional {TB} matrices $\mathcal{A}:=\mathcal{T}(A)$. The truncation operator $\Pi_r$ at order $r$ is defined  by:
\begin{align}
&\Pi_r(\mathcal{A}) := \begin{bmatrix}
\Pi_r(\mathcal{A}_{11}) & \ldots  & \Pi_r(\mathcal{A}_{1m}) \\ 
\vdots&&\vdots \\ 
\Pi_r(\mathcal{A}_{n1}) &\ldots & \Pi_r(\mathcal{A}_{nm})
\end{bmatrix} 
\label{proj2}
\end{align}
where $\Pi_r(\mathcal{A}_{ij})$ refers to the $(2r+1) \times (2r+1)$ principal submatrix of $\mathcal{A}_{ij}$.
For the associated Hankel matrices as defined by \eqref{hankel}, we denote by 
$\mathcal{H}_{(r_1,r_2)}(\cdot)$ the finite dimensional matrix obtained by selecting the first $(2r_1+1)$ rows and $(2r_2+1)$ columns in each of its $n\times m$ infinite dimensional blocks.
\end{definition}

Deriving algebraic rules for infinite-dimensional TB matrix functions of compatible size is straightforward, given that both the sum and the product of two TB matrices result in another TB matrix.  In finite dimension, the situation is more complicated as outlined in the following result, proved in \cite{riedinger2022solving}, which explains why the product of two finite dimensional Toeplitz matrices is not a Toeplitz matrix.
\begin{theorem}Consider two  infinite dimensional TB matrix functions $\mathcal{A}$ and $\mathcal{B}$.
The following relations are fulfilled:\\
If $\mathcal{A}$ and $\mathcal{B}$ are both formed by $n\times m$ blocks:
\begin{align}
&\Pi_r(\mathcal{A}+\mathcal{B})=\Pi_r(\mathcal{A}) +\Pi_r(\mathcal{B})  \label{pro1}
\end{align}
If $\mathcal{A}$ and $\mathcal{B}$ are formed respectively by $n\times p$ and $p\times m$ blocks: 
\begin{align}
&\Pi_r(\mathcal{AB})= \Pi_r(\mathcal{A})\Pi_r(\mathcal{B})+\mathcal{H}_{(r,\eta)}(A^+)\mathcal{H}_{(\eta,r)}(B^-) \nonumber  \\&\qquad \qquad \quad+\mathcal{J}_{n,r} \mathcal{H}_{(r,\eta)}(A^-)\mathcal{H}_{(\eta,r)}(B^+)\mathcal{J}_{m,r}\label{pro}
\end{align}
where \footnote{recall from notations that $J_r$ is the $(2r+1) \times (2r+1)$ flip matrix having 1 on the anti-diagonal and zeros elsewhere.} $\mathcal{J}_{n,r}:=Id_n\otimes J_r$  and $\eta:=\min\{\xi\in\mathbb{Z}^+\cup \{+\infty\}:\xi\geq\frac{1}{2}( \min {(d^oA,d^oB)}-1)\}$ {with $d^o A$ the largest non vanishing Fourier coefficient of $A$ (the largest harmonic)}.
\end{theorem}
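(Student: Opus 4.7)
The plan is to handle the two claims separately. The additive identity \eqref{pro1} is immediate, since $\Pi_r$ is nothing but the block-wise extraction of a principal $(2r+1)\times(2r+1)$ submatrix and this extraction is linear in the underlying entries. The substance of the theorem is the product identity \eqref{pro}, for which my strategy proceeds in three stages: reduce to the scalar Toeplitz case by exploiting the block structure, split the inner summation index into a central part and two tails, and match each tail to a Hankel product.

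For the reduction, each block of $\mathcal{AB}$ reads $\sum_{k=1}^{p}\mathcal{A}_{ik}\mathcal{B}_{kj}$ with each factor an infinite scalar Toeplitz matrix. Both sides of \eqref{pro} decompose along the outer $(i,j)$ block indices in exactly the same way; moreover $\mathcal{J}_{n,r}=Id_n\otimes J_r$ and $\mathcal{J}_{m,r}=Id_m\otimes J_r$ act block-diagonally, and the Hankel block notation \eqref{hankel} is componentwise. So it suffices to prove the identity for two scalar infinite Toeplitz matrices $\mathcal{A}=(a_{i-j})$ and $\mathcal{B}=(b_{i-j})$ with $i,j\in\mathbb{Z}$. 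For $|i|,|j|\leq r$ I write the $(i,j)$ entry of $\Pi_r(\mathcal{AB})$ as the full sum $\sum_{k\in\mathbb{Z}}a_{i-k}b_{k-j}$ and split it into $|k|\leq r$, $k>r$, and $k<-r$. The central piece matches $\Pi_r(\mathcal{A})\Pi_r(\mathcal{B})$ entry by entry.

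The two tails are then matched to Hankel products by reindexing. Setting $k=-r-\ell$ with $\ell\geq 1$ in the tail $k<-r$ produces $\sum_{\ell\geq 1}a_{(i+r)+\ell}\,b_{-(j+r)-\ell}$; choosing Hankel coordinates $\alpha=r+i+1$, $\beta=\ell$ for the positive-phasor factor and $\alpha'=\ell$, $\beta'=r+j+1$ for the negative-phasor factor identifies this sum as the $(i,j)$ entry of $\mathcal{H}_{(r,\eta)}(A^+)\mathcal{H}_{(\eta,r)}(B^-)$ in the natural increasing order. Substituting $k=r+\ell$ handles the tail $k>r$ symmetrically, but now the Hankel row/column coordinates come out as $r-i+1$ and $r-j+1$, reversing the natural row and column order; this is precisely what the flanking flip matrices $\mathcal{J}_{n,r}$ and $\mathcal{J}_{m,r}$ compensate for, producing the second correction term.

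The main obstacle I anticipate is justifying the bound on $\eta$, which is the only step that goes beyond pure index bookkeeping. The Hankel entries $a_{\alpha+\beta-1}$ vanish as soon as $\alpha+\beta-1>d^oA$, and similarly for $B^-$, so in the finite product $\mathcal{H}_{(r,\eta)}(A^+)\mathcal{H}_{(\eta,r)}(B^-)$ the contributions with inner index $\beta$ are forced to zero unless $\beta\leq d^oA-\alpha+1$ and $\beta\leq d^oB-\beta'+1$; the worst case $\alpha=\beta'=1$ requires $\beta\leq\min(d^oA,d^oB)$. Hence $2\eta+1\geq\min(d^oA,d^oB)$, which is exactly $\eta\geq\tfrac{1}{2}(\min(d^oA,d^oB)-1)$, and the truncated Hankel products then equal the full infinite tail sums. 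Recombining the three pieces recovers $\Pi_r(\mathcal{AB})$ and closes the proof.
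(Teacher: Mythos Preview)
Your argument is correct. The reduction to the scalar case via the block-diagonal action of $\mathcal{J}_{n,r}$ and $\mathcal{J}_{m,r}$ is the right move, the three-way split of the summation index is exactly what is needed, and your reindexing of the two tails onto the Hankel products (with the flip matrices absorbing the reversed ordering in the $k>r$ tail) checks out entry by entry. The justification of the bound on $\eta$ is also correct: the worst case $\alpha=\beta'=1$ forces $2\eta+1\geq\min(d^oA,d^oB)$, and beyond that the Hankel entries vanish so the truncated products reproduce the full tails.

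As for comparison: the paper does not actually prove this theorem. It simply states the result and refers to \cite{riedinger2022solving} for the proof. Your direct entrywise computation is presumably close in spirit to what that reference contains, since this identity is essentially the classical Widom-type decomposition of finite Toeplitz products (Toeplitz part plus two Hankel corrections), adapted here to the doubly-infinite, block-structured setting with the specific truncation operator $\Pi_r$. There is nothing further to compare against within the present paper.
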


\begin{figure}\begin{center}
		\includegraphics[scale=0.2]{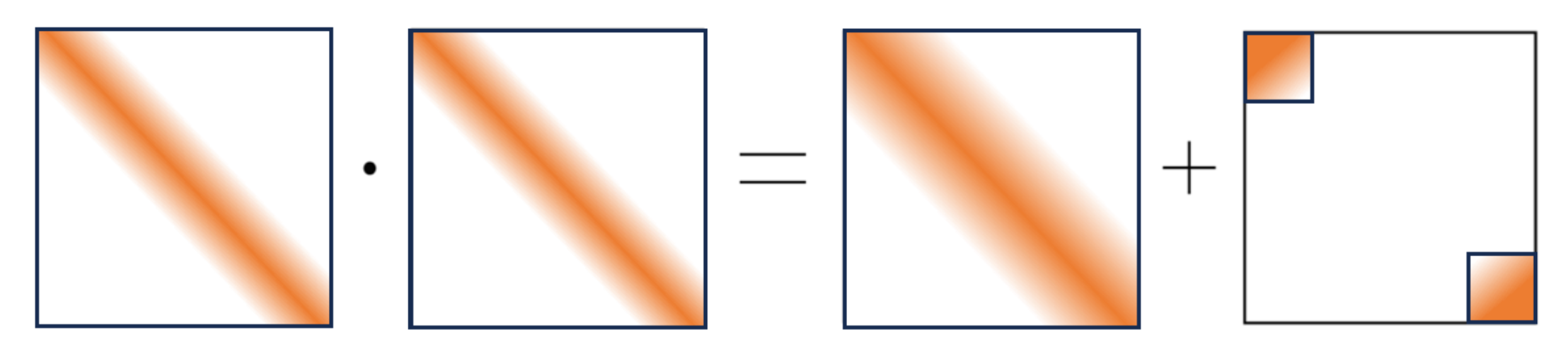}
		\caption{Multiplication of two finite dimensional banded Toeplitz matrices}\label{fig1}
	\end{center}
\end{figure}
An illustration of the above theorem is given in Fig.~\ref{fig1} for $n=p=m=1$ when $d^oA$ and
$d^oB$ are less than $r$ so that $\Pi_r(\mathcal{A})$ and $\Pi_r(\mathcal{B})$
are banded matrices.
In this case, the matrices $E^+:=\mathcal{H}_{(r,\eta)}(A^+)\mathcal{H}_{(\eta,r)}(B^-)$ and $E^-:= J_r\mathcal{H}_{(r,\eta)}(A^-)\mathcal{H}_{(\eta,r)}(B^+) J_r$ have disjoint supports located in the upper leftmost corner and in the lower rightmost corner, respectively. As a consequence,
$\Pi_r(\mathcal{A})\Pi_r(\mathcal{B})$ can be represented as the sum of $\Pi_r(\mathcal{C})$ and two correcting terms $E^+$ and $E^-$.

We are now prepared to provide a precise definition for an $r-$truncation of a generic TBLMI \eqref{LMg}.

\begin{definition}
For a given $r>0$, the $r-$truncated TBLMI of \eqref{LMg} is {defined by}:
 \begin{equation}\Pi_r(\mathcal{L}(\mathcal{P};\mathcal{A}_{s}, s\in \mathbb{S}))<0\label{lmi_trunc}\end{equation}
 \end{definition}
For example, the $r-$truncated TBLMI associated to \eqref{lyap}~is: 
\begin{align}&{\Pi_r((\mathcal{A}-\mathcal{N})^*)\Pi_r(\mathcal{P})+\Pi_r(\mathcal{P})\Pi_r(\mathcal{A}-\mathcal{N})}\nonumber\\
& +\mathcal{H}_{(r,\eta)}(A^{*+})\mathcal{H}_{(\eta,r)}(P^-)+\mathcal{H}_{(r,\eta)}(P^{+})\mathcal{H}_{(\eta,r)}(A^{-})\nonumber\\
 &+\mathcal{J}_{n,r} (\mathcal{H}_{(r,\eta)}(A^{*-})\mathcal{H}_{(\eta,r)}(P^+)\nonumber\\&+ \mathcal{H}_{(r,\eta)}(P^-)\mathcal{H}_{(\eta,r)}(A^+))\mathcal{J}_{n,r}<0\nonumber
\end{align}
with $\eta:= \min\{\xi\in\mathbb{Z}^+\cup \{+\infty\}:\xi\geq\frac{1}{2}( \min {(d^oA,d^oP)}-1)\}$.

The following result asserts that if the infinite-dimensional TBLMI (\ref{LMg}) is feasible, then a solution to the truncated TBLMI (\ref{lmi_trunc}) can always be found for any order $r$.
\begin{theorem}\label{sol_trunc}If $\mathcal{P}$ solves the infinite-dimensional TBLMI (\ref{LMg}) then $\mathcal{P}$ 
solves the $r$-truncated TBLMI (\ref{lmi_trunc}) at any order~$r$.
\end{theorem}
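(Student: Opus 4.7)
The plan is to exploit the variational characterization of negative definiteness on $\ell^2$ and restrict the test vectors to finitely supported ones, exactly aligned with the block-wise truncation $\Pi_r$. Concretely, by the definition of a solution to the TBLMI, the operator $\mathcal{L}(\mathcal{P};\mathcal{A}_s,s\in\mathbb{S})$ is bounded on $\ell^2$ and satisfies $y^*\mathcal{L}(\mathcal{P};\mathcal{A}_s,s\in\mathbb{S})y<0$ for every nonzero $y\in\ell^2$ (say $(\ell^2(\mathbb{C}))^n$, viewed as $n$ stacked sequences indexed by $k\in\mathbb{Z}$). The idea is then simply that the principal submatrix of a negative definite operator is negative definite.

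First I would make the projection explicit. Let $E_{n,r}:\mathbb{C}^{n(2r+1)}\to(\ell^2)^n$ denote the natural isometric embedding which sends a block vector $z=(z^{(1)},\dots,z^{(n)})$ with $z^{(i)}\in\mathbb{C}^{2r+1}$ to the sequence whose $i$-th block has entries $z^{(i)}_{-r},\dots,z^{(i)}_{r}$ on indices $\{-r,\dots,r\}$ and zeros elsewhere. By construction of $\Pi_r$ in Definition~\ref{proj} (principal $(2r+1)\times(2r+1)$ submatrix taken blockwise), one has the identity
\begin{equation*}
\Pi_r(\mathcal{M})=E_{n,r}^{*}\,\mathcal{M}\,E_{m,r}
\end{equation*}
for any $n\times m$ TB operator $\mathcal{M}$, and in particular for the square case $n=m$ this is a congruence that preserves the negative definiteness direction.

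Second, applied to $\mathcal{M}:=\mathcal{L}(\mathcal{P};\mathcal{A}_s,s\in\mathbb{S})$, I would pick an arbitrary nonzero $z\in\mathbb{C}^{n(2r+1)}$, set $y:=E_{n,r}z\in(\ell^2)^n$, note that $y\neq 0$ since $E_{n,r}$ is an isometry, and write
\begin{equation*}
z^{*}\Pi_r\bigl(\mathcal{L}(\mathcal{P};\mathcal{A}_s,s\in\mathbb{S})\bigr)z
=y^{*}\mathcal{L}(\mathcal{P};\mathcal{A}_s,s\in\mathbb{S})y<0,
\end{equation*}
where the strict inequality uses the hypothesis that $\mathcal{P}$ solves \eqref{LMg}. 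Since $z$ was arbitrary, this proves $\Pi_r(\mathcal{L}(\mathcal{P};\mathcal{A}_s,s\in\mathbb{S}))<0$, which is exactly \eqref{lmi_trunc}.

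There is no genuine obstacle here; the only point that deserves care is the bookkeeping of the blockwise truncation. A naive reader might confuse $\Pi_r(\mathcal{L}(\mathcal{P};\mathcal{A}_s,s\in\mathbb{S}))$ with $\mathcal{L}(\Pi_r(\mathcal{P});\Pi_r(\mathcal{A}_s),s\in\mathbb{S})$, but the latter is not what is asserted and would generally be false because of the Hankel correction terms in \eqref{pro}. The statement of the theorem concerns truncation applied \emph{after} assembling the LMI, which is precisely what the compression $E_{n,r}^{*}(\cdot)E_{n,r}$ captures, so the argument goes through unchanged. Note also that the hypothesis that $\mathcal{P}$ is bounded on $\ell^2$ is used only to ensure that $\mathcal{L}(\mathcal{P};\mathcal{A}_s,s\in\mathbb{S})$ is itself a bounded operator on $\ell^2$, so that the strict inequality in the operator sense is well defined.
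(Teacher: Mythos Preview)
Your proof is correct and follows essentially the same approach as the paper's own proof, which simply observes that $\Pi_r(\mathcal{L}(\mathcal{P};\mathcal{A}_s,s\in\mathbb{S}))$ is a principal submatrix of the negative definite operator $\mathcal{L}(\mathcal{P};\mathcal{A}_s,s\in\mathbb{S})$ and hence is itself negative definite. Your version makes this explicit via the isometric embedding $E_{n,r}$ and the compression identity $\Pi_r(\mathcal{M})=E_{n,r}^{*}\mathcal{M}E_{n,r}$, and your closing remark distinguishing $\Pi_r(\mathcal{L}(\mathcal{P};\cdot))$ from $\mathcal{L}(\Pi_r(\mathcal{P});\cdot)$ is a useful clarification that the paper leaves implicit.
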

\begin{proof} Consider a solution $\mathcal{P}$ to \eqref{LMg} then for any $r>0$, the principal submatrix $\Pi_r(\mathcal{L}(\mathcal{P};{\mathcal{A}_s,s\in \mathbb{S}}))$ {of $\mathcal{L}(\mathcal{P};{\mathcal{A}_s,s\in \mathbb{S}})$} is necessarily negative definite.  Hence, one of the solutions to the resulting $r-$truncated TBLMI is, in fact, $\mathcal{P}$ itself.
\end{proof}

Due to Properties \eqref{pro1} and \eqref{pro}, it is possible to explicitly expand $\Pi_r(\mathcal{L}(\mathcal{P};{\mathcal{A}_s,s\in \mathbb{S}}))$ without resorting to any approximation. This expansion results in a finite-dimensional problem, provided that all the values obtained for $\eta$ in \eqref{pro} are finite. However, if some values of $\eta$ turn out to be infinite, then \eqref{lmi_trunc} will contain terms involving infinite-dimensional Hankel matrices. In the following, we will demonstrate how to transform this infinite-dimensional problem into a finite one by utilizing a banded approximation for the matrix entries.

\subsection{Truncated and banded approximation of {TBLMI}}
 {The aim of this part is to show that \eqref{LMg} can be approximated by a banded version (see \eqref{blmi}) whose $r$-truncation (see \eqref{lmi_trunc3}) is now tractable numerically since only a finite number of unknowns must be taken into account.}
%\begin{theorem}\label{conv_l2}Assume that $\mathcal{A}$ is a bounded operator on $\ell^2$. The operator $\mathcal{A}_{b(p)}$ converges to $\mathcal{A}$ in $\ell^2$-operator norm i.e.
%$$\lim_{p\rightarrow +\infty}\|\mathcal{A}-\mathcal{A}_{b(p)}\|_{\ell^2}=0$$
%\end{theorem}
%\begin{proof}
%
%\end{proof}
%This result allows to replace any TBLMI by its banded version as stated in the following theorem.
\begin{theorem}\label{tt1} Assume that $\mathcal{A}$ is a bounded operator on $\ell^2$ and denote by $\mathcal{A}_{b(p)}$ 
 its $p-$banded version obtained by deleting all its phasors of order higher than $p$. The following results hold true:
\begin{enumerate}
\item The operator $\mathcal{A}_{b(p)}$ converges to $\mathcal{A}$ in $\ell^2$-operator norm~i.e.
$$\lim_{p\rightarrow +\infty}\|\mathcal{A}-\mathcal{A}_{b(p)}\|_{\ell^2}=0$$

\item If ${\mathcal{P}}$ is a solution to \eqref{LMg} then there exists $p_0$ such that for any $p\geq p_0$, 
%$$\|\mathcal{L}(\mathcal{P};{\mathcal{A}_{s}, s\in \mathbb{S}})-\mathcal{L}(\mathcal{P};{\mathcal{A}_{s_{b(p)}}, s\in \mathbb{S}})\|_{\ell^2}<\epsilon$$
%which implies that $\mathcal{P}$ satisfies the $p-$banded TBLMI:
\begin{equation}\mathcal{L}({\mathcal{P}};{\mathcal{A}_{s_{b(p)}}, s\in \mathbb{S}})<0 \label{blmi}\end{equation} %for sufficiently small $\epsilon$.
\item For given $p$ and $r$, the $r-$truncated and $p-$banded TBLMI: 
\begin{equation} 
\Pi_r(\mathcal{L}(\mathcal{P};{\mathcal{A}_{s_{b(p)}}, s\in \mathbb{S}}))<0 \label{lmi_trunc3}.
\end{equation}
involves a finite number of unknown phasors of $\mathcal{P}$.  
 \end{enumerate}
\end{theorem}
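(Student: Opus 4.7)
The plan is to treat the three items sequentially, as each subsequent claim leverages the previous one. For item~1, I would invoke Theorem~\ref{borne} to identify the $\ell^2$-operator norm with the symbol's $L^\infty$-norm, reducing the statement to
\begin{equation*}
\|\mathcal{A}-\mathcal{A}_{b(p)}\|_{\ell^2}=\|A-A_{b(p)}\|_{L^\infty}\longrightarrow 0\quad\text{as }p\to+\infty,
\end{equation*}
where $A_{b(p)}$ is the $T$-periodic symbol whose phasors coincide with those of $A$ for $|k|\leq p$ and vanish otherwise. The remaining work is a standard trigonometric approximation argument: either use Fejér (Ces\`aro) means and appeal to Fejér's theorem, or invoke enough regularity on $A$ so that its Fourier partial sums converge uniformly. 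The $n\times m$ block structure adds no difficulty since banding acts entrywise.

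With item~1 in hand, item~2 is a standard perturbation-continuity argument. The mapping $\mathcal{L}(\mathcal{P};\cdot)$ is polynomial in its entries, and addition and composition of bounded operators on $\ell^2$ are continuous in operator norm. Hence $\mathcal{L}(\mathcal{P};\mathcal{A}_{s_{b(p)}}, s\in \mathbb{S}) \to \mathcal{L}(\mathcal{P};\mathcal{A}_s, s\in \mathbb{S})$ in operator norm as $p\to+\infty$. Since the limiting operator is strictly negative with coercivity margin $\varepsilon>0$, i.e.\ bounded above by $-\varepsilon\,\mathcal{I}$, there exists $p_0$ such that for every $p\geq p_0$ the perturbation stays above $-\tfrac{\varepsilon}{2}\mathcal{I}$, and strict negativity of \eqref{blmi} is preserved.

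For item~3, I would expand $\Pi_r(\mathcal{L}(\mathcal{P};\mathcal{A}_{s_{b(p)}}, s\in\mathbb{S}))$ using the linearity rule \eqref{pro1} and the product rule \eqref{pro}. Since each banded entry $\mathcal{A}_{s_{b(p)}}$ has degree at most $p$, the index $\eta$ appearing in \eqref{pro} is finite and bounded by a quantity depending only on $p$. Therefore, in each product $\Pi_r(\mathcal{A}_{s_{b(p)}}\mathcal{P})$, the principal-submatrix part involves only phasors of $\mathcal{P}$ indexed in a range of size $\mathcal{O}(r)$, while the Hankel correction terms involve only phasors in a range of size $\mathcal{O}(p)$. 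Collecting all the polynomial terms of $\mathcal{L}$, the constraint \eqref{lmi_trunc3} depends only on a \emph{finite} family of phasors of $\mathcal{P}$, determined by $r$, $p$ and the polynomial degree of $\mathcal{L}$ in its entries.

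The main obstacle is item~1: for a general $L^\infty$ symbol the Fourier partial sums need not converge in $L^\infty$, so one must be explicit about the banding scheme adopted (for instance Fejér sums, or a mild regularity hypothesis on the entries). Once item~1 is secured, items~2 and~3 follow almost automatically---item~2 from continuity of the polynomial $\mathcal{L}$ combined with the openness of the cone of negative definite bounded operators on $\ell^2$, and item~3 purely from the algebraic identity \eqref{pro}, which confines the involved phasors of $\mathcal{P}$ to a finite window.
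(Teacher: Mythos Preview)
Your plan mirrors the paper's proof almost exactly: item~1 via Theorem~\ref{borne} to reduce to $\|A-A_{b(p)}\|_{L^\infty}\to 0$, item~2 by polynomial continuity of $\mathcal{L}$ in its bounded entries combined with item~1 (the paper additionally remarks that the unbounded $\mathcal{N}$ is diagonal and hence unaffected by banding), and item~3 by expanding each generic term $\Pi_r(\mathcal{U}\mathcal{P}\mathcal{V})$ through the rule~\eqref{pro} and observing that the finite bandwidth of $\mathcal{U},\mathcal{V}$ forces a finite~$\eta$. Your caveat on item~1 is well taken: the paper's own argument there passes from a.e.\ convergence of the Fourier tail $\sum_{|k|>p}A_k e^{\textsf{j}\omega kt}$ directly to the $L^\infty$ limit, so the regularity/Fej\'er issue you flag is present in the paper's proof as well rather than being a gap specific to your approach.
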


\begin{proof}
{Let us show the first assertion. As $\|\mathcal{A}\|_{\ell^2}=\|A\|_{L^\infty}$ where $\mathcal{A}:=\mathcal{T}(A)$ (see Theorem \ref{borne}) and using the Fourier series of $A$: $$A(t)=\sum_{k\in\mathbb{Z}} A_ke^{ \textsf{j}\omega kt} \ a.e.,$$} we can write: 
\begin{align}
\|\mathcal{A}-\mathcal{A}_{b(p)}\|_{\ell^2}& =\|A-A_{b(p)}\|_{L^\infty}=\|\sum_{|k|>p} A_ke^{ \textsf{j}\omega kt} \|_{L^\infty}\label{e1}
\end{align}
As by assumption there exists a constant $C_1$ such that
\begin{align*}
\|\mathcal{A}\|_{\ell^2}=\|A\|_{L^\infty}
&= \|\sum_{k\in \mathbb{Z}} A_ke^{ \textsf{j}\omega kt} \|_{L^\infty}<C_1
\end{align*}
the series $\sum_{k\in \mathbb{Z}} A_ke^{ \textsf{j}\omega kt}$ converges almost everywhere and $\lim_{p\rightarrow +\infty}\sum_{|k|>p} A_ke^{ \textsf{j}\omega kt}=0\ a.e.$ 
Taking the limit w.r.t. $p$ in \eqref{e1} leads to the result. \\
{Now for assertion 2),} as the entries $\mathcal{A}_{s}, s\in \mathbb{S}$ are assumed bounded on $\ell^2$, the only term of the TBLMI not bounded on $\ell^2$ is $\mathcal{N}$. Fortunately, as $\mathcal{N}$ is diagonal, $\mathcal{N}=\mathcal{N}_{{b(p)}}$ for any $p\geq 0$ and thus $\mathcal{N}$ does not play any role.
If ${\mathcal{P}}$ is a solution to \eqref{LMg}, then by assumption $\mathcal{L}({\mathcal{P}};{\mathcal{A}_{s}, s\in \mathbb{S}})$ must be a bounded operator on $\ell^2$. By continuity property of LMIs with respect to their entries, there exists a constant $C_2$ depending of ${\mathcal{P}}$ and {$\mathcal{A}_s,$ $s\in \mathbb{S}$}, such that for any $p>0$
\begin{align*}\|\mathcal{L}({\mathcal{P}};{\mathcal{A}_{s}, s\in \mathbb{S}})-\mathcal{L}({\mathcal{P}};&{\mathcal{A}_{s_{b(p)}}, s\in \mathbb{S}})\|_{\ell^2}\\&\leq C_2 \sum_{s\in {\mathbb{S}}} \|\mathcal{A}_{s}-\mathcal{A}_{s_{b(p)}}\|_{\ell^2}\end{align*}
From the first assertion, we conclude that for any $\epsilon>0$, there exists $p_0$ such that for $p\geq p_0$, 
$$\|\mathcal{L}({\mathcal{P}};{\mathcal{A}_{s}, s\in \mathbb{S}})-\mathcal{L}({\mathcal{P}};{\mathcal{A}_{s_{b(p)}}, s\in \mathbb{S}})\|_{\ell^2}<\epsilon$$
and relation \eqref{blmi} follows for sufficiently small $\epsilon$.\\
Finally, to show the last assertion, as all {$\mathcal{A}_{s_{b(p)}}$, $s\in \mathbb{S}$}, in $\mathcal{L}(\mathcal{P};{\mathcal{A}_{s_{b(p)}}, s\in \mathbb{S}})<0$ are banded, only the unknown $\mathcal{P}$ is possibly not banded.
As the product of infinite dimensional banded {TB} operators is a banded {TB} operator \footnote{This is not true in finite dimension}, the terms in the TBLMI involving operator $\mathcal{P}$ have the generic form: 
$\mathcal{U}\mathcal{P}\mathcal{V}$ where $\mathcal{U}$ and $\mathcal{V}$ are polynomial functions of banded entries ${\mathcal{A}_{s_{b(p)}}, s\in \mathbb{S}}$, and are therefore banded. 
Applying $\Pi_r$ on $\mathcal{L}(\mathcal{P};{\mathcal{A}_{s_{b(p)}}, s\in \mathbb{S}})$ leads to compute $\Pi_r(\mathcal{U}\mathcal{P}\mathcal{V})$. Using \eqref{pro}, we have:
\begin{align}
\Pi_r(\mathcal{U}\mathcal{P}\mathcal{V})&= \Pi_r(\mathcal{U})\Pi_r(\mathcal{PV})\label{ee1} \\
&+\mathcal{H}_{(r,\eta_1)}(U^+)\mathcal{H}_{(\eta_1,r)}((PV)^-) \nonumber \\
&+\mathcal{J}_{n,r} \mathcal{H}_{(r,\eta_1)}(U^-)\mathcal{H}_{(\eta_1,r)}((PV)^+)\mathcal{J}_{n,r}\nonumber 
\end{align}
where $\eta_1$ is the first integer greater than $\frac{1}{2} d^o U$ and where $\Pi_r(\mathcal{PV)}$ is determined using \eqref{pro} with $\eta$ the first integer greater than $\frac{1}{2} d^o V$. Noticing that the coefficient of the {highest} degree invoked in the Hankel matrix $\mathcal{H}_{(r,\eta)}(\cdot)$ is of degree $2(r+\eta)+1$, it is straightforward to check that only a finite number of phasors of $\mathcal{P}$ are necessary to compute both $\Pi_r(\mathcal{PV)}$ and \eqref{ee1} and thus the result follows. %and \eqref{ee2}, the result is established. 
\end{proof}
We want to emphasize that solving the LMI presented in point 3) provides us with the ability to explicitly calculate the unknown phasors of $\mathcal{P}$ up to a specified order. Given that the phasor sequence belongs to $\ell^2$, it is evident that this sequence must diminish for higher-order phasors. In essence, this implies that we can anticipate achieving a precise solution as we increase the values of both $p$ and $r$ significantly. In the subsequent section, we will delineate the precise steps to determine the solution to ${\bf COP}$ up to a small arbitrary error.

\vspace{-.1cm}
\subsection{Solving {\bf COP} up to an arbitrary error}
{
We are now ready to prove the main result of this section. To this end, we define three subproblems: the $p-$banded problem ${\bf COP_{p}}$, the fully banded problem ${\bf COP_{p,q}} $ and the $r-$truncated, fully banded ${\bf COP_{p,q,r}}$. The main result states that solving ${\bf COP_{p,q,r}}$ is a consistent scheme allowing to approximate the solution to ${\bf COP}$.}\\ 
For a given $p>0$, consider the $p-$banded problem: 
\begin{align*}
{\bf COP_{p}:} &\min_{\mathcal{P}^*=\mathcal{P}>0} tr_0(\mathcal{P}) \text{ subject to:}\\
&\mathcal{L}(\mathcal{P};{\mathcal{A}_{s_{b(p)}}, s\in \mathbb{S}})\leq 0.\nonumber
\end{align*}
For a given $q>0$, the fully banded problem is: 
\begin{align*}&{\bf COP_{p,q}:} \min_{\mathcal{P}^*=\mathcal{P}>0} tr_0(\mathcal{P}) \text{ subject to:} \\
&\mathcal{L}(\mathcal{P};{\mathcal{A}_{s_{b(p)}}, s\in \mathbb{S}})\leq 0, \quad%\nonumber \\
P_{ij,k}=0, |k|>q,\ i,j=1,\cdots,n\nonumber
\end{align*}
{where for a given $k\in\mathbb{Z}$, $P_{ij,k}=0$ refers to the $k$th-phasors of the $(i,j)$th block of $\mathcal{P}$.}\\
For a given $r>0$, the $r-$truncated, fully banded optimization problem is: 
\begin{align*}
&{\bf COP_{p,q,r}:} \min_{\mathcal{P}^*=\mathcal{P}} tr_0(\mathcal{P}) \text{ subject to:}\quad \Pi_r(\mathcal{P})>0,\\
&\Pi_r(\mathcal{L}(\mathcal{P};{\mathcal{A}_{s_{b(p)}}, s\in \mathbb{S}}))\leq 0,\ %\nonumber \\
P_{ij,k}=0, |k|>q,\ i,j=1,\cdots,n.\nonumber
\end{align*}
{It is important to note that evaluating $tr_0(\mathcal{P})$ necessitates the computation of a finite number of phasors, making ${\bf COP_{p,q,r}}$ a problem of finite dimensionality.}

We assume that all these convex optimization problems are feasible and that the optimal solution is unique, bounded on $\ell^2$ and continuous with respect to the entries $\mathcal{A}_s$, $s\in \mathbb{S}$. 
%{\begin{assumption}\label{as2}
%For given $p,q,r>0$, any sequence $\mathcal{P}_k$, $k=0,1,\cdots$ such that for any $k$, $\mathcal{P}_k$ solves ${\bf COP_{p,q,r}}$ and $\lim_{k\rightarrow +\infty} \|\mathcal{P}_k\|_{\ell^2}=+\infty$ leads to $\lim_{k\rightarrow +\infty} tr_0(\mathcal{P}_k)=+\infty$.\end{assumption}
Given $p,q$ and $r$, we denote by $\hat{\mathcal{P}}$, $\hat{\mathcal{P}}_{p}$, $\hat{\mathcal{P}}_{p,q}$ and $\hat{\mathcal{P}}_{p,q,r}$, the solution to ${\bf COP}$, ${\bf COP_{p}}$, ${\bf COP_{p,q}}$ and ${\bf COP_{p,q,r}}$ respectively.
The next theorem states that solving ${\bf COP_{p,q,r}}$ is a consistent scheme allowing to approximate the solution to ${\bf COP}$.
\begin{theorem}\label{prop} For any $\epsilon>0$, there exist $p$, $q$ and $r_0$ such that  for any $r>r_0$:
\begin{align}
& \|\hat{\mathcal{P}}_{p,q,r}-\hat{\mathcal{P}}\|_{\ell^2}=\|\hat{{P}}_{p,q,r}-\hat{{P}}\|_{L^\infty}< \epsilon
\end{align}
where $\hat{\mathcal{P}}_{p,q,r}=\mathcal{T}(\hat{{P}}_{p,q,r})$ and $\hat{\mathcal{P}}:=\mathcal{T}(\hat{{P}})$.
\end{theorem}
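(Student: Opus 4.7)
The natural strategy is a triangle inequality in the $\ell^2$-operator norm through two intermediate optima:
\begin{equation*}
\|\hat{\mathcal{P}}_{p,q,r} - \hat{\mathcal{P}}\|_{\ell^2} \leq \|\hat{\mathcal{P}}_{p,q,r} - \hat{\mathcal{P}}_{p,q}\|_{\ell^2} + \|\hat{\mathcal{P}}_{p,q} - \hat{\mathcal{P}}_{p}\|_{\ell^2} + \|\hat{\mathcal{P}}_{p} - \hat{\mathcal{P}}\|_{\ell^2},
\end{equation*}
then to choose $p$, then $q$, then $r$, so as to drive each term below $\epsilon/3$. The identification $\|\hat{\mathcal{P}}_{p,q,r}-\hat{\mathcal{P}}\|_{\ell^2}=\|\hat{P}_{p,q,r}-\hat{P}\|_{L^\infty}$ is immediate from Theorem~\ref{borne}, so only the three norm estimates require work.

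For the rightmost term, I would let $p\to\infty$: Theorem~\ref{tt1}, assertion 1, ensures $\mathcal{A}_{s_{b(p)}}\to\mathcal{A}_{s}$ in the $\ell^2$-operator norm for every $s\in\mathbb{S}$, so the data of ${\bf COP_{p}}$ converge to the data of ${\bf COP}$. The standing assumption that the optimum depends continuously on the entries then yields $\hat{\mathcal{P}}_{p}\to\hat{\mathcal{P}}$, and $p$ is fixed large enough.

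For the middle term, with $p$ now frozen, the boundedness of $\hat{\mathcal{P}}_{p}$ on $\ell^2$ gives, via Theorem~\ref{borne}, an $\hat{P}_{p}\in L^{\infty}([0\ T])$ whose $\ell^2$ phasor sequence has a tail past $|k|>q$ that vanishes as $q\to\infty$. Therefore the added constraint $P_{ij,k}=0$, $|k|>q$, in ${\bf COP_{p,q}}$ becomes asymptotically inactive, and continuity of the optimum again gives $\hat{\mathcal{P}}_{p,q}\to\hat{\mathcal{P}}_{p}$; I fix $q$ accordingly.

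The third term is the delicate one and contains the main obstacle. With $p$ and $q$ frozen, ${\bf COP_{p,q,r}}$ is, by Theorem~\ref{tt1} assertion 3, a genuinely finite-dimensional SDP in the finitely many unknown phasors $P_{ij,k}$, $|k|\leq q$, and Theorem~\ref{sol_trunc} implies that $\hat{\mathcal{P}}_{p,q}$ remains feasible at every order $r$, whence $tr_0(\hat{\mathcal{P}}_{p,q,r})\leq tr_0(\hat{\mathcal{P}}_{p,q})$ and the sequence $\{\hat{\mathcal{P}}_{p,q,r}\}_r$ is bounded. The non-trivial point is the converse inclusion: one must show that every accumulation point $\mathcal{P}^\star$ of that sequence is feasible for ${\bf COP_{p,q}}$, i.e.\ $\mathcal{L}(\mathcal{P}^\star;\mathcal{A}_{s_{b(p)}},s\in\mathbb{S})\leq 0$ on all of $\ell^2$. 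Since all operators involved are banded TB with bandwidth controlled by $p$, $q$ and the polynomial degree of $\mathcal{L}$, this would follow from the Szeg\H{o}-type fact that a banded self-adjoint TB operator is negative semi-definite on $\ell^2$ if and only if its principal submatrices of sufficiently large order are negative semi-definite, equivalently if and only if the underlying $T$-periodic symbol $L(t)=\mathcal{T}^{-1}(\mathcal{L})$ is negative semi-definite a.e.\ (cf.\ Theorem~\ref{borne}). Passing to the limit in the nested feasible sets and invoking uniqueness of the optimum would then force $\hat{\mathcal{P}}_{p,q,r}\to\hat{\mathcal{P}}_{p,q}$, and a suitable $r_{0}$ closes the estimate.
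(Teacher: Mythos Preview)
Your overall architecture coincides with the paper's: the same triangle inequality through $\hat{\mathcal{P}}_{p}$ and $\hat{\mathcal{P}}_{p,q}$, the identification $\|\cdot\|_{\ell^2}=\|\cdot\|_{L^\infty}$ via Theorem~\ref{borne}, and the first limit handled by continuity of the optimum in the entries together with Theorem~\ref{tt1}\,(1). That part is fine and matches the paper.

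The middle term, however, contains a genuine gap. You write that the constraint $P_{ij,k}=0$, $|k|>q$, ``becomes asymptotically inactive, and continuity of the optimum again gives $\hat{\mathcal{P}}_{p,q}\to\hat{\mathcal{P}}_{p}$.'' But the standing hypothesis only grants continuity of the optimum with respect to the \emph{data} $\mathcal{A}_s$; here the data are frozen and it is the feasible set itself that moves. Nothing in the assumptions lets you invoke a stability result for optima under relaxation of structural constraints on $\mathcal{P}$. The paper does not argue this way: it sandwiches the objective values,
\[
tr_0\bigl(\hat{\mathcal{P}}_{p_{b(q)}}\bigr)\ \geq\ tr_0\bigl(\hat{\mathcal{P}}_{p,q}\bigr)\ \geq\ tr_0\bigl(\hat{\mathcal{P}}_{p}\bigr),
\]
uses Theorem~\ref{tt1}\,(1) on the left to force $tr_0(\hat{\mathcal{P}}_{p,q})\to tr_0(\hat{\mathcal{P}}_{p})$, and then upgrades convergence of the objective to convergence of the operator via the factorisation $\hat{\mathcal{P}}_{p,q}=\mathcal{Z}_{p,q}^{*}\mathcal{Z}_{p,q}$ together with the Frobenius--trace identity of Theorem~\ref{normtr} and a weak--to--strong compactness argument. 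That mechanism is precisely what your sketch is missing.

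For the third term your route differs from the paper's but is defensible. The paper exploits the $q$-banding to place all $\hat{\mathcal{P}}_{p,q,r}$ in a fixed finite-dimensional subspace, extracts a convergent subsequence by compactness, and appeals to uniqueness. Your Szeg\H{o}-type argument---that $\Pi_{r'}(\mathcal{L}(\mathcal{P}^\star))\leq 0$ for every fixed $r'$ (as a principal submatrix of $\Pi_{r_k}(\mathcal{L}(\hat{\mathcal{P}}_{p,q,r_k}))$ for $r_k>r'$, then pass to the limit) and hence $\mathcal{L}(\mathcal{P}^\star)\leq 0$---is actually more explicit about why the accumulation point is feasible for ${\bf COP_{p,q}}$; combined with $tr_0(\mathcal{P}^\star)\leq tr_0(\hat{\mathcal{P}}_{p,q})$ and uniqueness, it closes. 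So on this step you are at least as careful as the paper, just via a different lens.
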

\begin{proof}
{Let us show that the following three limits hold:
\begin{align*}
&\lim_{p \rightarrow +\infty} \|\hat{\mathcal{P}}_{p}-\hat{\mathcal{P}}\|_{\ell^2}=0, \quad
 \lim_{q \rightarrow +\infty} \|\hat{\mathcal{P}}_{p,q}-\hat{\mathcal{P}}_p\|_{\ell^2}=0,\ p>0\\
& \text{and }\lim_{r \rightarrow +\infty} \|\hat{\mathcal{P}}_{p,q,r}-\hat{\mathcal{P}}_{p,q}\|_{\ell^2}=0, p,q>0.
\end{align*}
The first limit} is a direct consequence of the continuity of the optimal solution with respect to the entries {$\mathcal{A}_s$, $s\in \mathbb{S}$} and 1) in Theorem~\ref{tt1}.
%%%%%%%%%%%%%%%%%%%%%%%%%%%%%%%
To prove the second limit, for a given $p$, as for any $q$, $\hat{\mathcal{P}}_{p,q}$ is admissible for both ${\bf COP_{p,q+1}}$ and ${\bf COP_{p}}$, it follows necessarily that
\begin{equation}tr_0(\hat{\mathcal{P}}_{p,q})\geq tr_0(\hat{\mathcal{P}}_{p,q+1})\geq\cdots \geq tr_0(\hat{\mathcal{P}}_{p}) >0\label{des}\end{equation}
On the other hand, from 1) in Theorem~\ref{tt1}, it is clear that there exists $q_0$ such that for any $q>q_0$ the $q-$banded operator $\hat{\mathcal{P}}_{p_{b(q)}}$ of $\hat{\mathcal{P}}_{p}$ is positive definite. Thus, $\hat{\mathcal{P}}_{p_{b(q)}}$ is then obviously admissible for  ${\bf COP_{p,q}}$, it follows that: 
\begin{equation}
tr_0(\hat{\mathcal{P}}_{p_{b(q)}})\geq tr_0(\hat{\mathcal{P}}_{p,q})\geq tr_0(\hat{\mathcal{P}}_{p})\label{tg}\end{equation}
Since $\|\hat{\mathcal{P}}_{p_{b(q)}}-\hat{\mathcal{P}}_{p}\|_{\ell^2}\rightarrow0$ when $q\rightarrow+\infty$, taking the limit w.r.t. $q$  in \eqref{tg} leads to:
\begin{equation}\lim_{q\rightarrow+\infty}tr_0(\hat{\mathcal{P}}_{p,q})= tr_0(\hat{\mathcal{P}}_{p}).\label{tg2}\end{equation}
 Now, let us show that we also have: $\lim_{q\rightarrow+\infty}\hat{\mathcal{P}}_{p,q}= \hat{\mathcal{P}}_{p}$ on $\ell^2$.
As $ \hat{\mathcal{P}}_{p,q}$ is TB, Hermitian, positive definite and bounded on $\ell^2$, there exists a bounded operator on $\ell^2$, $\mathcal{Z}_{p,q}$ such that the following decomposition holds:$$ \hat{\mathcal{P}}_{p,q}=\mathcal{Z}^*_{p,q}\mathcal{Z}_{p,q} \text{ for any }p,q.$$
Moreover as $\mathcal{Z}_{p,q}$ is a constant matrix function, it belongs trivially in $H$ (see Def.~\ref{H}) and there exists a representative $Z_{p,q}\in L^\infty([0\ T])$ (see Theorem \ref{borne}) such that
$\mathcal{Z}_{p,q}=\mathcal{T}(Z_{p,q})$.
Using similar arguments, $\hat{\mathcal{P}}_{p}=\mathcal{Z}^*_{p}\mathcal{Z}_{p}$ with $\mathcal{Z}_{p}=\mathcal{T}(Z_{p})$ and $Z_{p}\in L^\infty([0\ T])$. 
Therefore, Def. 2 implies: $$tr_0( \hat{\mathcal{P}}_{p,q})=tr_0(\mathcal{Z}^*_{p,q}\mathcal{Z}_{p,q})=<Z_{p,q},Z_{p,q}>$$
and from \eqref{tg2}, it can be concluded that 
\begin{equation}\lim_{q\rightarrow+\infty}<Z_{p,q},Z_{p,q}>=<Z_{p},Z_{p}>.\label{nc}
\end{equation}
Moreover as the sequence $Z_{p,q}$ indexed by $q$ is bounded (see \eqref{des}), there exists a subsequence that converges weakly on $L^\infty$ and 
Eq. \eqref{nc} implies that it also converges strongly and necessarily to $Z_{p}$ by uniqueness of solution. Finally, the uniqueness of the solution implies that the whole sequence converges to  $Z_{p}$. It follows %since $\mathcal{C}$ is assumed to be invertible 
that  $\lim_{q\rightarrow+\infty} \hat {\mathcal{P}}_{q,p}=\hat{\mathcal{P}}_{p}=\mathcal{Z}^*_{p}\mathcal{Z}_{p}$ on $\ell^2$.
Finally to prove the third limit:  for any $r$ and following similar steps as the proof of Theorem~\ref{sol_trunc}, as $\hat{\mathcal{P}}_{p,q}$ is admissible for Problem ${\bf COP_{p,q,r}} $ and $\hat{\mathcal{P}}_{p,q,r+1}$ is admissible for Problem ${\bf COP_{p,q,r}}$ , it follows that:{
\begin{equation}
tr_0(\hat{\mathcal{P}}_{p,q,r})\leq tr_0(\hat{\mathcal{P}}_{p,q,r+1})\leq \cdots \leq tr_0(\hat{\mathcal{P}}_{p,q})\label{bound2}\end{equation}}
which proves that the sequence $tr_0(\hat{\mathcal{P}}_{p,q,r})$ indexed by $r$ is an increasing and bounded real sequence, and thus a converging sequence. Moreover, for any $r$, as there exists $\hat{\mathcal{Z}}_{p,q,r}$ s.t.  $\hat{\mathcal{P}}_{p,q,r}:=\hat{\mathcal{Z}}^*_{p,q,r}\hat{\mathcal{Z}}_{p,q,r}$, Eq. \eqref{equiv} and \eqref{bound2} imply 
$$\|\hat{\mathcal{Z}}_{p,q,r}\|_{\ell^2} \leq \sqrt{n}\|\hat{\mathcal{Z}}_{p,q}\|_{\ell^2}$$
and it follows that the sequence $\|\hat{\mathcal{P}}_{p,q,r}\|_{\ell^2}=\|\hat{\mathcal{Z}}_{p,q,r}\|^2_{\ell^2}$ indexed by $r$ is necessarily bounded on $\ell^2$.
Therefore, for any $r$, the phasors of $\hat{\mathcal{P}}_{p,q,r}$ are bounded and belong to a finite dimensional subspace of $\ell^2$ (thanks to the constraints $P_{ij,k}=0 \text{ for } |k|>q$). 
By compactness, there exists a subsequence that converges on this finite subspace of $\ell^2$ and the uniqueness of the solution implies that the whole sequence converges necessarily to $\hat{\mathcal{P}}_{p,q}$.\\
{The final result follows since $\forall\epsilon>0$, $\exists p_0$, $\forall p>p_0$, $\exists q_0(p)$, $\forall q>q_0$, $\exists r_0(p,q)$ such that:
$ \|\hat{\mathcal{P}}_{p}-\hat{\mathcal{P}}\|_{\ell^2}\leq \frac{\epsilon}{3}$, $\|\hat{\mathcal{P}}_{p,q}-\hat{\mathcal{P}}_p\|_{\ell^2}\leq \frac{\epsilon}{3}$ and $\forall r>r_0$, $\|\hat{\mathcal{P}}_{p,q,r}-\hat{\mathcal{P}}_{p,q}\|_{\ell^2}\leq \frac{\epsilon}{3}$
and thus it follows that 
\begin{align*}
\|\hat{\mathcal{P}}_{p,q,r}-\hat{\mathcal{P}}\|_{\ell^2}\leq&\|\hat{\mathcal{P}}_{p,q,r}-\hat{\mathcal{P}}_{p,q}\|_{\ell^2}\\&+\|\hat{\mathcal{P}}_{p,q}-\hat{\mathcal{P}}_p\|_{\ell^2}+ \|\hat{\mathcal{P}}_{p}-\hat{\mathcal{P}}\|_{\ell^2}\leq \epsilon\end{align*}}
The proof ends invoking Theorem~\ref{borne}.
\end{proof}

\section{Illustrative examples}
\subsection{$H_2$ harmonic control-LQR case}
We consider the example given in \cite{riedinger2022solving}:
\begin{align}
	\dot x=&\left(\begin{array}{cc}a_{11} (t) & a_{12} (t) \\a_{21} (t) & a_{22} (t)\end{array}\right)x+\left(\begin{array}{c}b_{11}(t) \\0\end{array}\right)u\label{ex_ltp}\end{align}
{\small\begin{align*}a_{11} (t) &:=1+\frac{4}{\pi}\sum_{k=0}^{\infty}\frac{1}{2k+1}\sin(\omega (2k+1)t),\\
	a_{12} (t) &:= 2+\frac{16}{\pi^2}\sum_{k=0}^{\infty}\frac{1}{(2k+1)^2}\cos(\omega (2k+1)t),\\
	a_{21} (t) &:= -1+\frac{2}{\pi}\sum_{k=1}^{\infty}\frac{(-1)^k}{k}\sin(\omega kt+\frac{\pi}{4}),\\
	a_{22} (t) &:= 1-2\sin(\omega t)-2\sin(3\omega t)+2\cos(3\omega t)+2\cos(5\omega t),\\
	b_{11}(t)&:=1+ 2 \cos(2\omega t)+ 4 \sin(3\omega t) \text{ with }\omega=2\pi.
\end{align*}}
Note that $a_{11}$, $a_{12}$ and $a_{21}$ are respectively square, triangular and sawtooth signals and include an offset part. The associated TB $\mathcal{A}$ matrix has an infinite number of phasors and is not banded. This system is unstable. Its equivalent harmonic LTI system \eqref{ltih} is characterized by a spectrum given by the set $\sigma:=\{\lambda+ \textsf{j}\omega k, k\in \mathbb{Z}\}$ where $\lambda \in \{1\pm \textsf{j} 1.64\}$ (see \cite{riedinger2022solving}).

We consider the LQR problem whose solution can be obtained by solving the associated 
infinite dimensional convex optimization problem \cite{willems1971least}:
\begin{align} 
 &\max_{\scriptsize \mathcal{P}=\mathcal{P}^*>0} tr_0(\mathcal{P}),\ % \sum_{i=1}^n Z_{ii,0}
\label{op}\\
&\left(\begin{array}{cc}
(\mathcal{A}-\mathcal{N})^*\mathcal{P}+\mathcal{P}(\mathcal{A}-\mathcal{N})+\mathcal{Q} & \mathcal{PB} \\
\mathcal{B}^*\mathcal{P}& \mathcal{R}
\end{array}\right)\geq 0\nonumber
\end{align}
where the trace operator is defined by \eqref{tr} and $\mathcal{Q}$ and $\mathcal{R}$ are the harmonic LQR weighting matrices. The matrix gain is given by $\mathcal{K}:=\mathcal{R}^{-1}\mathcal{B}^*\mathcal{P}$ where $\mathcal{P}$ is a {TB} matrix of infinite dimension and a bounded operator on $\ell^2$; see \cite{blin_necessary_2022} for more details. 
For comparison purpose, we also consider an equivalent formulation which consists in solving the following infinite dimensional convex optimization problem\cite{boyd1994linear}: 
\begin{align}
 &\min_{\scriptsize \mathcal{S}=\mathcal{S}^*>0} tr_0(\mathcal{W}),\label{op2}\\
&\left(\begin{array}{ccc}(\mathcal{A}-\mathcal{N})\mathcal{S}+\mathcal{S}(\mathcal{A}-\mathcal{N})^*-\mathcal{B}\mathcal{Y}-\mathcal{Y}^*\mathcal{B}^* & \star & \star \\ \mathcal{R}^{\frac{1}{2}}\mathcal{Y}& -\mathcal{I} & \star \\\mathcal{Q}^{\frac{1}{2}}\mathcal{S} & 0 & -\mathcal{I}\end{array}\right)\leq 0 \nonumber\\ 
&\left(\begin{array}{cc}\mathcal{W}& \mathcal{I} \\\mathcal{I} & \mathcal{S}\end{array}\right)>0 \nonumber
\end{align}
With this formulation, the matrix gain is given by $\mathcal{K}:=\mathcal{Y}\mathcal{S}^{-1}$.
Finally, the LQR problem can also be solved through a harmonic $H_2$ problem formulation provided by \eqref{PB_H2} with 
$\mathcal{C}_z:=[\mathcal{Q}^\frac{1}{2};0]$, $\mathcal{C}_y:=\mathcal{I}$ 
, $\mathcal{D}_{zu}:=[0;\mathcal{R}^\frac{1}{2}]$ and 
$\mathcal{D}_{zw}=\mathcal{D}_{yw}=\mathcal{D}_{yu}:=0$.

\begin{figure}[h]
	\begin{center}
		\includegraphics[width=\linewidth]{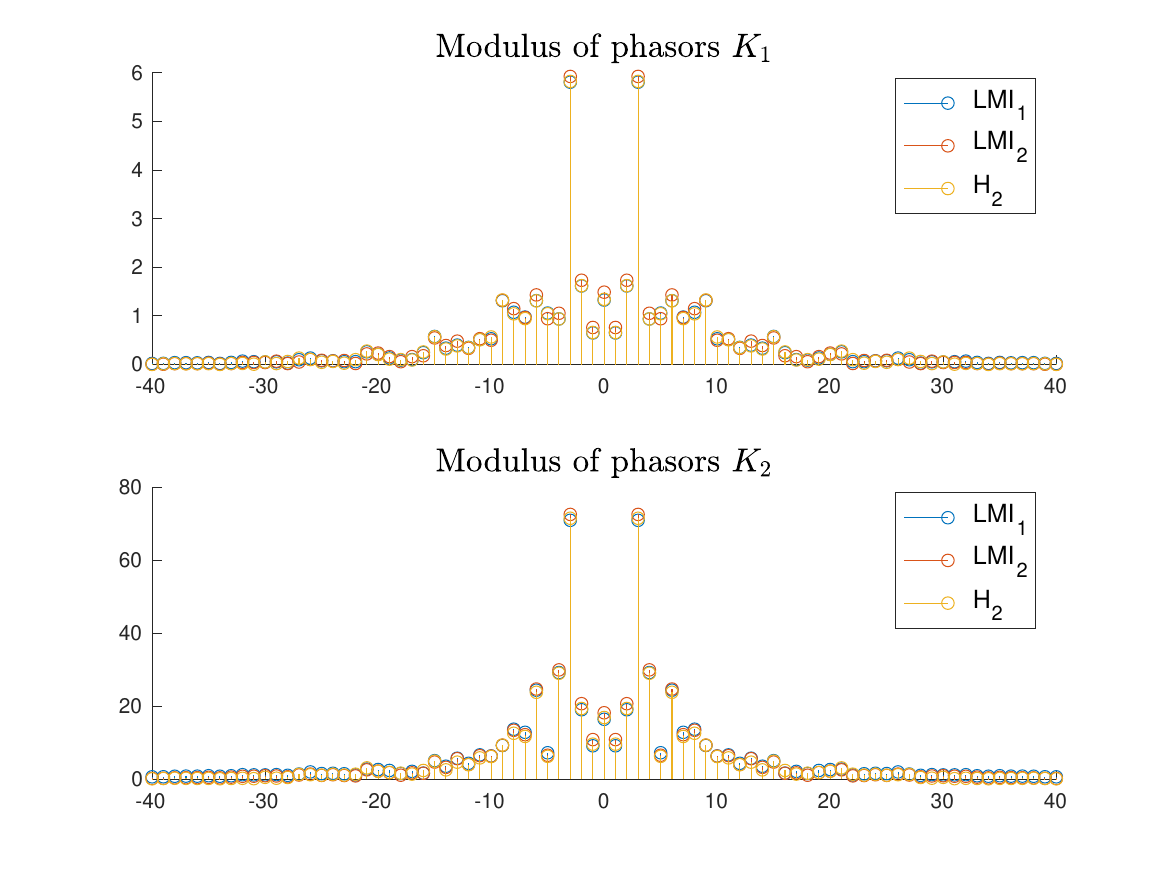}
		\caption{Moduli of gain-phasors $K=[K_1,K_2]$ for Problem \eqref{op}, \eqref{op2} \eqref{PB_H2} with $r=30$}\label{f0}
	\end{center}
\end{figure}

\begin{figure}[h]
	\begin{center}
		\includegraphics[width=\linewidth]{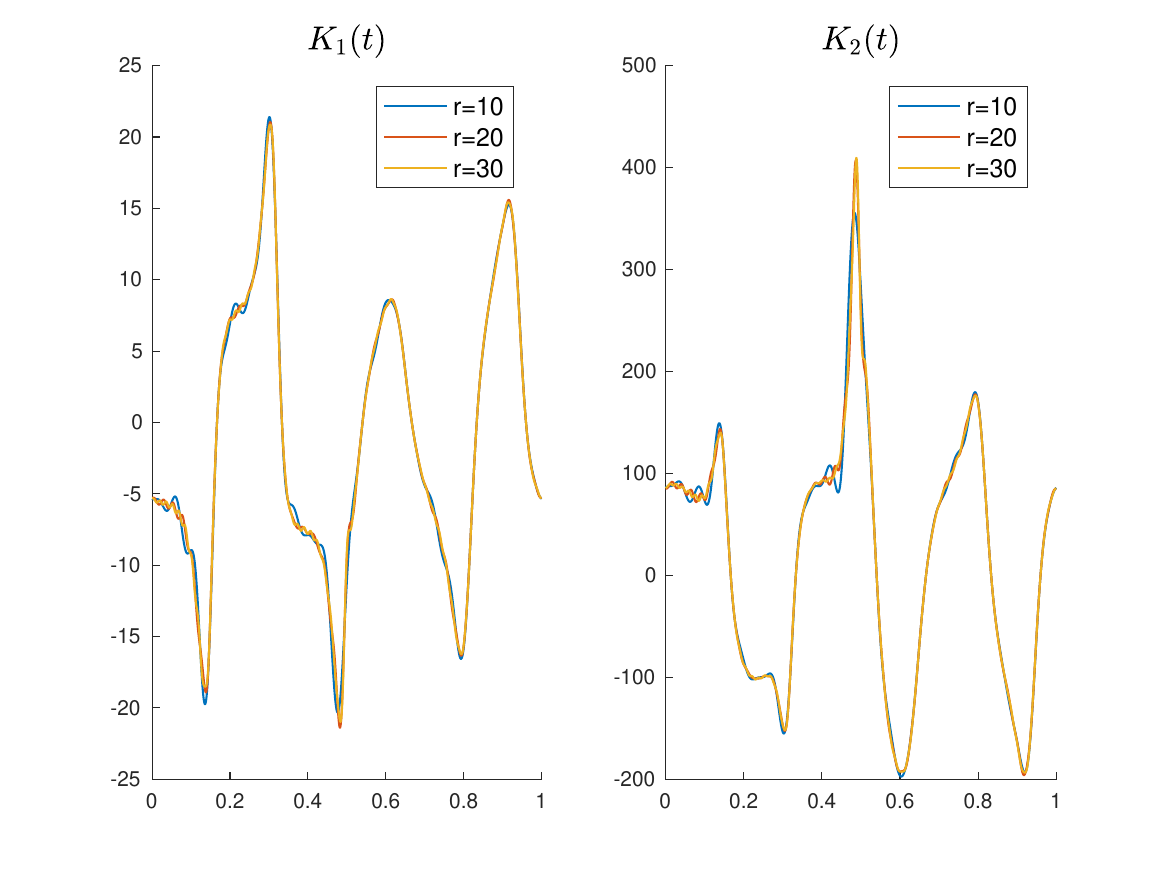}
		\caption{$T$-periodic gain $K(t)$ over a period $T=1$ for Problem \eqref{op}}\label{f3}
	\end{center}
\end{figure}

\begin{figure}[h]
	\begin{center}
		\includegraphics[width=\linewidth]{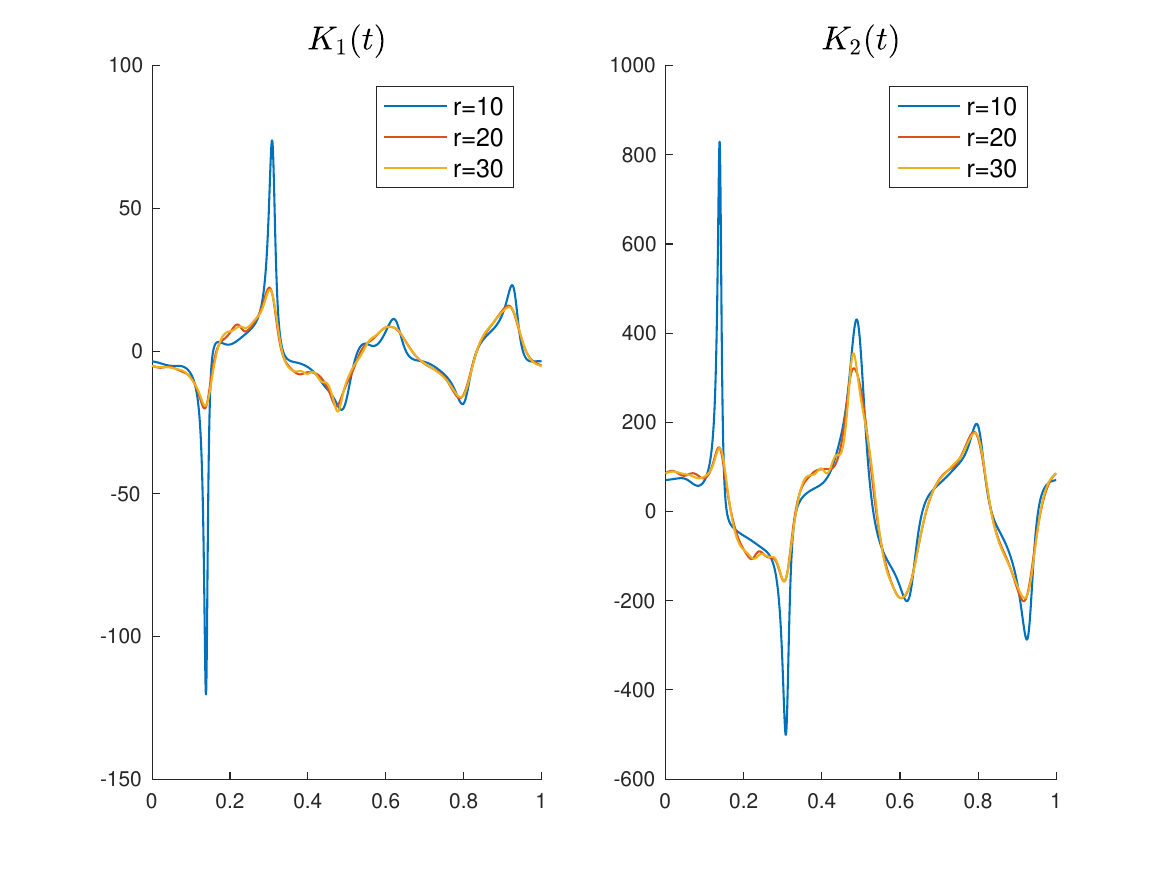}
		\caption{$T$-periodic gain $K(t)$ over a period $T=1$ for Problem \eqref{PB_H2}}\label{f1}
	\end{center}
\end{figure}

%\begin{figure}[h]
%	\begin{center}
%		\includegraphics[width=\linewidth]{Kk_H2}
%		\caption{Moduli of Phasors $K=[K_1,K_2]$ (harmonic LQ control, $H_2$ formulation)}\label{f2}
%	\end{center}
%\end{figure}
We have opted for the following choices for matrices $\mathcal{Q}$ and $\mathcal{R}$: $\mathcal{Q}:=\mathcal{T}(diag([1\ 10^4]))$ and $\mathcal{R}:=\mathcal{T}(Id_m)$. Imposing as required a TB structure to the unknown matrices, we have solved Problem ${\bf COP_{p,q,r}}$ associated with equations \eqref{op}, \eqref{op2}, and \eqref{PB_H2}. Specifically, we used $r=30$, $p=2r$, and $q=r$ for Problems \eqref{op2} and \eqref{PB_H2}, and $q=2r$ for Problem \eqref{op}.

Furthermore, the constraint of $q=r$ for Problems \eqref{op2} and \eqref{PB_H2} is based on the consideration that when $q=2r$, higher-order residual phasors emerge in $\mathcal{Y}$ and $\mathcal{S}$, which can be undesirable for the inversion of the $\mathcal{S}$ matrix. 

{With these choices and taking advantages of the TB structure, if $n=2$ denotes the dimension of the state and $m=1$ the dimension of the control, the number of scalar unknowns are respectively  $\frac{n(n+1)}{2}(2q+1)=363$ , $(n(n+1)+nm)(2q+1)=488$  and $(nm+\frac{n(n+1)}{2}+\frac{nm(nm+1)}{2})(2q+1)=671$ for problem \eqref{op}, \eqref{op2} and \eqref{PB_H2} and the computation times required to obtain solutions for these three problems are as follows: $T_{comp} = 41s, 145s \text{ and } 86s$ respectively.} In summary, the number of unknowns increases linearly with respect to the number of harmonics to be taken into account. 

For the substantial truncation order of $r=30$, Figure \ref{f0} displays the magnitudes of phasors within $\mathcal{K}=[\mathcal{K}_1,\mathcal{K}_2]$. As evident from the plot, the solutions for these three problems align closely, demonstrating a high degree of accuracy and consistency.

Now, regarding the computation times for solving \eqref{op} with different truncation orders $r=10,20,30$, we have the following respective times: $T_{comp} = 1.5s, 9.5s, 41s$. Meanwhile, when dealing with the $H_2$ problem formulation \eqref{PB_H2}, we obtain the following times: $T_{comp} = 2.7s, 22s, 94s$ for the same values of $r$. Notably, the results obtained for both optimization problems exhibit substantial similarity for $r=20,30$. However, it is worth mentioning that for $r=10$, the solution obtained by solving \eqref{PB_H2} lacks accuracy and has not been retained. For illustration purpose,  we visualize in Figure~\ref{f3} and Figure~\ref{f1} the $T-$periodic gain matrix $K(t)$ over a period $T$ for various values of $r=10,20,30$. These matrices are obtained by solving the optimization problems \eqref{op} and \eqref{PB_H2}, respectively.

Now, it is evident that the control law given by:
$$u(t):=u_{ref}(t)-K(t)(x(t)-x_{ref}(t)),$$
where $K(t)$ represents the $T-$periodic gain matrix defined as $K(t):=\sum_{k=-q}^{q} K_k e^{\textsf{j}\omega kt}$
effectively stabilizes the unstable LTP system \eqref{ex_ltp} globally and asymptotically. This stabilization occurs on any $T-$periodic trajectory characterized by $x_{ref}(t) = \mathcal{F}^{-1}(X_{ref})$ and $u_{ref}(t) = \mathcal{F}^{-1}(U_{ref})$, where the pair $(X_{ref},U_{ref})$ satisfies the harmonic equilibrium equation
\begin{align}0=(\mathcal{A}-\mathcal{N})X_{ref}+\mathcal{B}U_{ref}.\label{equi}\end{align}

To illustrate this, we visualize the closed-loop response for three $T-$periodic reference trajectories $(x_{ref},u_{ref})$ in Figure~\ref{f4}. We begin with $u_{ref}(t) := 0$ for $t < 2$. Subsequently, for $2 \leq t < 4$, we set $u_{ref}(t) := 4\cos(2\pi t)$. For $t \geq 4$, we introduce a desired steady state $X_d$ defined as $\mathcal{F}^{-1}(X_d)(t) := (\frac{1}{4}\cos(2\pi t), 0)$ and seek the nearest harmonic equilibrium. This involves solving the minimization problem $\min_{U_{ref}}|X_d-X_{ref}|^2$ subject to \eqref{equi}. It is evident from Figure~\ref{f4} that the provided state feedback enables tracking of any $T-$periodic trajectory associated with any equilibrium of \eqref{equi}.

%\begin{figure}[h]
%	\begin{center}
%		\includegraphics[width=\linewidth]{Pk_LQR}
%		\caption{Moduli of Phasors $P$ (Riccati solution)}\label{f3}
%	\end{center}
%\end{figure}
%\begin{figure}[h]
%	\begin{center}
%		\includegraphics[width=\linewidth]{Pt_LQR}
%		\caption{$T-$periodic Riccati solution $P(t)$ over a period $T=1$}\label{f4}
%	\end{center}
%\end{figure}
%
\begin{figure}[h]
	\begin{center}
		\includegraphics[width=\linewidth]{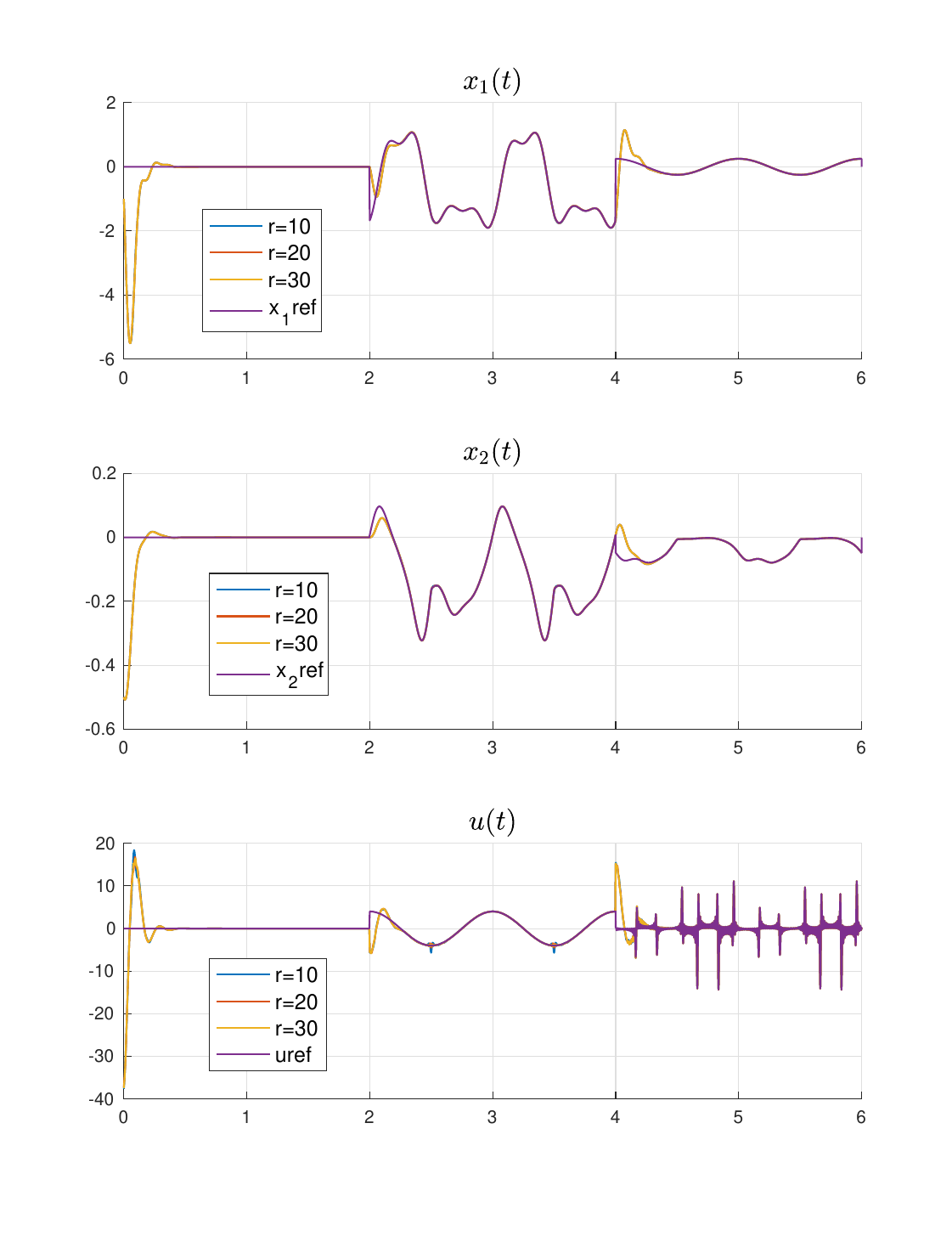}
		\caption{Closed loop response with harmonic LQ control}\label{f4}
	\end{center}
\end{figure}

%\addtolength{\textheight}{0cm} % This command serves to balance the column lengths
   % on the last page of the document manually. It shortens
   % the textheight of the last page by a suitable amount.
   % This command does not take effect until the next page
   % so it should come on the page before the last. Make
   % sure that you do not shorten the textheight too much.
%%%%%%%%%%%%%%%%%%%%%%%%%%%%%%%%%%%%%%%%%%%%%%%%%%%%%%%%%%%%%%%%%%%%%%%%%%%%%%%%

\subsection{$H_\infty$ harmonic control}\label{Hinf_example}

Now, let us consider solving the $H_\infty$ optimal full-state feedback problem for the same $T-$periodic system.
In the context of $H_\infty$ constraints, our objective is to minimize the maximum singular value of the harmonic transfer function between the input matrix $\mathcal{B}$ and the output vector $Z:=(\mathcal{Q}^{\frac{1}{2}}X,\mathcal{R}^{\frac{1}{2}}U)$. To achieve this, we set $\mathcal{B}_w:=\mathcal{B}$, $\mathcal{C}_z:=\left(\begin{array}{ccc}\mathcal{I} & 0  \\0 & 10^2\mathcal{I}\\
0&0
\end{array}\right)$, $\mathcal{D}_{zw}:=0$ and $\mathcal{D}_{zu}:=(0,0,\mathcal{I})$.
%Moreover, for a full state feedback, we have : $\mathcal{C}_y:=\mathcal{I}$, $\mathcal{D}_{yw}=\mathcal{D}_{yu}:=0$,

%The optimal solution to this $H_\infty$ harmonic optimal control problem is determined by
%$U=\mathcal{K}X$ with $\mathcal{K}:=\mathcal{S}\mathcal{P}^{-1}$ where $\mathcal{S}$ $\mathcal{P}$ are TB and bounded on %$\ell^2$ operator that solve the convex optimization problem:
%\begin{align}
%&\min_{\mathcal{P},\mathcal{S},\gamma} \gamma \text{ subject to:}\nonumber\\
%& \mathcal{P}^*=\mathcal{P}>0\label{Hinf}\\
%&\left[\begin{array}{ccc}(\mathcal{A}-\mathcal{N})\mathcal{P}+\mathcal{P}(\mathcal{A}-\mathcal{N})^*+\mathcal{BS}+ %\mathcal{S}^*\mathcal{B}^* & \star & \star\\
%\mathcal{B}_w^* & -\gamma\mathcal{I}  & \star\\
%\mathcal{C}_{z}\mathcal{P}+\mathcal{D}_{zu}\mathcal{S}& \mathcal{D}_{zw} &-\gamma %\mathcal{I}\end{array}\right]\nonumber\end{align}

Similar to the previous case, we solve Problem ${\bf COP_{p,q,r}}$ associated with \eqref{Hinf}, but this time we choose $r=10,20,30$, with $p=\frac{r}{2}$ and $q=r$. It is worth noting that in this configuration, we have intentionally set $p$ to be much smaller than in the previous example. This adjustment ensures that we obtain a practical solution even for smaller values of $r$. Figures \ref{f6} and \ref{f7} depict the magnitudes of gain-phasors and the values of $K(t)$ over a period $T$, respectively.

\begin{figure}[h]
	\begin{center}
		\includegraphics[width=\linewidth]{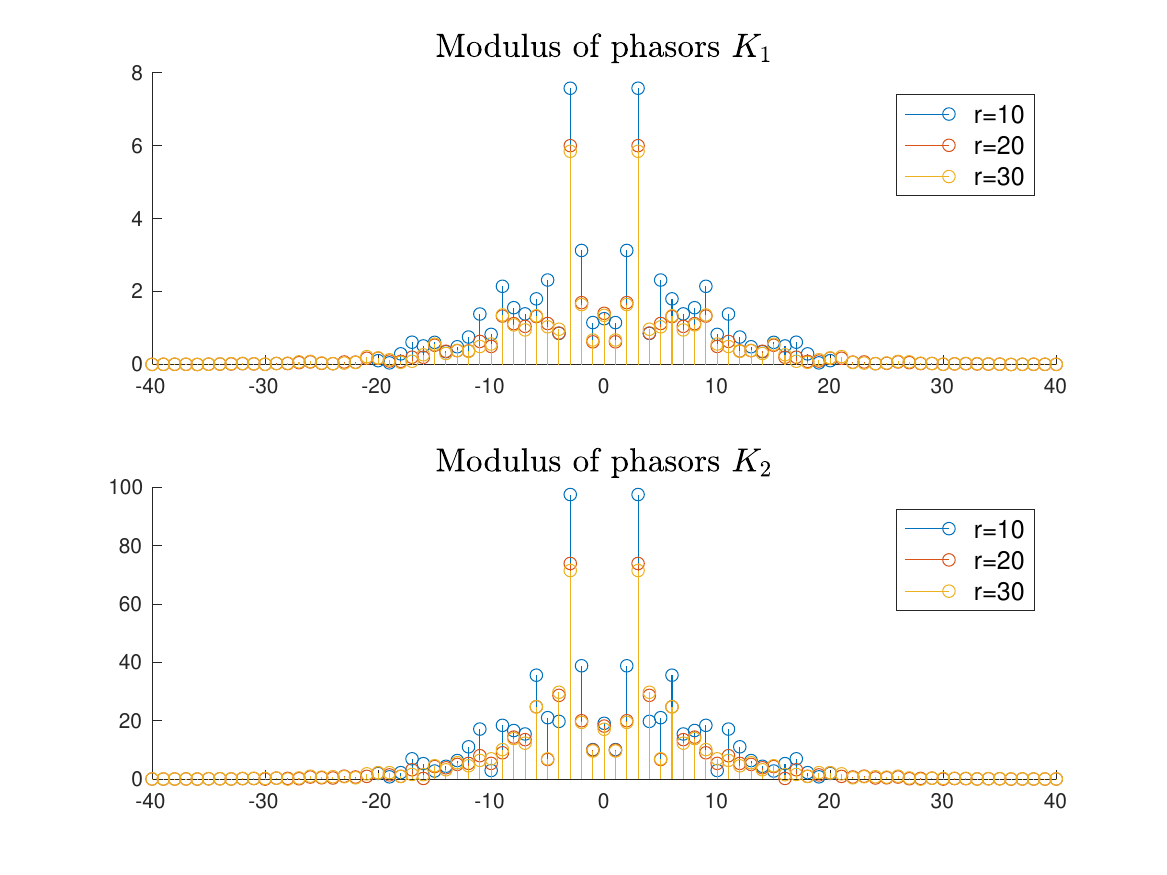}
		\caption{Moduli of gain-phasors $K=[K_1,K_2]$(Harmonic $H_\infty$)}\label{f6}
	\end{center}
\end{figure}
\begin{figure}[h]
	\begin{center}
		\includegraphics[width=\linewidth]{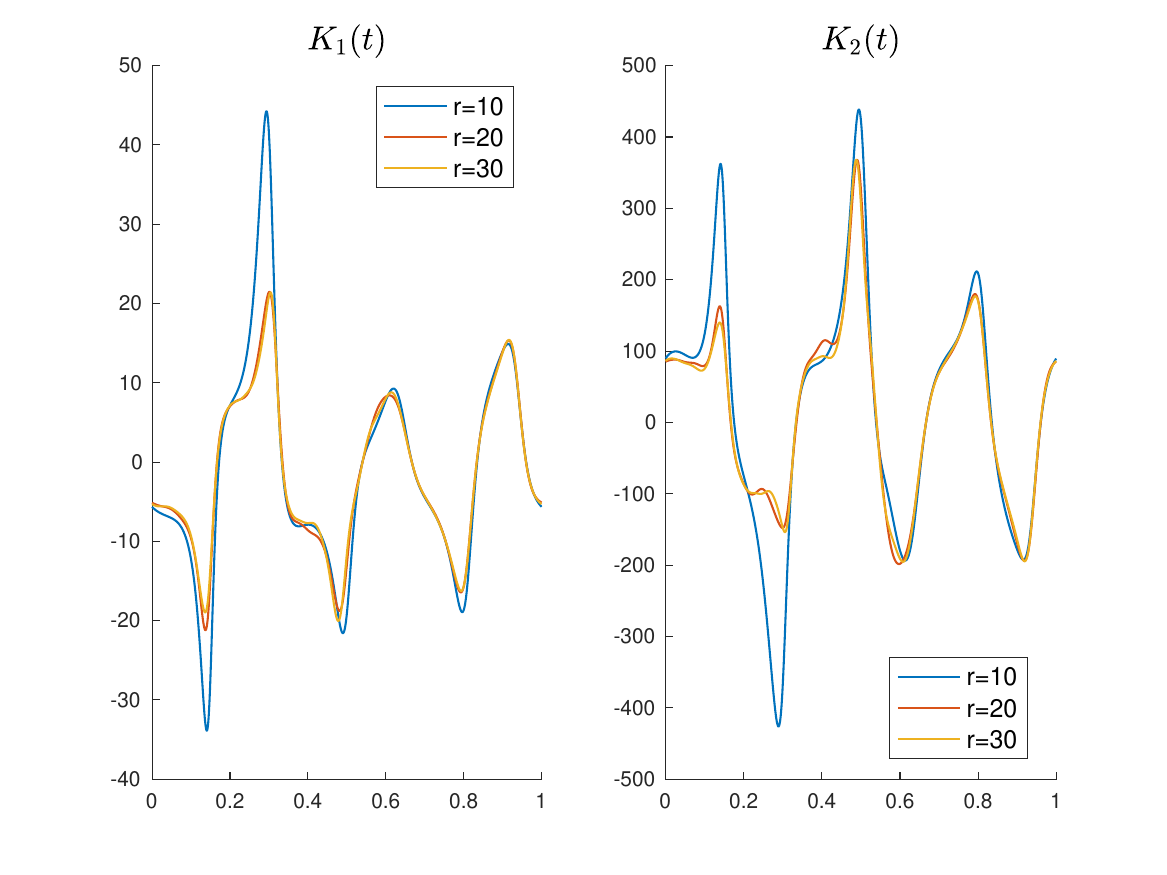}
		\caption{$T$-periodic gain $K(t)$ over a period $T=1$(Harmonic $H_\infty$)}\label{f7}
	\end{center}
\end{figure}

Finally, for the same reference trajectories, Figure \ref{f8} presents the closed-loop response. It is evident that the solution obtained for $r=10$ does not accurately track the reference trajectories. This discrepancy is primarily attributed to the fact that not enough phasors have been taken into account in the approximation of $\mathcal{A}$ by $\mathcal{A}_{b(p)}$ with $p=5$. Even though $H_{\infty}$ optimal control aims to minimize the worst-case scenario, represented here as:
$$\sup_{\|W\|_{\ell^2}=1}\|Z\|_{L^2}=\sup_{\|W\|_{\ell^2}=1}(\int_0^{+\infty}X^*\mathcal{Q}X+U^*\mathcal{R}Udt)^{ \frac{1}{2}},$$ 
the obtained gain is approximately the same as the one obtained for the LQR control problem. This similarity suggests that the LQR control meets our $H_\infty$ criterion effectively. This observation can be rationalized as follows: as detailed in \cite{bensoussan1993standard}, the optimal solution to the $H_\infty$ problem for
 $\mathcal{C}_y:=\mathcal{I}$, $\mathcal{D}_{yu}=\mathcal{D}_{yw}=\mathcal{D}_{zw}:=0$ and $\mathcal{C}_z:=\mathcal{H}$, $\mathcal{D}_{zu}:=\mathcal{I}$ satisfies the Riccati equation:
$$(\mathcal{A}-\mathcal{N})^*\mathcal{P}+\mathcal{P}(\mathcal{A}-\mathcal{N})-\mathcal{P}(\mathcal{B}\mathcal{B}^*-\frac{1}{\gamma^2}\mathcal{B}_w\mathcal{B}_w^*)\mathcal{P}+\mathcal{H}^*\mathcal{H}=0$$
In our case, as $\mathcal{H}:=\mathcal{Q}^\frac{1}{2}$ and $\mathcal{B}_w:=\mathcal{B}$ we get:
$$(\mathcal{A}-\mathcal{N})^*\mathcal{P}+\mathcal{P}(\mathcal{A}-\mathcal{N})-(1-\frac{1}{\gamma^2})\mathcal{P}\mathcal{B}\mathcal{B}^*\mathcal{P}+\mathcal{Q}=0.$$

Given that the obtained optimal value for $\gamma$ is approximately $\gamma \approx 574$, we can observe, based on the continuity of the Riccati solution concerning its entries, that the $H_\infty$ solution closely aligns with the LQR solution, especially since we have chosen $\mathcal{R}:=\mathcal{I}$.

\begin{figure}[h]
	\begin{center}
		\includegraphics[width=\linewidth]{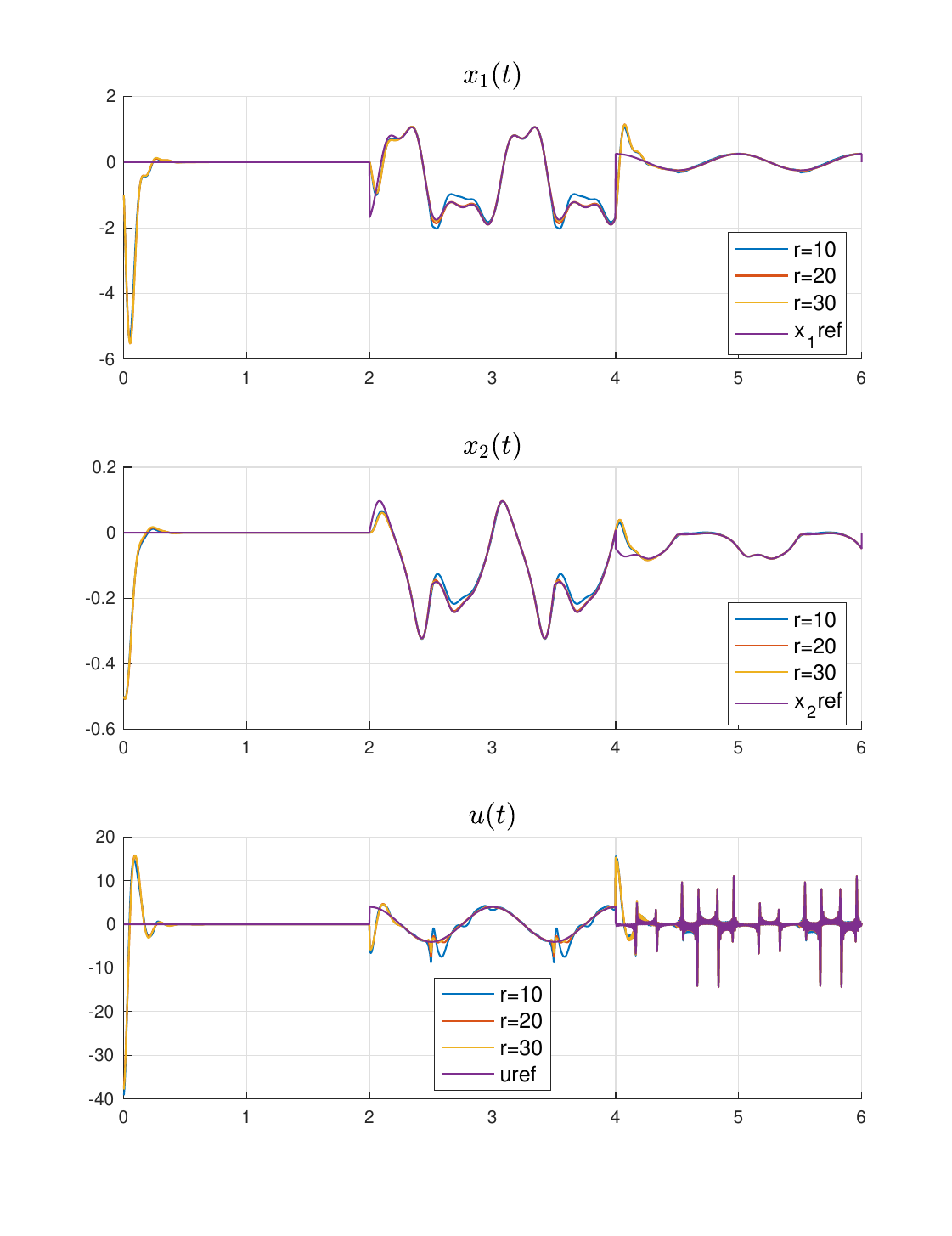}
		\caption{Closed loop response with optimal harmonic $H_\infty$ control for $r=10,20,30$}\label{f8}
	\end{center}
\end{figure}

\section{Conclusion}

In this paper, we have introduced a TBLMI-based framework for harmonic robust control design. This framework enables the formulation of robust control problems as semidefinite optimization problems, akin to those encountered in classical LTI systems. Notably, our approach deals with systems of infinite dimension. Leveraging the inherent Toeplitz structure, we have established a systematic and consistent methodology for solving these semidefinite optimization problems with precision, achieving solutions that are arbitrarily close to the ideal outcome. Our approach hinges on the definition of well-defined finite-dimensional truncated problems as the foundation for this endeavor. We provide illustrations addressing both $H_2$ and $H_\infty$ harmonic robust control problems. 

\bibliographystyle{ieeetr}
\bibliography{references}

\begin{thebibliography}{10}

\bibitem{Hara88}
S.~Hara, Y.~Yamamoto, T.~Omata, and M.~Nakano, ``Repetitive control system: a
  new type servo system for periodic exogenous signals,'' {\em IEEE
  Transactions on Automatic Control}, vol.~33, no.~7, pp.~659--668, 1988.

\bibitem{Lee98}
R.~C.~H. Lee and M.~C. Smith, ``Robustness and trade-offs in repetitive
  control,'' {\em Automatica}, vol.~34, pp.~889--896, 1998.

\bibitem{WEISS99}
G.~Weiss and M.~H{\"a}fele, ``Repetitive control of mimo systems using
  h$_\infty$ design,'' {\em Automatica}, vol.~35, no.~7, pp.~1185--1199, 1999.

\bibitem{Astolfi15}
G.~Weiss and M.~H{\"a}fele, ``Approximate regulation for nonlinear systems in
  presence of periodic disturbances,'' {\em In 54th IEEE conference on decision
  and control}, pp.~7665--7670, 2015.

\bibitem{Astolfiscl22}
D.~Astolfi, L.~Praly, and L.~Marconi, ``Harmonic internal models for
  structurally robust periodic output regulation,'' {\em Systems \& Control
  Letters}, vol.~161, p.~105154, 2022.

\bibitem{sanders1991generalized}
S.~R. Sanders, J.~M. Noworolski, X.~Z. Liu, and G.~C. Verghese, ``Generalized
  averaging method for power conversion circuits,'' {\em IEEE Transactions on
  Power Electronics}, vol.~6, no.~2, pp.~251--259, 1991.

\bibitem{wereley1990analysis}
N.~M. Wereley, {\em Analysis and control of linear periodically time varying
  systems}.
\newblock PhD thesis, Massachusetts Institute of Technology, 1990.

\bibitem{zhou_stability_2001}
J.~Zhou and T.~Hagiwara, ``Stability analysis of continuous-time periodic
  systems via the harmonic analysis,'' in {\em Proceedings of the 2001
  {American} {Control} {Conference}}, pp.~535--540 vol.1, 2001.

\bibitem{zhou2004spectral}
J.~Zhou, T.~Hagiwara, and M.~Araki, ``Spectral characteristics and eigenvalues
  computation of the harmonic state operators in continuous-time periodic
  systems,'' {\em Systems \& Control Letters}, vol.~53, no.~2, pp.~141--155,
  2004.

\bibitem{riedinger2022harmonic}
P.~Riedinger and J.~Daafouz, ``Harmonic pole placement,'' in {\em 2022 IEEE
  61st Conference on Decision and Control (CDC)}, pp.~5505--5510, 2022.

\bibitem{riedinger2022solving}
P.~Riedinger and J.~Daafouz, ``Solving {Infinite}-{Dimensional} {Harmonic}
  {Lyapunov} and {Riccati} {equations},'' {\em IEEE Transactions on Automatic
  Control}, vol.~68, no.~10, pp.~5938--5953, 2023.

\bibitem{zhou2002h2}
J.~Zhou and T.~Hagiwara, ``{$H_2$} and {$H_{\infty}$} norm computations of
  linear continuous-time periodic systems via the skew analysis of frequency
  response operators,'' {\em Automatica}, vol.~38, no.~8, pp.~1381--1387, 2002.

\bibitem{colaneri2000continuous}
P.~Colaneri, ``Continuous-time periodic systems in {$H_2$} and {$H_ {\infty}$}.
  {Part} {I}: Theoretical aspects,'' {\em Kybernetika}, vol.~36, no.~2,
  pp.~211--242, 2000.

\bibitem{colaneri2000continuous2}
P.~Colaneri, ``Continuous-time periodic systems in {$H_2$} and {$H_ {\infty}$}.
  {Part} {II}: State feedback problems,'' {\em Kybernetika}, vol.~36, no.~3,
  pp.~329--350, 2000.

\bibitem{Gero23}
J.~C. Geromel, ``Differential linear matrix inequalities in sampled-data
  systems filtering and control,'' {\em Springer}, 2023.

\bibitem{bamieh1991lifting}
B.~Bamieh, J.~B. Pearson, B.~A. Francis, and A.~Tannenbaum, ``A lifting
  technique for linear periodic systems with applications to sampled-data
  control,'' {\em Systems \& Control Letters}, vol.~17, no.~2, pp.~79--88,
  1991.

\bibitem{bamieh1992general}
B.~A. Bamieh and J.~B. Pearson, ``A general framework for linear periodic
  systems with applications to {$H_{\infty}$}/sampled-data control,'' {\em IEEE
  Transactions on Automatic Control}, vol.~37, no.~4, pp.~418--435, 1992.

\bibitem{yamamoto1996frequency}
Y.~Yamamoto and P.~P. Khargonekar, ``Frequency response of sampled-data
  systems,'' {\em IEEE Transactions on Automatic Control}, vol.~41, no.~2,
  pp.~166--176, 1996.

\bibitem{khargonekar1991h2}
P.~P. Khargonekar and N.~Sivashankar, ``H2 optimal control for sampled-data
  systems,'' {\em Systems \& Control Letters}, vol.~17, no.~6, pp.~425--436,
  1991.

\bibitem{zhou2008derivation}
J.~Zhou, ``Derivation and solution of harmonic {Riccati} equations via
  contraction mapping theorem,'' {\em Transactions of the Society of Instrument
  and Control Engineers}, vol.~44, no.~2, pp.~156--163, 2008.

\bibitem{floracdc23}
F.~Vernerey, P.~Riedinger, and J.~Daafouz, ``On solving infinite-dimensional
  toeplitz block lmis,'' {\em In 62nd IEEE conference on decision and control},
  2023.

\bibitem{blin_necessary_2022}
N.~Blin, P.~Riedinger, J.~Daafouz, L.~Grimaud, and P.~Feyel, ``Necessary and
  {Sufficient} {Conditions} for {Harmonic} {Control} in {Continuous} {Time},''
  {\em IEEE Transactions on Automatic Control}, vol.~67, no.~8, pp.~4013--4028,
  2022.

\bibitem{gohberg1993classes}
I.~Gohberg, S.~Goldberg, and M.~A. Kaashoek, ``Classes of linear operators.
  vol. {II},'' {\em Operator Theory: Advances and Applications}, vol.~63, 1993.

\bibitem{beam1993asymptotic}
R.~M. Beam and R.~F. Warming, ``The asymptotic spectra of banded toeplitz and
  quasi-toeplitz matrices,'' {\em SIAM Journal on Scientific Computing},
  vol.~14, no.~4, pp.~971--1006, 1993.

\bibitem{willems1971least}
J.~Willems, ``Least squares stationary optimal control and the algebraic
  {Riccati} equation,'' {\em IEEE Transactions on Automatic Control}, vol.~16,
  no.~6, pp.~621--634, 1971.

\bibitem{boyd1994linear}
S.~Boyd, L.~El~Ghaoui, E.~Feron, and V.~Balakrishnan, {\em Linear {matrix}
  {inequalities} in system and control theory}.
\newblock SIAM, 1994.

\bibitem{bensoussan1993standard}
A.~Bensoussan and P.~Bernhard, ``On the standard problem of
  {$H_{\infty}$}-optimal control for infinite-dimensional systems,'' {\em
  Identification and Control in Systems Governed by Partial Differential
  Equations}, vol.~68, p.~117, 1993.

\end{thebibliography}

\begin{IEEEbiography}[{\includegraphics[width=1in,height=1.25in,clip,keepaspectratio]{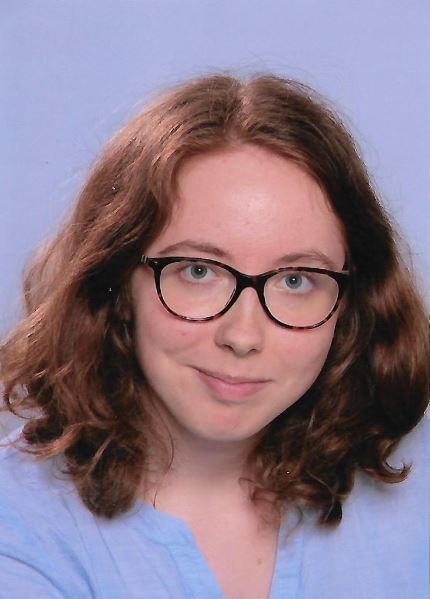}}]
	{Flora Vernerey} is a PhD student at CRAN - CNRS, Universit\'e de Lorraine (France).
 She received the engineering diploma (masters degree) in Digital Systems from ENSEM, Nancy, France, and the M.Sc. degree in Applied Mathematics from Université de Lorraine, Nancy, France, both in 2022. Her current research interests include identification of linear periodic systems, harmonic control and data-driven control.
\end{IEEEbiography}
\begin{IEEEbiography}[{\includegraphics[width=1in,height=1.25in,clip,keepaspectratio]{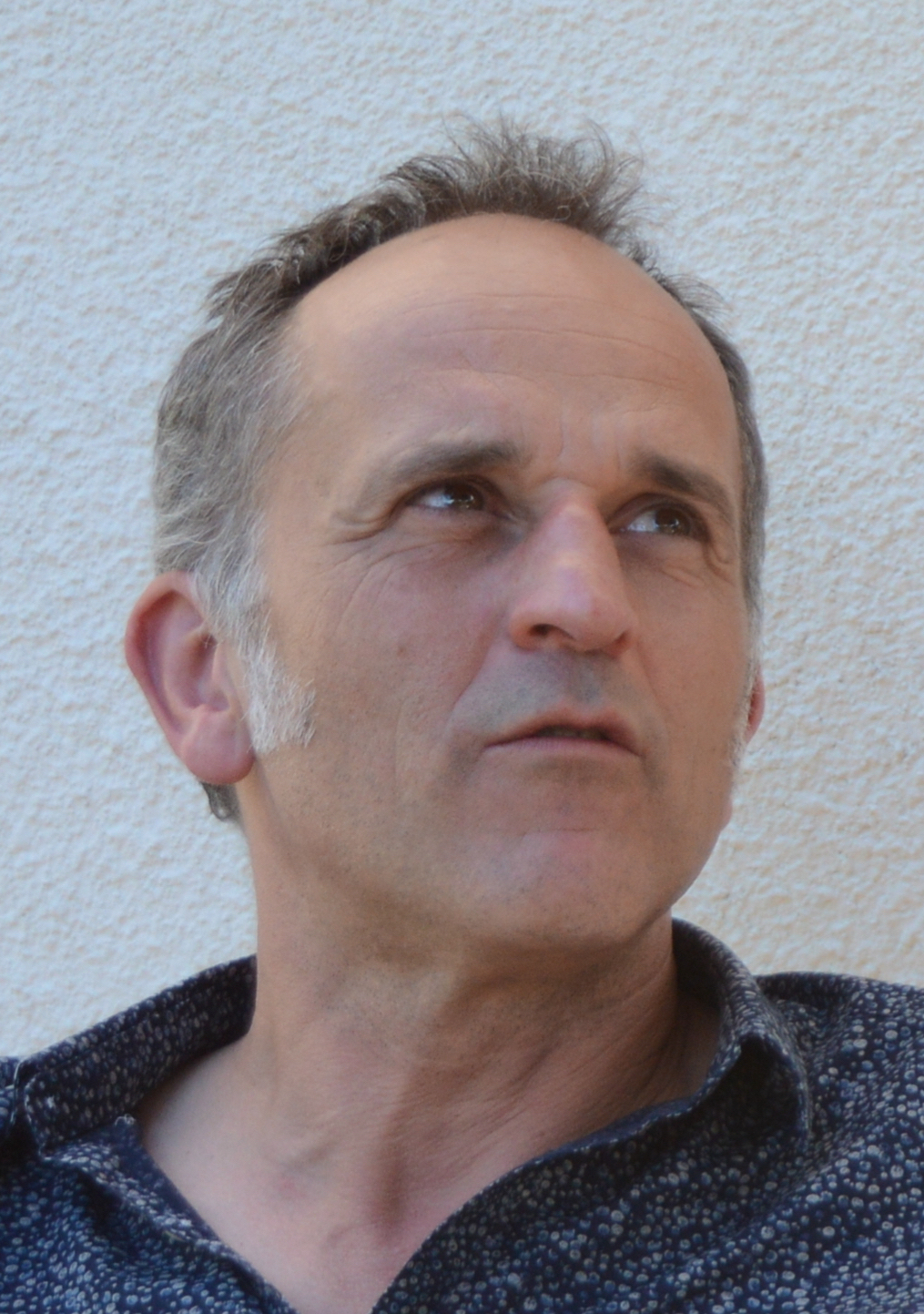}}]
	{Pierre Riedinger} is a Full Professor at the engineering school Ensem and researcher at CRAN - CNRS, Universit\'e de Lorraine (France).
	He received his M.Sc. degree in Applied Mathematics from the University Joseph Fourier, Grenoble in 1993 and the Ph.D. degree in Automatic Control 
	in 1999 from the Institut National Polytechnique de Lorraine (INPL). He got the French Habilitation degree from the INPL in 2010. His current research interests include control theory and optimization of 
	systems with their applications in electrical and power systems.
\end{IEEEbiography}
\begin{IEEEbiography}[{\includegraphics[width=1in,height=1.25in,clip,keepaspectratio]{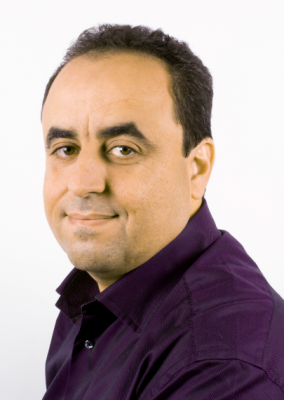}}]
	{Jamal Daafouz} 
	is a Full Professor at University
	de Lorraine (France) and researcher at CRAN-CNRS. In 1994, he received a Ph.D.
	in Automatic Control from INSA Toulouse, in 1997.
	He also received the "Habilitation \`a Diriger des
	Recherches" from INPL (University de Lorraine),
	Nancy, in 2005.
	His research interests include analysis, observation
	and control of uncertain systems, switched
	systems, hybrid systems, delay and networked systems with a particular
	interest for convex based optimisation methods.
	In 2010, Jamal Daafouz was appointed as a junior member of the
	Institut Universitaire de France (IUF). He served as an associate editor
	of the following journals: Automatica, IEEE Transactions on Automatic
	Control, European Journal of Control and Non linear Analysis and Hybrid
	Systems. He is senior editor of the journal IEEE Control Systems Letters. 
\end{IEEEbiography}
\end{document}